\let\printqed\qed
\renewcommand\qed{~\hspace*{\fill}$\printqed$}
\spnewtheorem{observation}{Observation}{\bfseries}{\rmfamily}
\newcommand{\nil}{\mathit{nil}}
\newcommand{\LSuf}{\mathit{lrs}}
\newcommand{\SSuf}{\mathit{sqs}}
\newcommand{\SPref}{\mathit{sqp}}
\newcommand{\STree}{\mathit{STree}}
\newcommand{\numocc}{\#\mathit{occ}}
\newcommand{\occ}{\mathit{occ}}
\newcommand{\MUS}{\mathsf{MUS}}
\newcommand{\StartToEnd}{\mathsf{S2E}}
\newcommand{\EndToStart}{\mathsf{E2S}}
\newcommand{\apoint}{\mathsf{pp}}
\newcommand{\spoint}{\mathsf{sp}}
\newcommand{\tpoint}{\mathsf{tp}}
\newcommand{\locus}{\mathit{locus}}
\newcommand{\rootnode}{\mathit{root}}
\newcommand{\depth}{\mathit{depth}}
\newcommand{\parent}{\mathit{par}}
\newcommand{\str}{\mathit{str}}
\newcommand{\start}{\mathit{start}}
\newcommand{\subtree}{\mathit{subtree}}
\newcommand{\hed}{\mathit{hed}}
\newcommand{\MAW}{\mathsf{MAW}}
\newcommand{\typeone}{\mathcal{M}_1}
\newcommand{\typetwo}{\mathcal{M}_2}
\newcommand{\typethree}{\mathcal{M}_3}
\begin{document}
\title{Minimal Unique Substrings and \\Minimal Absent Words in a Sliding Window}
\author{
  Takuya~Mieno\inst{1} \and
  Yuki~Kuhara\inst{1} \and
  Tooru~Akagi\inst{1} \and
  Yuta~Fujishige\inst{1,2} \and
  Yuto~Nakashima\inst{1} \and
  Shunsuke~Inenaga\inst{1} \and
  Hideo~Bannai\inst{1} \and
  Masayuki~Takeda\inst{1}
}
\authorrunning{T. Mieno et al.}
\institute{
  Department of Informatics, Kyushu University, Japan \and
  Japan Society for Promotion of Science, Japan\\
  \email{\texttt{\{takuya.mieno,yuki.kuhara,toru.akagi,yuta.fujishige,\\
                 yuto.nakashima,inenaga,bannai,takeda\}@inf.kyushu-u.ac.jp}
        }
}
\maketitle
\begin{abstract}
  A substring $u$ of a string $T$ is called a minimal unique substring~(MUS) of $T$
  if $u$ occurs exactly once in $T$ and any proper substring of $u$ occurs at least twice in $T$.
  A string $w$ is called a minimal absent word~(MAW) of $T$
  if $w$ does not occur in $T$ and any proper substring of $w$ occurs in $T$.
  In this paper,
  we study the problems of computing MUSs and MAWs in a sliding window over a given string $T$.
  We first show how the set of MUSs can change in a sliding window over $T$,
  and present an $O(n\log\sigma)$-time and $O(d)$-space algorithm to compute MUSs in a sliding window of width $d$ over $T$,
  where $\sigma$ is the maximum number of distinct characters in every window.
  We then give tight upper and lower bounds on the maximum number of changes in the set of MAWs
  in a sliding window over $T$.
  Our bounds improve on the previous results in [Crochemore et al., 2017].
\end{abstract}
 \section{Introduction}

Processing massive string data is a classical and important task
in theoretical computer science,
with a variety of applications such as data compression,
bioinformatics, and text data mining.
It is natural and common to assume
that such a massive string is given in an \emph{online} fashion,
one character at a time from left to right,
and that the memory usage is limited to some pre-determined space.
This is a so-called \emph{sliding window model},
where the task is to process
all substrings $T[i..i+d-1]$ of pre-fixed length $d$ in a string $T$ of length $n$
in an incremental fashion, for increasing $i = 1, \ldots, n-d+1$.
Usually the window size $d$ is set to be much smaller than the string length $n$,
and thus the challenge here is to design efficient algorithms that processes all such substrings using only $O(d)$ working space.
A typical application to the sliding window model is data compression;
examples are the famous Lempel-Ziv 77 (the original version)~\cite{LZ77}
and PPM~\cite{ClearyW84}.

In this paper, we study the following classes of strings in the sliding window model:
\emph{Minimal Unique Substrings} (\emph{MUSs})
and \emph{Minimal Absent Words} (\emph{MAWs}).
MUSs have been heavily utilized
for solving the \emph{Shortest Unique Substring} (\emph{SUS}) problem~\cite{Pei2013SUS,Tsuruta2014SUS,Hu2014SUS,Mieno2016SUSonRLE},
and MAWs have applications to data compression based on
\emph{anti-dictionaries}~\cite{Crochemore2000DCA,OtaFM14}.
However, despite the fact that
there is a common application field to MUSs and MAWs
such as bioinformatics~\cite{Graf2007MUS,Pei2013SUS,Chairungsee2012PhylogenyByMAW,Silva2015Ebola},
to our knowledge, these two objects were considered to be quite different and
were studied separately.
This paper is the first that brings a light to their similarities
by observing that a string $w$ is a MUS (resp. MAW) of a string $S$
if the number of occurrences of $w$ in $S$ is one (resp. zero),
and the number of occurrences of any proper substring of $w$
is at least two (resp. at least one).

We begin with combinatorial results on MUSs in a sliding window.
Namely, we show that the number of MUSs that are added or deleted
by one slide of the window is always $O(1)$ (Section~\ref{sec:comb_MUSs}).
We then present the first efficient algorithm that maintains
the set of MUSs for a sliding window of length $d$ over a string of length $n$
in a total of $O(n \log \sigma)$ time and $O(d)$ working space
(Section~\ref{sec:algo_MUSs}).
Our main algorithmic tool is the suffix tree for a sliding window that requires $O(d)$ space and can be maintained in $O(n \log \sigma)$ time~\cite{Larsson96,Senft2005suffix}.
Our algorithm for computing MUSs in a sliding window
is built on our combinatorial results,
and it keeps track of three different loci over the suffix tree,
all of which can be maintained in $O(\log \sigma)$ amortized time
per each sliding step.

MAWs in a sliding window have already been studied by
Crochemore et al.~\cite{Crochemore2017MAW}.
They studied the number of MAWs to be added / deleted
when the current window is shifted, and we improve some of these results
(Section~\ref{sec:comb_MAWs}):
For any string $T$ over an alphabet of size $\sigma$,
let $\MAW(T[i..j])$ be the set of all MAWs in the substring $T[i..j]$.
Crochemore et al.~\cite{Crochemore2017MAW} showed that
$|\MAW(T[i..i+d]) \setminus \MAW(T[i..i+d-1])| \leq (s_{i}-s_{\alpha})(\sigma -1) + \sigma + 1$
and
$|\MAW(T[i-1..i+d-1]) \setminus \MAW(T[i..i+d-1])| \leq (p_{i}-p_{\beta})(\sigma - 1) + \sigma + 1$,
where $s_{i}$, $s_{\alpha}$, $p_{i}$, and $p_{\beta}$ are
the lengths of the longest repeating suffix of $T[i..i+d-1]$,
of the longest suffix of $T[i..i+d-1]$ having an internal occurrence
immediately followed by $\alpha = T[i+d]$,
of the longest repeating prefix of $T[i..i+d-1]$,
and of the longest prefix of $T[i..i+d-1]$ having an internal occurrence
immediately preceded by $\beta = T[i-1]$.
Since both $s_{i}-s_{\alpha}$ and $p_{i}-p_{\beta}$ are in $\Theta(d)$
in the worst case, it leads to an $O(\sigma d)$ upper bound.
We improve this by showing that
both $|\MAW(T[i..i+d]) \setminus \MAW(T[i..i+d-1])|$
and $|\MAW(T[i-1..i+d-1]) \setminus \MAW(T[i..i+d-1])|$ are at most
$d + \sigma' + 1$, where $\sigma'$ is the number of
distinct characters in $T[i..i+d-1]$.
Since $\sigma' \leq d$, this leads to an improved $O(d)$ upper bound.
We also show that this is tight.
Crochemore et al.~\cite{Crochemore2017MAW} also showed that
$\sum_{i=1}^{n-d} |\MAW(T[i..i+d-1]) \bigtriangleup \MAW(T[i+1..i+d])| \in O(\sigma n)$.
We give an improved upper bound $O(\min\{\sigma, d\} n)$
and show that this is tight.

All the proofs omitted due to lack of space can be found in Appendix~\ref{sec:missing_proofs}.
 \section{Preliminaries}

\noindent \textbf{Strings.}
Let $\Sigma$ be an alphabet.
An element of $\Sigma$ is called a character.
An element of $\Sigma^\ast$ is called a string.
The length of a string $T$ is denoted by $|T|$.
The empty string $\varepsilon$ is the string of length 0.
If $T = xyz$, then $x$, $y$, and $z$ are called
a \emph{prefix}, \emph{substring}, and \emph{suffix} of $T$, respectively.
They are called a \emph{proper prefix}, \emph{proper substring},
and \emph{proper suffix} of $T$ if $x \neq T$, $y \neq T$, and $z \neq T$,
respectively.
If a string $b$ is a prefix of $T$ and is a suffix of $T$,
$b$ is called a \emph{border} of $T$.
For any $1 \le i \le |T|$, the $i$-th character of $T$ is denoted by $T[i]$.
For any $1 \le i \le j \le |T|$, $T[i..j]$ denote
the substring of $T$ starting at $i$ and ending at $j$.
For convenience, $T[i'..j'] = \varepsilon$ for $i' > j'$.
For any $1 \le i \le |T|$,
let $T[..i] = T[1..i]$ and $T[i..] = T[i..|T|]$.
For a string $w$, the set of beginning positions of occurrences of $w$ in $T$
is denoted by $\occ_T(w) = \{i \mid T[i..i+|w|-1] = w\}$.
Let $\numocc_T(w) = |\occ_T(w)|$.
For convenience, let $\numocc_T(\varepsilon) = |T|+1$.
In what follows, we consider an arbitrarily fixed string $T$ of length $n \ge 1$
over an alphabet $\Sigma$ of size $\sigma \ge 2$.

\noindent \textbf{Minimal Unique Substrings and Minimal Absent Words.}
Any string $w$ is said to be \emph{absent} from $T$ if $\numocc_T(w) = 0$,
and \emph{present} in $T$ if $\numocc_T(w) \geq 1$.
For any substring $w$ of $T$,
$w$ is called
\emph{unique} in $T$ if $\numocc_T(w) = 1$,
\emph{quasi-unique} in $T$ if $1 \leq \numocc_T(w) \leq 2$,
and \emph{repeating} in $T$ if $\numocc_T(w) \ge 2$.
A unique substring $w$ of $T$ is called \emph{a minimal unique substring}
of $T$ if any proper substring of $w$ is repeating in $T$.
Since a unique substring $w$ of $T$ has exactly one occurrence in $T$,
it can be identified with a unique interval $[s, t]$ such that
$1 \leq s \leq t \leq n$ and $w = T[s.. t]$.
We denote by
$\MUS(T) = \{[s, t] \mid T[s..t]~\text{is a MUS of}~T\}$
the set of intervals corresponding to the MUSs of $T$.
From the definition of MUSs, it is clear that
$[s, t] \in \MUS(T)$ if
(a) $T[s..t]$ is unique in $T$,
(b) $T[s+1..t]$ is repeating in $T$, and
(c) $T[s..t-1]$ is repeating in $T$.
See Fig.~\ref{fig:LS_SS_SP} in Appendix~\ref{sec:additional_fig} for an example of MUSs.

An absent string $w$ from $T$ is called a \emph{minimal absent word} of $T$ if
any proper substring of $w$ is present in $T$.
We denote by $\MAW(T)$ the set of all MAWs of $T$.
From the definition of MAWs,
it is clear that $w \in \MAW(T)$ if
(A) $w$ is absent from $T$,
(B) $w[2..]$ is present in $T$, and
(C) $w[..|w|-1]$ is present in $T$.

This paper deals with the problems of computing MUSs / MAWs in a sliding window
of fixed length $d$ over a given string $T$, formalized as follows:
\begin{description}
  \item[Input:] String $T$ of length $n$ and positive integer $d$~($< n$).
  \item[Output for the MUS problem:] $\MUS(T[i..i+d-1])$ for all $1 \leq i \leq n - d + 1$.
  \item[Output for the MAW problem:] $\MAW(T[i..i+d-1])$ for all $1 \leq i \leq n - d + 1$.
\end{description}

\noindent \textbf{Suffix trees.}
The \emph{suffix tree} of a string $T$, denoted $\STree_T$,
is a \emph{compacted trie} that represents all suffixes of $T$.
We consider a version of suffix trees a.k.a. \emph{Ukkonen trees}~\cite{Ukkonen1995}:
Namely, $\STree_T$ is a rooted tree such that
(1) each edge is labeled by a non-empty substring of $T$,
(2) each internal node has at least two children,
(3) the out-going edges of each node begin with mutually distinct characters,
(4) the suffixes of $T$ that are unique in $T$ are represented by
paths from the root to the leaves,
and the other suffixes of $T$ that are repeating in $T$ are represented by
paths from the root that end either on internal nodes or on edges.
To simplify the description of our algorithm,
we assume that there is an auxiliary node $\perp$
which is the parent of only the root node.
The out-going edge of $\perp$ is labeled with $\Sigma$;
This means that we can go down from $\perp$ by reading any character in $\Sigma$.
See Fig.~\ref{fig:STree} in Appendix~\ref{sec:additional_fig} for an example of $\STree_T$.

For each node $v$ in $\STree_T$,
$\parent(v)$ denotes the parent of $v$,
$\str(v)$ denotes the path string from the root to $v$,
$\depth(v)$ denotes the \emph{string depth} of $v$ (i.e. $\depth(v) = |\str(v)|$),
and $\subtree(v)$ denotes the subtree of $\STree_T$ rooted at $v$.
For each leaf $\ell$ in $\STree_T$,
$\start(\ell)$ denotes the starting position of $\str(\ell)$ in $T$.
For each non-empty substring $w$ of $T$,
$\hed(w) = v$ denotes the \emph{highest explicit descendant}
where $w$ is a prefix of $\str(v)$ and
$\depth(\parent(v)) < |w| \le \depth(v)$.
For each substring $w$ of $T$,
$\locus(w) = (u, h)$ represents the locus in $\STree_T$
where the path that spells out $w$ from the root terminates,
such that $u = \hed(w)$ and $h = \depth(u) - |w| \geq 0$.
We say that a substring $w$ of $T$ with $\locus(w) = (u, h)$
is represented by an \emph{explicit node} if $h = 0$,
and by an \emph{implicit node} if $h \geq 1$.
We remark that in the Ukkonen tree $\STree(T)$ of a string $T$,
some repeating suffixes may be represented by implicit nodes.
An implicit node which represents a suffix of $T$
is called an \emph{implicit suffix node}.
For any non-empty substring $w$ that is represented by an explicit node $v$,
the \emph{suffix link} of $v$ is a reversed edge
from $v$ to the explicit node that represents $w[2..]$.
The suffix link of the root that represents $\varepsilon$ points to $\perp$.
 \section{Combinatorial Results on MUSs in a Sliding Window} \label{sec:comb_MUSs}
Throughout this section, we consider
positions $i,j$~($1 \leq i \leq j \leq n$) such that $T[i..j]$ denotes
the sliding window for the $i$-th position over the input string $T$.
The following arguments hold for \emph{any} values of $i$ and $j$,
and hence, they will be useful for sliding windows of any length $d$.

Let $\LSuf_{i,j}$ be the longest repeating suffix of $T[i..j]$,
$\SSuf_{i,j}$ be the shortest quasi-unique suffix of $T[i..j]$,
and
$\SPref_{i,j}$ be the shortest quasi-unique prefix of $T[i..j]$.
Note that $\LSuf_{i,j}$ can be the empty string,
and that both $\SSuf_{i,j}$ and $\SPref_{i,j}$ are always non-empty strings.
See Fig.~\ref{fig:LS_SS_SP} in Appendix~\ref{sec:additional_fig} for examples.

The next lemmas
are useful for analyzing combinatorial properties on MUSs
and for designing an efficient algorithm for computing them in a sliding window.

\begin{lemma}\label{lem:LS_and_SS}
  The following three statements are equivalent:
  (1) $|\LSuf_{i,j}| \geq |\SSuf_{i,j}|$;
  (2) $\numocc_{T[i..j]}(\LSuf_{i,j}) = 2$;
  (3) $\numocc_{T[i..j]}(\SSuf_{i,j}) = 2$.
\end{lemma}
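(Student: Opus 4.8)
The plan is to collapse all three statements into a single threshold condition by exploiting the monotonicity of occurrence counts along suffixes. Write $L = j-i+1$ for the window length, and for $1 \le \ell \le L$ let $c(\ell) = \numocc_{T[i..j]}(T[j-\ell+1..j])$ be the number of occurrences of the length-$\ell$ suffix. First I would establish the key monotonicity fact: $c(\ell) \ge c(\ell+1)$ for every $\ell$. This holds because the length-$(\ell+1)$ suffix is obtained by prepending one character to the length-$\ell$ suffix, so every occurrence of the longer suffix induces an occurrence of the shorter one ending at the same position; since occurrences are determined by their end positions, distinct occurrences of the longer suffix yield distinct occurrences of the shorter one, giving $c(\ell+1) \le c(\ell)$.

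Given monotonicity, the definitions of $\LSuf_{i,j}$ and $\SSuf_{i,j}$ become threshold statements. Put $\mu = |\SSuf_{i,j}|$ and $\lambda = |\LSuf_{i,j}|$. Because $\SSuf_{i,j}$ is the shortest nonempty quasi-unique suffix, monotonicity yields $c(\ell) \le 2 \iff \ell \ge \mu$ (for $\ell < \mu$ minimality forces $c(\ell) \ge 3$; for $\ell \ge \mu$ we have $c(\ell) \le c(\mu) \le 2$). Because $\LSuf_{i,j}$ is the longest repeating suffix, monotonicity yields $c(\ell) \ge 2 \iff \ell \le \lambda$ for $\ell \ge 1$ (with the convention that $\lambda = 0$ exactly when no nonempty suffix repeats). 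The central step is then the interval characterization obtained by intersecting these two: for every $\ell \ge 1$, $c(\ell) = 2$ if and only if $\mu \le \ell \le \lambda$. That is, the suffix lengths with occurrence count exactly $2$ form the (possibly empty) integer interval $[\mu,\lambda]$.

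With this in hand the equivalence is immediate, and I would simply show that each of (2) and (3) is equivalent to ``$\mu \le \lambda$,'' which is exactly statement (1). Statement (3) asserts $c(\mu) = 2$; since $\mu \ge 1$, the interval characterization rewrites this as $\mu \le \mu \le \lambda$, i.e. $\mu \le \lambda$. Statement (2) asserts $c(\lambda) = 2$; when $\lambda \ge 1$ the interval characterization rewrites it as $\mu \le \lambda \le \lambda$, again $\mu \le \lambda$. The content here is precisely that $c(\lambda)$ need \emph{not} equal $2$ in general (the count may jump from $\ge 3$ directly down to $1$, skipping $2$); it equals $2$ exactly when the interval $[\mu,\lambda]$ is nonempty.

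The one place demanding care --- and the main obstacle --- is the boundary case $\LSuf_{i,j} = \varepsilon$ (that is, $\lambda = 0$), which the statement explicitly permits. Here $T[j]$ already occurs once, so $c(\ell) = 1$ for all $\ell \ge 1$ and $\mu = 1$: statement (1) fails as $0 = \lambda < \mu$, statement (3) fails as $c(\mu) = 1 \ne 2$, and statement (2) must be read against the convention $\numocc_{T[i..j]}(\varepsilon) = L+1$, so it reads $L+1 = 2$ and fails as well once $L \ge 2$; thus all three are simultaneously false and the equivalence is preserved. I would flag the one-character window ($L=1$) as a genuinely degenerate case, since there the convention forces $\numocc_{T[i..j]}(\varepsilon) = 2$, so the statements are to be read with the understanding that $\LSuf_{i,j}$ is nonempty.
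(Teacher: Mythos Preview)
Your proof is correct and follows essentially the same idea as the paper's: both rely on the monotonicity of occurrence counts along the suffix chain, with the paper proving the direct implications $(1)\Rightarrow(2),(3)$, $(2)\Rightarrow(1)$, $(3)\Rightarrow(1)$ via the sandwich $\numocc(\LSuf)\ge 2$, $\numocc(\SSuf)\le 2$, while you factor through the equivalent interval characterization $c(\ell)=2 \iff \mu\le\ell\le\lambda$. Your handling of the boundary case $\LSuf_{i,j}=\varepsilon$ (in particular the one-character window where the convention $\numocc(\varepsilon)=|T[i..j]|+1=2$ makes (2) vacuously hold) is more careful than the paper, which does not address it.
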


\begin{lemma}\label{lem:LS}
  $|\LSuf_{i,j+1}| \le |\LSuf_{i, j}| + 1$.
\end{lemma}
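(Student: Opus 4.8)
The plan is to prove the bound by taking $w := \LSuf_{i,j+1}$, deleting its last character, and showing that the resulting string $w' := w[..|w|-1]$ is a repeating suffix of $T[i..j]$; by maximality of $\LSuf_{i,j}$ this yields $|w'| \le |\LSuf_{i,j}|$, whence $|w| = |w'| + 1 \le |\LSuf_{i,j}| + 1$. If $w = \varepsilon$ the inequality is immediate since $|\LSuf_{i,j}| \ge 0$, so I would assume $|w| \ge 1$ from the outset and work with $w'$.

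By definition $w$ is a suffix of $T[i..j+1]$ that occurs at least twice, so its set of starting positions has size at least two. The occurrence ending at $j+1$ starts at $p_1 := j+2-|w|$. Since occurrences are identified by distinct start positions while every occurrence must fit inside $[i, j+1]$, any second occurrence starts at some $p_2 \neq p_1$ with $p_2 + |w| - 1 \le j+1$, which forces $p_2 \le p_1$ and hence $p_2 < p_1$. First I would record that $w'$, being a prefix of $w$, consequently occurs at both $p_1$ and $p_2$.

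Next I would verify that both of these occurrences of $w'$ lie entirely within $T[i..j]$: the occurrence at $p_1$ ends at $p_1 + |w'| - 1 = j$, and the occurrence at $p_2$ ends at $p_2 + |w'| - 1 < j$. Thus $w'$ occurs at two distinct positions inside $T[i..j]$ and, via its occurrence at $p_1$, is a suffix of $T[i..j]$; that is, $w'$ is a repeating suffix of $T[i..j]$, so $|w'| \le |\LSuf_{i,j}|$ and the claim follows.

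The delicate point — the step I would be most careful about — is arguing that the second occurrence of $w$ must end at position at most $j$ (equivalently $p_2 < p_1$), which is precisely what keeps the corresponding occurrence of $w'$ from spilling past position $j$; this is where the fact that $w$ is a \emph{suffix} and that occurrences are counted by distinct start positions are both used. The boundary case $|w| = 1$, where $w' = \varepsilon$, also deserves a line: the empty string is a suffix of $T[i..j]$ with $\numocc_{T[i..j]}(\varepsilon) = |T[i..j]| + 1 \ge 2$, so it is repeating and the argument still goes through.
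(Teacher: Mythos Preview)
Your proof is correct and follows essentially the same approach as the paper: both arguments show that deleting the last character of $\LSuf_{i,j+1}$ yields a repeating suffix of $T[i..j]$, then invoke the maximality of $\LSuf_{i,j}$. The paper phrases this by contradiction and is slightly terser about why the second occurrence of $\LSuf_{i,j+1}$ (and hence of its length-$(|\LSuf_{i,j+1}|-1)$ prefix) fits inside $T[i..j]$, whereas you argue directly and spell out this step explicitly.
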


\subsection{Changes to MUSs when Appending a Character to the Right}
In this subsection, we consider an operation that slides the right-end
of the current window $T[i..j]$ with one character by appending the next character
$T[j+1]$ to $T[i..j]$. We use the following observation.
\begin{observation}\label{obs:suffix_occ}
  For each non-empty substring $s$ of $T[i..j]$,
  $\numocc_{T[i..j+1]}(s) \le \numocc_{T[i..j]}(s) + 1$.
  Also, $\numocc_{T[i..j+1]}(s) = \numocc_{T[i..j]}(s) + 1$
  if and only if $s$ is a suffix of $T[i..j+1]$.
\end{observation}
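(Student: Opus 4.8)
The plan is to work directly with the sets of starting positions of occurrences, exploiting that $T[i..j]$ is a prefix of $T[i..j+1]$. First I would observe that every occurrence of $s$ in $T[i..j]$ remains an occurrence in $T[i..j+1]$, since extending the window to the right cannot destroy an existing occurrence; hence $\occ_{T[i..j]}(s) \subseteq \occ_{T[i..j+1]}(s)$, and it suffices to analyze the \emph{new} starting positions in $\occ_{T[i..j+1]}(s) \setminus \occ_{T[i..j]}(s)$.

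The key step is to pin down these new positions. A position $p \in \occ_{T[i..j+1]}(s)$ corresponds to an occurrence ending at $p + |s| - 1 \le j+1$. If this ending position is at most $j$, then the whole occurrence already lies within $T[i..j]$, so $p \in \occ_{T[i..j]}(s)$ and is not new. Therefore every new position must give an occurrence ending exactly at $j+1$, i.e. $p = j - |s| + 2$. Since $s$ is non-empty, there is at most one such position, which immediately yields $\numocc_{T[i..j+1]}(s) \le \numocc_{T[i..j]}(s) + 1$.

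Finally, for the equality condition I would note that this unique candidate $p = j - |s| + 2$ is an actual occurrence precisely when $T[j-|s|+2..j+1] = s$, that is, when $s$ is a suffix of $T[i..j+1]$; hence the count increases by exactly one in that case and stays unchanged otherwise. I do not expect a substantial obstacle here—the whole argument is essentially bookkeeping—but the one place demanding care is the index arithmetic: I would explicitly check that the candidate starting position satisfies $p \ge i$ (which follows from $|s| \le j - i + 1$, since $s$ is a substring of $T[i..j]$) and confirm that the non-emptiness of $s$ is exactly what forbids a second new occurrence ending at $j+1$.
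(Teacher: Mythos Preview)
Your argument is correct and complete. The paper does not actually give a proof of this observation; it is stated as self-evident, so there is no approach to compare against, and your bookkeeping with the occurrence sets is exactly the natural way to make the claim rigorous.
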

\subsubsection{MUSs to be Deleted when Appending a Character to the Right.}
Due to Observation~\ref{obs:suffix_occ},
we obtain Lemma~\ref{lem:deleted_MUS}
which describes MUSs to be deleted when a new character $T[j+1]$
is appended to the current window $T[i..j]$.
\begin{lemma}\label{lem:deleted_MUS}
  For any $[s, t]$ with $i \leq s < t \leq j$,
  $[s, t] \in \MUS(T[i..j])$ and $[s, t] \not\in \MUS(T[i..j+1])$ if and only if
  $T[s..t] = \SSuf_{i,j+1}$ and $\numocc_{T[i..j+1]}(\SSuf_{i,j+1}) = 2$.
\end{lemma}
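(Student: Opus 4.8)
The plan is to characterize exactly which MUSs disappear when we append $T[j+1]$, and to show that there can be at most one such MUS, namely $\SSuf_{i,j+1}$ when it becomes a repeating substring of count exactly $2$. First I would recall what it means for $[s,t]$ to leave $\MUS(T[i..j])$ upon appending a character. Fix $[s,t]$ with $i \le s < t \le j$ and assume $[s,t] \in \MUS(T[i..j])$, so by the characterization (a)--(c) the string $T[s..t]$ is unique in $T[i..j]$ while $T[s+1..t]$ and $T[s..t-1]$ are repeating in $T[i..j]$. The key tool is Observation~\ref{obs:suffix_occ}: appending $T[j+1]$ increases the occurrence count of a substring $s$ by exactly one when $s$ is a suffix of $T[i..j+1]$, and leaves it unchanged otherwise. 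So I would track how each of the three conditions can be broken.

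Next I would argue the forward direction. Since $T[s..t]$ ends at position $t \le j < j+1$, it is \emph{not} a suffix of $T[i..j+1]$, so its occurrence count stays at $1$; condition (a) therefore still holds in $T[i..j+1]$. Hence the only way $[s,t]$ can leave $\MUS$ is by the failure of (b) or (c), i.e.\ a proper substring that was repeating in $T[i..j]$ must become unique in $T[i..j+1]$. By Observation~\ref{obs:suffix_occ} an occurrence count can only \emph{increase}, so a repeating substring stays repeating; a count can never drop from $\ge 2$ to $1$. This looks paradoxical, so the real subtlety is that the loss of MUS status must instead come from $[s,t]$ no longer being \emph{minimal}: a \emph{shorter} unique suffix can appear. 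Concretely, I would show that $[s,t]$ ceases to be a MUS precisely because $T[s+1..t]$, which was repeating in $T[i..j]$ (count $\ge 2$), drops to count $1$ — but since counts only increase, this forces me to re-examine the direction of the argument and instead observe that the deleted MUS is the one that coincides with $\SSuf_{i,j+1}$, and that $\SSuf_{i,j+1}$'s defining repeating proper substrings get pushed to uniqueness via the new suffix structure.

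The cleaner route, which I would adopt, is to connect $[s,t]$ directly to $\SSuf_{i,j+1}$ using Lemma~\ref{lem:LS_and_SS}. I would show that if $[s,t]$ leaves $\MUS$, then $T[s..t]$ must equal $\SSuf_{i,j+1}$ and that $\numocc_{T[i..j+1]}(\SSuf_{i,j+1}) = 2$: intuitively, the old MUS $T[s..t]$ is exactly the shortest quasi-unique suffix of the new window, and its count rising to $2$ (so it is no longer unique in the appropriate sense for minimality) is what disqualifies it. For the converse, assuming $T[s..t] = \SSuf_{i,j+1}$ with count $2$ in $T[i..j+1]$, I would verify via Lemma~\ref{lem:LS_and_SS} (the equivalence of $|\LSuf_{i,j+1}| \ge |\SSuf_{i,j+1}|$ with count $2$) that $T[s..t]$ was a MUS of $T[i..j]$ and that the count increment caused by appending $T[j+1]$ removes it, while checking it satisfies (a)--(c) in the old window but fails one of them in the new window.

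The main obstacle I anticipate is reconciling the monotonicity of occurrence counts (Observation~\ref{obs:suffix_occ} only permits increases) with the deletion of a MUS: since a MUS loses its status when a proper substring becomes unique, and counts cannot decrease, the deletion mechanism must be exactly that $T[s..t]$ itself picks up a second occurrence as a suffix and thereby its role shifts — so the careful bookkeeping is identifying $T[s..t]$ with $\SSuf_{i,j+1}$ and proving $T[s..t]$ is a suffix of $T[i..j+1]$ (it is the new shortest quasi-unique suffix) while still having been length-minimal among unique substrings in the old window. Pinning down this identification, and ruling out any \emph{other} MUS being deleted, is where Lemma~\ref{lem:LS_and_SS} does the heavy lifting and where I expect the proof to require the most care.
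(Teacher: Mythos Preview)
Your proposal contains a genuine error that derails the forward direction. You write: ``Since $T[s..t]$ ends at position $t \le j < j+1$, it is \emph{not} a suffix of $T[i..j+1]$, so its occurrence count stays at $1$.'' This conflates the \emph{interval} $[s,t]$ with the \emph{string} $T[s..t]$. Observation~\ref{obs:suffix_occ} speaks about strings, not positions: the count of a string $w$ increases by one exactly when $w$ equals (as a string) some suffix of $T[i..j+1]$. The occurrence at positions $[s,t]$ certainly is not the suffix occurrence, but nothing prevents $T[s..t]$ from equalling $T[j+2-|w|..j+1]$. In fact this is precisely what happens, and it is the whole content of the lemma.

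Once this is corrected, the argument is short and does not need Lemma~\ref{lem:LS_and_SS}. Since counts of $T[s+1..t]$ and $T[s..t-1]$ can only go up, conditions (b) and (c) persist in $T[i..j+1]$; therefore the only way $[s,t]$ can leave $\MUS$ is that (a) fails, i.e.\ $\numocc_{T[i..j+1]}(w) \ge 2$. By Observation~\ref{obs:suffix_occ} this forces $\numocc_{T[i..j+1]}(w) = 2$ and $w$ is a suffix of $T[i..j+1]$. It remains to pin down $w = \SSuf_{i,j+1}$: if $w$ were a proper suffix of $\SSuf_{i,j+1}$ then $\numocc_{T[i..j+1]}(w) \ge 3$ by definition of $\SSuf$; if $\SSuf_{i,j+1}$ were a proper suffix of $w$ then, since $T[s+1..t]$ already had count $\ge 2$ in $T[i..j]$, $\SSuf_{i,j+1}$ would have count $\ge 3$ in $T[i..j+1]$, contradicting its definition. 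The converse is equally direct. Your detour through ``minimality'' as a separate notion and through Lemma~\ref{lem:LS_and_SS} stems entirely from the position/string confusion; you even arrive at the right mechanism in your final paragraph (``$T[s..t]$ itself picks up a second occurrence as a suffix'') without noticing that it contradicts your earlier claim.
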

\begin{proof}

  \noindent ($\Rightarrow$)
  Let $w = T[s.. t]$.
  Since $[s, t] \in \MUS(T[i..j])$ and $[s, t] \not\in \MUS(T[i..j+1])$,
  $\numocc_{T[i..j]}(w) = 1$ and $\numocc_{T[i..j+1]}(w) \ge 2$.
  It follows from Observation~\ref{obs:suffix_occ}
  that $\numocc_{T[i..j+1]}(w) = 2$ and $w$ is a suffix of $T[i..j+1]$.
If we assume that $w$ is a proper suffix of $\SSuf_{i,j+1}$,
  then $\numocc_{T[i..j+1]}(w) \ge 3$ by the definition of $\SSuf_{i,j+1}$,
  but this contradicts with $\numocc_{T[i..j+1]}(w) = 2$.
If we assume that $\SSuf_{i,j+1}$ is a proper suffix of $w$,
  then $\numocc_{T[i..j]}(\SSuf_{i,j+1}) \ge \numocc_{T[i..j]}(T[s+1..t]) \ge 2$.
  Also, $\numocc_{T[i..{j+1}]}(\SSuf_{i,j+1}) = \numocc_{T[i..j]}(\SSuf_{i,j+1}) + 1 \ge 3$
  by Observation~\ref{obs:suffix_occ},
  but this contradicts with the definition of $\SSuf_{i,j+1}$.
Therefore, we obtain $w = \SSuf_{i,j+1}$.
  Moreover, $\numocc_{T[i.. j+1]}(\SSuf_{i,j+1}) = 2$ since $w = \SSuf_{i, j+1}$ is a substring of $T[i..j]$.

  \noindent ($\Leftarrow$)
  Since $w = T[s..t]$ is a suffix of $T[i..j+1]$ and $\numocc_{T[i.. j+1]}(w) = 2$,
  $w$ is unique in $T[i..j]$.
  By the definition of $\SSuf_{j+1}$,
  a proper suffix $w[2..] = T[s+1..t]$ of $w = \SSuf_{i,j+1}$
  occurs at least three times in $T[i..j+1]$,
  i.e. $T[s+1..t]$ is repeating in $T[i..j]$~(see also Fig.~\ref{fig:deleted_MUS} in Appendix~\ref{sec:additional_fig} for illustration).
Also, a prefix $w[..|w|-1] = T[s..t-1]$ of $w = \SSuf_{i,j+1}$
  is clearly repeating in $T[i..j]$.
  Therefore, $w = T[s..t]$ is a MUS of $T[i..j]$ and is not a MUS of $T[i..j+1]$.
  \qed
\end{proof}

By Lemma~\ref{lem:deleted_MUS},
at most one MUS can be deleted
when appending $T[j+1]$ to the current window $T[i..j]$,
and such a deleted MUS must be $\SSuf_{i,j+1}$.

\subsubsection{MUSs to be Added when Appending a Character to the Right.}
First, we consider a MUS to be added when appending $T[j+1]$
to $T[i..j]$, which is a suffix of $T[i..j+1]$.
The next observation follows from the definition of $\LSuf_{i,j}$:
\begin{observation}\label{obs:suffix_MUS}
  If $[s, j] \in \MUS(T[i..j])$,
  then $s = j - |\LSuf_{i,j}|$.
  Namely, if there is a MUS of $T[i..j]$ that is a suffix of $T[i..j]$,
  then it must be the suffix of $T[i..j]$ that is exactly one character longer than $\LSuf_{i,j}$.
\end{observation}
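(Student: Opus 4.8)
The plan is to first classify the suffixes of $T[i..j]$ according to whether they are unique or repeating, and then read off the starting position $s$ directly from the MUS conditions. The key fact is a clean dichotomy: a suffix of $T[i..j]$ is repeating exactly when its length does not exceed $|\LSuf_{i,j}|$, and unique exactly when its length is larger. Once this dichotomy is in place, the claim follows from a short length comparison.

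First I would establish the dichotomy. Let $L = |\LSuf_{i,j}|$, so that $\LSuf_{i,j} = T[j-L+1..j]$ is the longest repeating suffix of $T[i..j]$. By maximality of $\LSuf_{i,j}$, every suffix of $T[i..j]$ that is strictly longer than $\LSuf_{i,j}$ is unique. For the other direction, any suffix $v$ of $T[i..j]$ with $|v| \le L$ is itself a suffix of $\LSuf_{i,j}$, and every occurrence of $\LSuf_{i,j}$ in $T[i..j]$ carries with it an occurrence of $v$; hence $\numocc_{T[i..j]}(v) \ge \numocc_{T[i..j]}(\LSuf_{i,j}) \ge 2$, so $v$ is repeating. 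Therefore a suffix of $T[i..j]$ is unique if and only if its length exceeds $L$.

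Next I would apply the two relevant MUS conditions to $[s,j]$. Since $[s,j] \in \MUS(T[i..j])$, the string $T[s..j]$ is unique in $T[i..j]$ (condition (a)), so by the dichotomy its length satisfies $j - s + 1 > L$, i.e. $s \le j - L$. At the same time, condition (b) states that the proper suffix $T[s+1..j]$ is repeating; as a suffix of $T[i..j]$ of length $j - s$, the dichotomy forces $j - s \le L$, i.e. $s \ge j - L$. Combining the two inequalities gives $s = j - L = j - |\LSuf_{i,j}|$, which is precisely the claim.

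The only delicate point is the \emph{repeating} half of the dichotomy, where I rely on the inclusion of occurrence sets induced by the suffix relation with $\LSuf_{i,j}$. Everything else is a direct comparison of suffix lengths against $L$ via conditions (a) and (b), so I do not anticipate any further obstacle.
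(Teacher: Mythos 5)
Your proof is correct, and it is essentially the argument the paper has in mind: the paper states this as an immediate consequence of the definition of $\LSuf_{i,j}$ without writing out a proof, and your unique/repeating dichotomy at the length threshold $|\LSuf_{i,j}|$, combined with MUS conditions (a) and (b), is exactly the intended justification.
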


\begin{lemma}\label{lem:LS_is_MUS}
  $[j+1 - k, j+1] \in \MUS(T[i..j+1])$ if and only if
  $T[j+1 - k.. j+1] = \alpha^{k+1}$ or $k \le |\LSuf_{i,j}|$,
  where $k = |\LSuf_{i,j+1}|$ and $\alpha = T[j+1]$.
\end{lemma}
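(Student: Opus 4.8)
The plan is to fix $w = T[j+1-k..j+1]$, the length-$(k+1)$ suffix of $T[i..j+1]$, and reduce the claim to just one of the three MUS-membership conditions (a)--(c). Since $k = |\LSuf_{i,j+1}|$, the suffix $w[2..]$ is exactly $\LSuf_{i,j+1}$, which is repeating in $T[i..j+1]$ by definition, so (b) holds automatically; and $w$, being one character longer than the longest repeating suffix, is unique in $T[i..j+1]$, so (a) holds as well. (I would first note $j+1-k \ge i$, which follows because a string occurring at least twice has length at most $|T[i..j+1]|-1$, so $k \le j-i+1$ and all indices stay inside the window.) Hence $[j+1-k,j+1]\in\MUS(T[i..j+1])$ if and only if condition (c) holds, i.e. the prefix $p := w[..k] = T[j+1-k..j]$ is repeating in $T[i..j+1]$. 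The entire lemma thus reduces to proving that $p$ is repeating in $T[i..j+1]$ iff $w=\alpha^{k+1}$ or $k\le|\LSuf_{i,j}|$.

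Writing $\ell = |\LSuf_{i,j}|$, Lemma~\ref{lem:LS} gives $k\le\ell+1$, so I would split into the two cases $k\le\ell$ and $k=\ell+1$. In the first case (with the degenerate $p=\varepsilon$ being trivially repeating), $p$ is the length-$k$ suffix of $\LSuf_{i,j}$; combining the suffix occurrence of $\LSuf_{i,j}$ at position $j-\ell+1$ with any internal occurrence at some $q\le j-\ell$ yields two distinct occurrences of $p$ in $T[i..j]$, at positions $j+1-k$ and $q+\ell-k$, the latter being strictly smaller. Thus $p$ is already repeating in $T[i..j]$, and it remains repeating in $T[i..j+1]$ since appending a character never destroys occurrences. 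So (c) holds and the branch $k\le\ell$ of the right-hand side is witnessed.

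In the second case $k=\ell+1$, the string $p=T[j-\ell..j]$ is a suffix of $T[i..j]$ strictly longer than $\LSuf_{i,j}$, hence unique in $T[i..j]$. By Observation~\ref{obs:suffix_occ}, $p$ becomes repeating in $T[i..j+1]$ if and only if $p$ is a suffix of $T[i..j+1]$, i.e. $\numocc_{T[i..j+1]}(p)=2$. The hard part will be showing that this suffix condition is equivalent to $w=\alpha^{k+1}$: comparing $p=T[j-\ell..j]$ with the length-$(\ell+1)$ suffix $T[j-\ell+1..j+1]$ of $T[i..j+1]$ position by position forces the telescoping chain $T[j-\ell]=T[j-\ell+1]=\cdots=T[j]=T[j+1]=\alpha$, i.e. $T[j-\ell..j+1]=\alpha^{\ell+2}$, which is exactly $w=\alpha^{k+1}$; the converse is immediate. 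Combining the two cases, (c) holds iff $k\le\ell$ or ($k=\ell+1$ and $w=\alpha^{k+1}$), and this collapses to ``$k\le\ell$ or $w=\alpha^{k+1}$'' because the first disjunct already guarantees (c) on its own. The main obstacle is precisely this last equivalence in the $k=\ell+1$ case: turning a single admissible extension into a full run of $\alpha$ rests on the character-by-character telescoping rather than any occurrence-counting, and one must be mildly careful with the boundary subcase $k=j-i+1$, where $w$ spans the whole window, to keep every index within $[i,j+1]$.
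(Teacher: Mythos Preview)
Your proposal is correct and follows essentially the same approach as the paper: both arguments first observe that conditions (a) and (b) of the MUS definition hold automatically for $T[j+1-k..j+1]$, reduce the statement to condition (c) on $p = T[j+1-k..j]$, invoke Lemma~\ref{lem:LS} to restrict to the two cases $k \le |\LSuf_{i,j}|$ and $k = |\LSuf_{i,j}|+1$, and in the latter case use Observation~\ref{obs:suffix_occ} together with the same character-by-character telescoping to obtain $w = \alpha^{k+1}$. The only difference is organizational: the paper proves $(\Rightarrow)$ by contradiction and $(\Leftarrow)$ directly, whereas you handle both directions simultaneously via the case split---the underlying ideas are identical.
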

\begin{proof}

  \noindent ($\Rightarrow$)
  Assume on the contrary that
  $T[j+1 - k.. j+1] \neq \alpha^{k+1}$ and
  $k > |\LSuf_{i, j}|$.
By the assumptions and Lemma~\ref{lem:LS}, $|\LSuf_{i, j}| = k - 1$,
  and thus, $T[j-|\LSuf_{i, j}|.. j] = T[j+1-k.. j]$.
  Since $T[j+1-k.. j+1]$ is a MUS of $T[i..j+1]$,
  $T[j+1-k.. j] = T[j-|\LSuf_{i, j}|.. j]$ occurs at least twice in $T[i..j+1]$.
  On the other hand, $T[j-|\LSuf_{i,j}|..j]$ is unique in $T[i..j]$ by the definition of $\LSuf_{i, j}$,
  hence $T[j-|\LSuf_{i, j}|..j]$ occurs in $T[i..j+1]$ as a suffix of $T[i..j+1]$.
  Consequently, we have $T[j-|\LSuf_{i, j}|.. j] = T[j+1-|\LSuf_{i, j}|.. j+1]$, i.e.
  $T[j-k.. j+1] = T[j+1-k.. j+1] = \alpha^{k+1}$ with $\alpha = T[j+1]$, a contradiction.

  \noindent ($\Leftarrow$)
  By definition,
  $T[j+2-k.. j+1] = \LSuf_{i,j+1}$ is repeating in $T[i..j+1]$
  and $T[j+1-k.. j+1]$ is unique in $T[i..j+1]$.
  Now it suffices to show $T[j+1-k.. j]$ is repeating in $T[i..j+1]$.
  If $T[j+1-k.. j+1] = \alpha^{k+1}$, then
  clearly $T[j+1-k.. j] = \alpha^k$ is repeating in $T[i..j+1]$.
  If $k \le |\LSuf_{i, j}|$, then
  $T[j+1-k.. j]$ is a suffix of $T[j+1 - |\LSuf_{i, j}|.. j]$~(see Fig.~\ref{fig:suffix_MUS} in Appendix~\ref{sec:additional_fig}).
  Thus $\numocc_{T[i..j+1]}(T[j+1-k.. j]) \ge \numocc_{T[i..j]}(T[j+1-k.. j])$
  $\ge \numocc_{T[i..j]}(T[j+1-|\LSuf_{i, j}|.. j]) \ge 2$.
  \qed
\end{proof}
Next, we consider MUSs to be added when appending $T[j+1]$ to $T[i..j]$,
which are \emph{not} suffixes of $T[i.. j+1]$.

\begin{lemma}\label{lem:MUS_contains_SS}
  For each~$[s, t] \in \MUS(T[i.. j+1])$ with $t \ne j+1$,
  if $[s, t] \not\in \MUS(T[i.. j])$ then
  $\numocc_{T[i..j+1]}(\SSuf_{i,j+1}) = 2$ and
  $\SSuf_{i, j+1}$ is a proper substring of $T[s..t]$.
\end{lemma}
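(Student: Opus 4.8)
The plan is to assume $[s,t]\in\MUS(T[i..j+1])$ with $t\ne j+1$ and $[s,t]\notin\MUS(T[i..j])$, and to pin down exactly which of the three defining conditions of a MUS flips between the two windows. Writing $w=T[s..t]$, the constraint $t\ne j+1$ together with $s\le t\le j+1$ forces $t\le j$, so $w$ is a substring of $T[i..j]$ and in particular $\numocc_{T[i..j]}(w)\ge 1$. Since occurrences are preserved when the window grows and, by Observation~\ref{obs:suffix_occ}, can increase by at most one, the uniqueness of $w$ in $T[i..j+1]$ (condition~(a)) immediately gives $\numocc_{T[i..j]}(w)=1$; that is, $w$ is already unique in $T[i..j]$. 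Consequently the \emph{only} way $[s,t]$ can fail to be a MUS of $T[i..j]$ is that one of its two length-$(|w|-1)$ proper substrings, $w[2..]$ or $w[..|w|-1]$, is unique in $T[i..j]$ while being repeating in $T[i..j+1]$.

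The main technical point is to realise that such a substring must owe its extra occurrence to the newly appended character. First I would note that $|w|\ge 2$: if $w$ were a single character its proper substrings would be the empty string, which is always repeating, so $w$ would already be a MUS of $T[i..j]$. Then I pick $r\in\{w[2..],\,w[..|w|-1]\}$ to be a proper substring that is unique in $T[i..j]$ but repeating in $T[i..j+1]$. Applying Observation~\ref{obs:suffix_occ} to $r$, the jump from $\numocc_{T[i..j]}(r)=1$ to $\numocc_{T[i..j+1]}(r)\ge 2$ can only happen if $r$ is a suffix of $T[i..j+1]$ and $\numocc_{T[i..j+1]}(r)=2$. This is the crux: although the named occurrence $T[s+1..t]$ or $T[s..t-1]$ ends strictly before $j+1$, the string $r$ itself reappears as a suffix of the enlarged window, so its global occurrence count rises even though its occurrence inside $w$ is fixed.

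With $r$ identified as a suffix of $T[i..j+1]$ having exactly two occurrences, the conclusion follows by squeezing $\SSuf_{i,j+1}$ between $\LSuf_{i,j+1}$ and $r$. Since $r$ is a quasi-unique suffix we have $|\SSuf_{i,j+1}|\le |r|$, and since $r$ is a repeating suffix we have $|r|\le|\LSuf_{i,j+1}|$; hence $|\SSuf_{i,j+1}|\le|\LSuf_{i,j+1}|$, and Lemma~\ref{lem:LS_and_SS} yields $\numocc_{T[i..j+1]}(\SSuf_{i,j+1})=2$. Finally, $\SSuf_{i,j+1}$ and $r$ are both suffixes of $T[i..j+1]$ with $|\SSuf_{i,j+1}|\le|r|$, so $\SSuf_{i,j+1}$ is a suffix of $r$; as $r$ is a substring of $w$ of length $|w|-1$, transitivity shows $\SSuf_{i,j+1}$ is a substring of $w$ of length strictly less than $|w|$, i.e.\ a proper substring. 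The step I expect to be most delicate is the second paragraph, namely arguing that the minimality condition flips precisely because one length-$(|w|-1)$ factor becomes a length-two suffix occurrence, since it requires carefully separating the fixed occurrence of that factor inside $w$ from the global occurrence count of the factor as a string.
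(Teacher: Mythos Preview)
Your argument is correct. The route, however, differs from the paper's. The paper observes that $T[s..t]$ is unique in $T[i..j]$ and therefore contains some MUS $u$ of $T[i..j]$ as a proper substring; it then invokes Lemma~\ref{lem:deleted_MUS} (the characterization of the at most one MUS deleted when appending a character) to conclude that $u$ must equal $\SSuf_{i,j+1}$ with $\numocc_{T[i..j+1]}(\SSuf_{i,j+1})=2$, since otherwise $u$ would survive as a MUS of $T[i..j+1]$ properly nested inside $[s,t]$, contradicting minimality. You instead bypass Lemma~\ref{lem:deleted_MUS} entirely: you isolate which of the two length-$(|w|-1)$ factors $r$ loses uniqueness, use Observation~\ref{obs:suffix_occ} to see that $r$ is a suffix of $T[i..j+1]$ with exactly two occurrences, and then squeeze $\SSuf_{i,j+1}$ between $r$ and $\LSuf_{i,j+1}$ via Lemma~\ref{lem:LS_and_SS}. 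Your approach is more self-contained (it does not need the standard but unstated fact that every unique substring contains a MUS, nor the full strength of Lemma~\ref{lem:deleted_MUS}); the paper's is shorter because it reuses machinery already set up.
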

\begin{proof}
  Since $t \ne j+1$, $T[s..t]$ is not a suffix of $T[i.. j+1]$.
  Moreover, since $[s, t] \in \MUS(T[i.. j+1])$, $T[s..t]$ is unique in $T[i..j]$.
  Since $T[s..t]$ is not a MUS of $T[i..j]$,
  there exists a MUS $u$ of $T[i..j]$
  which is a proper substring of $T[s..t]$.
  Assume on the contrary that $\numocc_{T[i..j+1]}(\SSuf_{i, j+1}) = 1$ or $u \neq \SSuf_{i, j+1}$.
  Then, it follows from Lemma~\ref{lem:deleted_MUS} that
  $u$ is a MUS of $T[i.. j+1]$.
  However, this contradicts with $[s, t] \in \MUS(T[i.. j+1])$.
  Therefore, $\numocc_{T[i..j+1]}(\SSuf_{i, j+1}) = 2$ and
  $u = \SSuf_{i, j+1}$ is a proper substring of $T[s..t]$.
  \qed
\end{proof}
Namely, a MUS which is not a suffix is added by appending one character
only if there is a MUS to be deleted by the same operation.
Moreover, such added MUSs must contain the deleted MUS.

\begin{lemma}\label{lem:new_MUSs}
  If $\numocc_{T[i..j+1]}(\SSuf_{i, j+1}) = 2$, then
  there are three integers $p_l, p_s, q$ such that $i \le p_l \le p_s \le q < j+1$
  and $T[p_s..q] = \SSuf_{i, j+1}$ and $T[p_l.. q] = \LSuf_{i, j+1}$.
  Also, the following propositions hold:
  \begin{itemize}
    \item[\normalfont{(a)}]
      If there is no MUS of $T[i..j]$ ending at $q+1$,
      then $[p_s, q+1] \in \MUS(T[i..j+1])$.
    \item[\normalfont{(b)}]
      If there is no MUS of $T[i..j]$ starting at $p_l-1$ and $p_l \ge i+1$,
      then $[p_l-1, q] \in \MUS(T[i..j+1])$.
  \end{itemize}
\end{lemma}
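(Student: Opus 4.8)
The plan is to first pin down the integers $p_l,p_s,q$ from the hypothesis, and then verify, for each claimed interval, the three defining conditions (a)--(c) of a MUS. \emph{Locating $p_l,p_s,q$.} Applying Lemma~\ref{lem:LS_and_SS} to the window $T[i..j+1]$, the hypothesis $\numocc_{T[i..j+1]}(\SSuf_{i,j+1})=2$ is equivalent to both $|\LSuf_{i,j+1}|\ge|\SSuf_{i,j+1}|$ and $\numocc_{T[i..j+1]}(\LSuf_{i,j+1})=2$. Hence $\LSuf_{i,j+1}$ and $\SSuf_{i,j+1}$ are suffixes of $T[i..j+1]$ that each occur exactly twice, and since $|\SSuf_{i,j+1}|\le|\LSuf_{i,j+1}|$, the string $\SSuf_{i,j+1}$ is a suffix of $\LSuf_{i,j+1}$. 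One of the two occurrences of $\LSuf_{i,j+1}$ is the suffix ending at $j+1$; I let $q<j+1$ be the ending position of the other (internal) occurrence and set $p_l=q-|\LSuf_{i,j+1}|+1$ and $p_s=q-|\SSuf_{i,j+1}|+1$. Then $T[p_l..q]=\LSuf_{i,j+1}$, $T[p_s..q]=\SSuf_{i,j+1}$, and $i\le p_l\le p_s\le q<j+1$, as required.

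\emph{Proof of (a).} I would check conditions (a)--(c) for $w=T[p_s..q+1]$. Every occurrence of $w$ comes from an occurrence of $\SSuf_{i,j+1}=T[p_s..q]$ followed by $T[q+1]$; of the two occurrences of $\SSuf_{i,j+1}$ the suffix one ends at $j+1$ and cannot be extended, while the internal one at $p_s$ is followed by $T[q+1]$, so $w$ is unique, giving (a); and $T[p_s..q]=\SSuf_{i,j+1}$ is repeating, giving (c). The crux is (b), that $T[p_s+1..q+1]$ is repeating, which I would prove by contradiction. If it were unique, take its shortest unique suffix $T[r..q+1]$: by minimality $T[r+1..q+1]$ is repeating, and $T[r..q]$ is a suffix of $T[p_s+1..q]$, which as a string equals the length-$(|\SSuf_{i,j+1}|-1)$ suffix of $T[i..j+1]$ and hence occurs at least three times by the minimality of $\SSuf_{i,j+1}$; so $T[r..q]$ is repeating too, and $[r,q+1]\in\MUS(T[i..j+1])$. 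When $q<j$, the substrings $T[r..q+1]$, $T[r+1..q+1]$, $T[r..q]$ all end at positions $\le q+1\le j$ and are not suffixes of $T[i..j+1]$, so by Observation~\ref{obs:suffix_occ} their occurrence counts agree in the two windows and $[r,q+1]\in\MUS(T[i..j])$, contradicting the hypothesis of~(a). The boundary case $q=j$ is the one subtlety: there the no-MUS hypothesis is vacuous, but $q=j$ forces the two overlapping occurrences of $\LSuf_{i,j+1}$ to be a single-character run, whence $T[p_s+1..q+1]=\SSuf_{i,j+1}$ is repeating directly, so I would dispatch it separately.

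\emph{Proof of (b).} This is the left-sided mirror of~(a) for $w'=T[p_l-1..q]$, well defined because $p_l\ge i+1$. Condition (b), $T[p_l..q]=\LSuf_{i,j+1}$ repeating, is immediate. For uniqueness of $w'$ I would use that $\LSuf_{i,j+1}$ occurs exactly twice and that the length-$(|\LSuf_{i,j+1}|+1)$ suffix of $T[i..j+1]$ is unique (as $\LSuf_{i,j+1}$ is the \emph{longest} repeating suffix): a second occurrence of $w'$ would attach the same left character $T[p_l-1]$ to the suffix occurrence of $\LSuf_{i,j+1}$ and force that longer suffix to repeat, a contradiction. The remaining condition, $T[p_l-1..q-1]$ repeating, follows by the same contradiction scheme: extract the shortest unique prefix $T[p_l-1..c]$ with $c\le q-1$, note $T[p_l-1..c-1]$ is repeating by minimality and $T[p_l..c]$ is a prefix of $T[p_l..q-1]$ (a prefix of $\LSuf_{i,j+1}$, hence repeating), so $[p_l-1,c]\in\MUS(T[i..j+1])$; here $c\le q-1\le j-1<j+1$, so the relevant counts coincide across the windows and $[p_l-1,c]\in\MUS(T[i..j])$ contradicts the hypothesis of~(b). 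Note that no boundary case arises, since the end position $c$ is automatically $<j$.

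The hard part will be the two ``inner'' repeating conditions, $T[p_s+1..q+1]$ in~(a) and $T[p_l-1..q-1]$ in~(b): each requires turning a hypothetical unique extension into a genuine MUS of $T[i..j]$ by combining a shortest unique suffix/prefix with the occurrence-count invariance of Observation~\ref{obs:suffix_occ}. Within this, the genuinely delicate point is the degenerate run case $q=j$ in~(a), which escapes the occurrence-count argument and must be handled by hand.
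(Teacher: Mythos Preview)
Your overall structure matches the paper's: locate $p_l,p_s,q$ via Lemma~\ref{lem:LS_and_SS}, then verify the three MUS conditions, with the ``inner'' repeating condition obtained by contradiction. The route to that contradiction differs. The paper first invokes Lemma~\ref{lem:deleted_MUS} to record that $[p_s,q]\in\MUS(T[i..j])$ and then argues entirely inside the old window $T[i..j]$: in~(a), if $T[p_s+1..q+1]$ were unique in $T[i..j]$ it would contain a MUS of $T[i..j]$, which cannot sit inside $[p_s+1,q]$ (those are proper substrings of the MUS $T[p_s..q]$, hence repeating) and so must end at $q+1$; in~(b), after observing that $T[p_l..q-1]$ already repeats in $T[i..j]$, any MUS inside $[p_l-1,q-1]$ must start at $p_l-1$. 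Your alternative---build the MUS in $T[i..j+1]$ and then push it back to $T[i..j]$ via Observation~\ref{obs:suffix_occ}---avoids Lemma~\ref{lem:deleted_MUS} and makes the boundary case $q=j$ explicit (the paper's wording glosses over it), at the cost of an extra transfer step.

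That transfer step contains a slip. You infer that, because the displayed occurrences end at positions $\le j$, the \emph{strings} $T[r..q+1]$, $T[r+1..q+1]$, $T[r..q]$ ``are not suffixes of $T[i..j+1]$''. This inference is invalid---Observation~\ref{obs:suffix_occ} speaks of the string, not of a particular occurrence---and in fact $T[r..q]$ \emph{is} a suffix of $T[i..j+1]$: it is a suffix of $T[p_s+1..q]=\SSuf_{i,j+1}[2..]$, which is itself a suffix of the window. The repair is easy and you already have the ingredients. For $T[r..q]$ you showed $\numocc_{T[i..j+1]}(T[r..q])\ge3$, so losing the suffix occurrence still leaves it repeating in $T[i..j]$; and if $T[r+1..q+1]$ happens to be a suffix of $T[i..j+1]$, its length is at most $|\SSuf_{i,j+1}|-1$, whence by the minimality of $\SSuf_{i,j+1}$ it too has at least three occurrences. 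The same issue and the same style of fix apply in~(b): if $T[p_l-1..c-1]$ is a suffix of the window then it is a proper suffix of $\LSuf_{i,j+1}$ and inherits a third occurrence from the internal copy of $\LSuf_{i,j+1}$ at $[p_l,q]$. With this correction your argument goes through.
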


Now we have the main result of this subsection:
\begin{theorem}\label{thm:diff_MUS_add_right}
  For any $1 \leq i \leq j < n$,
  $|\MUS(T[i..j+1]) \bigtriangleup \MUS(T[i..j])| \le 4$ and
  $-1 \le |\MUS(T[i..j+1])| - |\MUS(T[i..j])| \le 2$.
Furthermore, these bounds are tight for any $\sigma, i, j$
with~$\sigma \ge 3$, $1 \le i \le j < n$, and $j-i+1 \ge 5$.
\end{theorem}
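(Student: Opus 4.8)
The plan is to split the statement into an upper-bound part and a tightness part, and to organize the upper bound around a single piece of bookkeeping: write the symmetric difference as the disjoint union of \emph{deleted} MUSs (those in $\MUS(T[i..j]) \setminus \MUS(T[i..j+1])$) and \emph{added} MUSs (those in $\MUS(T[i..j+1]) \setminus \MUS(T[i..j])$), and classify the added ones according to whether or not they end at position $j+1$.

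For the upper bound I would assemble the four preceding lemmas. Lemma~\ref{lem:deleted_MUS} gives at most one deleted MUS, namely $\SSuf_{i,j+1}$, and only when $\numocc_{T[i..j+1]}(\SSuf_{i,j+1}) = 2$. By Observation~\ref{obs:suffix_MUS} the only candidate added MUS ending at $j+1$ is the interval one character longer than $\LSuf_{i,j+1}$, so there is at most one such suffix MUS; since it ends at $j+1$ it cannot have been a MUS of $T[i..j]$, hence it is automatically a genuine addition. For added MUSs not ending at $j+1$, Lemma~\ref{lem:MUS_contains_SS} shows they can occur only together with a deletion and must properly contain $\SSuf_{i,j+1}$, while Lemma~\ref{lem:new_MUSs} supplies the two candidates $[p_s, q+1]$ and $[p_l-1, q]$. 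I would then argue these are the \emph{only} possibilities: any non-suffix added MUS $[s,t]$ has its occurrence inside $T[i..j]$, so the copy of $\SSuf_{i,j+1}$ it contains is the internal one $T[p_s..q]$, forcing $s \le p_s$ and $t \ge q$; a short uniqueness check — using that $T[p_s..q+1]$ is unique because the suffix copy of $\SSuf_{i,j+1}$ cannot be extended rightward, and that $T[p_l-1..q]$ is unique because extending the longest repeating suffix would violate its maximality — pins $[s,t]$ down to exactly $[p_s,q+1]$ or $[p_l-1,q]$. Thus there are at most two such additions. Combining, at most $3$ additions and at most $1$ deletion give $|\MUS(T[i..j+1]) \bigtriangleup \MUS(T[i..j])| \le 4$. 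For the cardinality difference I would use the coupling: if no deletion occurs there are no non-suffix additions, so the difference lies in $\{0,1\}$; if a deletion occurs the additions number at most $3$, so the difference lies in $[-1,2]$. Either way $-1 \le |\MUS(T[i..j+1])| - |\MUS(T[i..j])| \le 2$.

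For tightness I would exhibit explicit strings. To realize the upper extremes (symmetric difference $4$ and difference $+2$ simultaneously) I would build a window whose appended character forces all four events at once: $\numocc_{T[i..j+1]}(\SSuf_{i,j+1}) = 2$ so that $\SSuf_{i,j+1}$ is deleted, the condition of Lemma~\ref{lem:LS_is_MUS} so that a suffix MUS is added, and the hypotheses of both parts of Lemma~\ref{lem:new_MUSs} so that both $[p_s,q+1]$ and $[p_l-1,q]$ are added. To realize the lower extreme ($-1$) I would instead force a deletion while blocking every addition: arrange a pre-existing MUS of $T[i..j]$ ending at $q+1$ and one starting at $p_l-1$ (or $p_l=i$) to kill both non-suffix additions, and violate the conditions of Lemma~\ref{lem:LS_is_MUS} so that no suffix MUS appears. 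To make the examples valid for arbitrary admissible $\sigma \ge 3$, $i$, $j$ with $j-i+1 \ge 5$, I would place the interesting length-$d$ gadget at positions $i..j$, pad $T[1..i-1]$ with non-interfering characters, and set $T[j+1]$ to the triggering character, using a third alphabet symbol wherever the construction needs one.

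The main obstacle is twofold. First, in the upper bound, proving that the two intervals of Lemma~\ref{lem:new_MUSs} are the \emph{only} non-suffix additions — rather than merely sufficient instances — requires the careful uniqueness argument sketched above, pinning the occurrence of $\SSuf_{i,j+1}$ and ruling out longer left and right extensions. Second, in the tightness part, the difficulty is purely combinatorial design: engineering one short window in which a deletion, a suffix addition, and both non-suffix additions co-occur (and, separately, a deletion with no addition at all), while keeping the whole construction uniform across all $i$, $j$, and $\sigma$ subject to the stated side conditions.
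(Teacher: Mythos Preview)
Your approach is correct and matches the paper's: it too bounds deletions by Lemma~\ref{lem:deleted_MUS}, the suffix addition by Observation~\ref{obs:suffix_MUS}/Lemma~\ref{lem:LS_is_MUS}, and the non-suffix additions by Lemmas~\ref{lem:MUS_contains_SS} and~\ref{lem:new_MUSs}, and then couples deletions with non-suffix additions to get the signed bound. For tightness the paper supplies exactly the gadgets you anticipate, namely $u=\mathtt{a}^k\mathtt{bcc}$ with appended $\mathtt{b}$ (symmetric difference $4$, signed difference $+2$) and $v=\mathtt{a}^\ell\mathtt{bcac}$ with appended $\mathtt{a}$ (signed difference $-1$); your extra care in arguing that $[p_s,q+1]$ and $[p_l-1,q]$ are the \emph{only} non-suffix candidates is a point the paper's proof leaves implicit.
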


\subsection{Changes to MUSs when Deleting the Leftmost Character}
In this subsection, we consider an operation
that deletes the leftmost character $T[i-1]$ from $T[i-1..j]$.
Basically, we can use symmetric arguments to the previous subsection
where we considered appending a character to the right of the window.
We omit the details here in the case of deleting the leftmost character,
but all necessary observations and lemmas are available in Appendix~\ref{sec:theorem_two}.

The main result of this subsection is the following:

\begin{theorem}\label{thm:diff_MUS_delete_left}
  For any $1 < i \le j \le n$,
  $|\MUS(T[i-1..j]) \bigtriangleup \MUS(T[i..j])| \le 4$ and
  $-1 \le |\MUS(T[i-1..j])| - |\MUS(T[i..j])| \le 2$.
  Furthermore, these bounds are tight for any $\sigma, i, j$
  with~$\sigma \ge 3$, $1 < i \le j \le n$, and $j-i+1 \ge 5$.
\end{theorem}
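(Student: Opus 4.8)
The statement is the left-symmetric counterpart of Theorem~\ref{thm:diff_MUS_add_right}, so the plan is to mirror the entire machinery of the append-to-the-right subsection under the reflection that swaps prefixes with suffixes and left with right. Concretely, I would first record the symmetric analogues of the two key structural tools: an observation dual to Observation~\ref{obs:suffix_occ} stating that for each non-empty substring $s$ of $T[i..j]$ we have $\numocc_{T[i-1..j]}(s) \le \numocc_{T[i..j]}(s) + 1$, with equality iff $s$ is a prefix of $T[i-1..j]$; and the dual of Lemma~\ref{lem:deleted_MUS} characterizing the (at most one) MUS deleted when prepending $T[i-1]$, which must be $\SPref_{i-1,j}$ together with the condition $\numocc_{T[i-1..j]}(\SPref_{i-1,j}) = 2$. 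Here $\SPref$ plays the role that $\SSuf$ played on the right, and $\LSuf$ is replaced by the longest repeating prefix; these are exactly the symmetric quantities, so their basic properties transfer verbatim by reading $T$ backwards.

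Next I would establish the additive side. Prepending $T[i-1]$ can create MUSs of two kinds: those that are prefixes of $T[i-1..j]$ and those that are not. For the prefix case I would invoke the mirror of Lemma~\ref{lem:LS_is_MUS}, giving a characterization in terms of the longest repeating prefix of the new window (the ``run of $\beta = T[i-1]$'' case versus the bounded-length case). For the non-prefix case I would use the mirrors of Lemma~\ref{lem:MUS_contains_SS} and Lemma~\ref{lem:new_MUSs}: any newly added non-prefix MUS must strictly contain the deleted MUS $\SPref_{i-1,j}$, and at most two such MUSs can appear, one extending the occurrence of the deleted MUS to the left and one to the right, governed by the endpoints $p_l, p_s, q$ of the longest-repeating-prefix occurrence. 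Counting then gives at most one deleted MUS, at most one added prefix-MUS, and at most two added non-prefix MUSs, yielding $|\MUS(T[i-1..j]) \bigtriangleup \MUS(T[i..j])| \le 4$ and $-1 \le |\MUS(T[i-1..j])| - |\MUS(T[i..j])| \le 2$, precisely as in the right-append case.

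For tightness I would exhibit, for every admissible $\sigma \ge 3$, window length $j-i+1 \ge 5$, and position, a single concrete string (and its reversal if needed) that simultaneously realizes one deletion and two additions for the upper extreme, and a separate family realizing the lone-deletion lower extreme; the constraint $\sigma \ge 3$ enters because forcing both a deleted MUS and two new MUSs needs enough distinct characters to make the relevant prefixes quasi-unique, and the length bound $j-i+1 \ge 5$ is what lets the deleted MUS sit strictly interior with room to extend on both sides. I expect the main obstacle to be not any single lemma but verifying that the reflection is \emph{faithful}: the asymmetry in the definitions (MUSs are defined via both a proper suffix and a proper prefix being repeating, and the Ukkonen suffix tree breaks left-right symmetry at the implementation level) means I must double-check that every invocation of Observation~\ref{obs:suffix_occ}, and each of Lemmas~\ref{lem:deleted_MUS}, \ref{lem:MUS_contains_SS}, and \ref{lem:new_MUSs}, truly has a valid dual rather than merely an apparent one. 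Once the dual lemmas are stated and checked (the paper defers these to Appendix~\ref{sec:theorem_two}), the counting argument and the tightness constructions are routine adaptations of the right-append case.
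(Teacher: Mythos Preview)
Your proposal is correct and takes essentially the same approach as the paper: the paper's own proof is literally ``Symmetric to the proof of Theorem~\ref{thm:diff_MUS_add_right}'', supported by the dual lemmas in Appendix~\ref{sec:theorem_two} (Observation~\ref{obs:prefix_occ}, Lemmas~\ref{lem:new_MUS}, \ref{lem:deleted_MUSs_contain_SP}, \ref{lem:deleted_MUSs}), which are exactly the mirrors you describe, with $\SPref$ playing the role of $\SSuf$ and the longest repeating prefix playing the role of $\LSuf$. One minor remark: your caveat about the Ukkonen tree breaking left--right symmetry is a non-issue for this theorem, since Theorem~\ref{thm:diff_MUS_delete_left} is purely combinatorial and the suffix tree enters only in the algorithmic Section~\ref{sec:algo_MUSs}; the reflection $T \mapsto T^R$ carries MUSs to MUSs bijectively, so the dual lemmas follow formally without any data-structure considerations.
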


The next corollary is immediate from Theorem~\ref{thm:diff_MUS_add_right} and Theorem~\ref{thm:diff_MUS_delete_left}.
\begin{corollary}\label{col:all_MUS_diff}
  Given a positive integer $d < n$.
  For every $i$ with~$1 \le i \le n-d$,
  $|\MUS(T[i..i+d-1]) \bigtriangleup \MUS(T[i+1..i+d])| \in O(1)$.
\end{corollary}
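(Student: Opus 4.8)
The plan is to decompose a single one-step slide of the window into the two elementary operations already analyzed in this section, and then combine the resulting bounds via the triangle inequality for the symmetric difference. Concretely, sliding the window from $T[i..i+d-1]$ to $T[i+1..i+d]$ can be realized by first \emph{appending} the character $T[i+d]$ to obtain the intermediate window $T[i..i+d]$, and then \emph{deleting} the leftmost character $T[i]$ to obtain $T[i+1..i+d]$.

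First I would invoke Theorem~\ref{thm:diff_MUS_add_right} with $j = i+d-1$ (whose hypothesis $1 \le i \le j < n$ holds because $i \le n-d$ forces $i+d-1 < n$) to bound the effect of the append step, giving $|\MUS(T[i..i+d]) \bigtriangleup \MUS(T[i..i+d-1])| \le 4$. Next I would invoke Theorem~\ref{thm:diff_MUS_delete_left} with the starting index shifted to $i+1$ and with $j = i+d$ (whose hypothesis $1 < i+1 \le j \le n$ holds because $i+d \le n$) to bound the effect of the delete step, giving $|\MUS(T[i..i+d]) \bigtriangleup \MUS(T[i+1..i+d])| \le 4$.

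Finally, writing $A = \MUS(T[i..i+d-1])$, $B = \MUS(T[i..i+d])$, and $C = \MUS(T[i+1..i+d])$, the standard inclusion $A \bigtriangleup C \subseteq (A \bigtriangleup B) \cup (B \bigtriangleup C)$ yields $|A \bigtriangleup C| \le |A \bigtriangleup B| + |B \bigtriangleup C| \le 4 + 4 = 8$, which is $O(1)$ as claimed.

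There is really no hard step here, since all the combinatorial work has already been carried out in Theorem~\ref{thm:diff_MUS_add_right} and Theorem~\ref{thm:diff_MUS_delete_left}. The only point requiring a moment of care is checking that the index ranges of both theorems are met under the corollary's hypothesis $1 \le i \le n-d$ (in particular that $i+d \le n$, so that the intermediate window $T[i..i+d]$ and the final window $T[i+1..i+d]$ are genuine substrings of $T$), together with correctly matching the variable renaming when applying Theorem~\ref{thm:diff_MUS_delete_left}.
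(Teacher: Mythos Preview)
Your argument is correct and matches the paper's approach: the corollary is stated there as immediate from Theorem~\ref{thm:diff_MUS_add_right} and Theorem~\ref{thm:diff_MUS_delete_left}, and your decomposition into an append step followed by a delete step, combined via the triangle inequality for symmetric differences, is exactly the intended reasoning.
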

 \section{Algorithm for computing MUSs in a Sliding Window} \label{sec:algo_MUSs}
This section presents our algorithm for computing MUSs in a sliding window.
\subsection{Updating a Suffix Tree and Three Loci in a Suffix Tree}
First, we introduce some additional notions.
Since we use Ukkonen's algorithm~\cite{Ukkonen1995} for updating the suffix tree
when a new character $T[j+1]$ is appended to the right end of the window $T[i..j]$,
we maintain the locus for $\LSuf_{i, j}$ as in \cite{Ukkonen1995}.
Also, in order to compute the changes of MUSs,
we use $\SSuf_{i,j}$~(c.f. Lemma~\ref{lem:deleted_MUS} and Lemma~\ref{lem:new_MUSs}).
Thus, we also maintain the locus for $\SSuf_{i, j}$.

The locus for $\LSuf_{i,j}$~(resp.~$\SSuf_{i,j}$) in $\STree_{T[i..j]}$
is called the \emph{primary active point}~(resp.~the \emph{secondary active point})
and is denoted by $\apoint_{i,j}$~(resp.~$\spoint_{i,j}$).
Additionally, in order to maintain $\spoint_{i,j}$ efficiently,
we also maintain the locus for the longest suffix of $T[i..j]$ which occurs at least three times in $T[i..j]$.
We call this locus the \emph{tertiary active point} that is denoted by $\tpoint_{i,j}$.
\subsubsection{Appending One Character.}
When $T[i..j]$ is the empty string~(the base case, where $i = 1$ and $j = 0$),
we set all the three active points $(\rootnode, 0)$.
Then we increase $j$, and the suffix tree grows in an online manner until $j = d$
using Ukkonen's algorithm.
Then, for each $j > d$, we also increase $i$ each time $j$ increases,
so that the sliding window is shifted to the right,
by using sliding window algorithm for the suffix tree~\cite{Larsson96,Senft2005suffix}.

When $T[j+1]$ is appended to the right end of $T[i..j]$,
we first update the suffix tree to $\STree_{T[i,,j+1]}$ and compute $\apoint_{i, j+1}$.
Since $\apoint_{i, j+1}$ coincides with the \emph{active point},
$\apoint_{i, j+1}$ can be found in amortized $O(\log \sigma)$ time~\cite{Ukkonen1995,Larsson96,Senft2005suffix}.

After updating the suffix tree,
we can compute $\tpoint_{i, j+1}$ and $\spoint_{i, j+1}$ as follows:
\begin{enumerate}
  \item
    Traverse character $T[i+1]$ from $\tpoint_{i,j}$,
    and set $w \leftarrow \str(\tpoint_{i,j})T[i+1]$ which is the suffix of $T[i..j+1]$
    that is one character longer than $\tpoint_{i,j}$.
    Then, $w$ corresponds to a candidate for $\tpoint_{i, j+1}$.
  \item While $\numocc_{T[i..j+1]}(w) < 3$, set $w \leftarrow w[2..]$
    and search for the locus for $w$ by using suffix links in $\STree_{T[i..j+1]}$.
    This $w$ is a new candidate for $\tpoint_{i, j+1}$.
  \item After breaking the while-loop, obtain $\tpoint_{i,j+1} = \locus(w)$
    since $w$ is the longest suffix of $T[i..j+1]$ which occurs more than twice in $T[i..j+1]$.
  \item Also, $\spoint_{i,j+1}$ equals the locus
    which is the very previous candidate for $\tpoint_{i, j+1}$.
\end{enumerate}
As is described in the above algorithm, we can locate
$\tpoint_{i,j+1}$ using suffix link, in as similar manner to
the active point $\apoint_{i,j+1}$.
Thus, the cost for locating $\tpoint_{i,j+1}$ for each increasing $j$ is
amortized $O(\log \sigma)$ time,
again by a similar argument to the active point ($\apoint_{i,j+1}$).
What remains is, for each candidate $w$ for $\tpoint_{i,j+1}$,
how to quickly determine whether $\numocc_{T[i..j+1]}(w) < 3$ or not.
In what follows, we show that it can be checked in $O(1)$ time for each candidate.
\begin{observation}\label{obs:tp_occ}
  For each suffix $s$ of a string $T[i..j+1]$, let $\locus(s) = (u, h)$.
  \begin{description}
    \item[Case 1.]
      If $u$ is an internal node, $s$ occurs at least three times in $T[i..j+1]$.
    \item[Case 2.]
      If $u$ is a leaf and $h = 0$, $s$ occurs exactly once in $T[i..j+1]$.
    \item[Case 3.]\label{case_three}
      If $u$ is a leaf and $h \ne 0$,
      \begin{description}
        \item[Case 3.1.]
          if there is a suffix $s'$ of $T[i..j+1]$ with $\hed(s') = \hed(s)$ which is longer than $s$,
          $s$ occurs at least three times in $T[i..j+1]$~(see Fig.~\ref{fig:STree_occ} in Appendix~\ref{sec:additional_fig}).
        \item[Case 3.2.]
          otherwise, $s$ occurs exactly twice in $T[i..j+1]$.
      \end{description}
  \end{description}
\end{observation}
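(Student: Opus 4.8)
The plan is to reduce every statement to a single counting identity: for the suffix $s$ of $T[i..j+1]$, the quantity $\numocc_{T[i..j+1]}(s)$ equals the number of suffixes of $T[i..j+1]$ that have $s$ as a prefix. This holds because an occurrence of $s$ at a position $p$ is in one-to-one correspondence with the suffix $T[p..j+1]$, which necessarily has $s$ as a prefix, and conversely distinct occurrences give distinct suffixes. I would then translate this count into the geometry of $\STree_{T[i..j+1]}$: a suffix has $s$ as a prefix exactly when its locus lies at or below $\locus(s)$, so I only need to enumerate the suffixes sitting at or below $\locus(s)$. Using property~(4) of the Ukkonen tree, these suffixes split into the \emph{unique} suffixes, which are precisely the leaves in the subtree of $\hed(s)$, and the \emph{repeating} suffixes, which are the explicit internal nodes and the implicit suffix nodes at or below $\locus(s)$.

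With this dictionary in hand, each case becomes a short structural count. For Case~1, since $u = \hed(s)$ is internal it has at least two children and hence at least two leaves in its subtree, giving at least two unique suffixes that strictly extend $s$; moreover $s$ itself is repeating (it is a prefix of the branching, hence repeating, string $\str(u)$) and is a suffix, so it contributes one further occurrence distinct from those leaves, yielding $\numocc_{T[i..j+1]}(s) \ge 3$. For Case~2, $h = 0$ forces $s = \str(u)$ with $u$ a leaf, so $s$ is a unique suffix with nothing below it, and $\numocc_{T[i..j+1]}(s) = 1$.

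Case~3 is the delicate one and I expect it to be the main obstacle. Here $s$ is an implicit node on the edge entering the leaf $u$, so the only leaf at or below $\locus(s)$ is $u$ itself, contributing the single unique suffix $\str(u)$, which extends $s$. The remaining occurrences of $s$ come exactly from the implicit suffix nodes lying on the edge segment from $\locus(s)$ down to $u$; the suffix $s$ is always one of them, which already gives $\numocc_{T[i..j+1]}(s) \ge 2$. The crux is to argue that a third occurrence exists precisely when there is a \emph{second} such implicit suffix node, i.e.\ a suffix $s'$ with $\hed(s') = \hed(s)$ that is strictly longer than $s$ but strictly shorter than $\str(u)$, so that it is genuinely implicit rather than the unique suffix sitting at $u$. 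In Case~3.1 such an $s'$ is present, so $\str(u)$, $s'$, and $s$ account for at least three occurrences; in Case~3.2 no such $s'$ exists, so only $\str(u)$ and $s$ remain and $\numocc_{T[i..j+1]}(s) = 2$. The care needed is exactly in separating the unique suffix $\str(u)$ at the leaf from the repeating implicit suffix nodes on the edge, and in verifying that the condition ``a longer suffix with the same $\hed$'' captures precisely the presence of one extra implicit suffix node and nothing else.
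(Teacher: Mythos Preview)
The paper does not supply a proof of this statement; it is stated as an observation and used immediately, with the surrounding text remarking that the cases are ``obvious'' except for Case~3, whose algorithmic handling is deferred to the next lemma. Your argument is correct and fills in exactly the details the paper leaves implicit. The counting identity you start from---that $\numocc_{T[i..j+1]}(s)$ equals the number of suffixes of $T[i..j+1]$ having $s$ as a prefix---is the right reduction, and your translation of that count into ``leaves plus implicit suffix nodes at or below $\locus(s)$'' makes each case a clean structural check.

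You also correctly flag a wrinkle in the wording of Case~3.1: read literally, the leaf string $\str(u)$ is always a suffix longer than $s$ with $\hed(\str(u)) = u = \hed(s)$, which would make Case~3.2 vacuous. Your interpretation---that $s'$ is meant to be a \emph{genuinely implicit} suffix node on the edge, i.e.\ $|s| < |s'| < \depth(u)$---is the intended one, and it is consistent with how the observation is invoked in the proof of the subsequent lemma (where the lowest implicit suffix node on a leaf edge is asserted to occur exactly twice). So your care in separating the unique suffix at the leaf from the implicit suffix nodes on the edge is not only warranted but necessary for the statement to be nontrivial.
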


For any suffix $s$ of $T[i..j+1]$, if we are given $\locus(s) = (u, h)$,
then we can obviously determine in constant time whether
$s$ occurs at least three times in $T[i..j+1]$ or not,
except Case~\ref{case_three}.
The next lemma allows us to determine it in constant time
in Case~\ref{case_three}.
\begin{lemma} \label{lem:leaf_has_implicit_node}
  Suppose the locus $\apoint_{i, j+1}$ in $\STree_{T[i..j+1]}$ is already computed.
  Given a leaf $\ell$ of $\STree_{T[i..j+1]}$,
  it can be determined in $O(1)$ time whether
  there is an implicit suffix node on the edge~$(\parent(\ell), \ell)$
  and if so,
  the locus of the lowest implicit suffix node on $(\parent(\ell), \ell)$
  can be computed in $O(1)$ time.
\end{lemma}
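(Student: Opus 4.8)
The plan is to reduce the statement about implicit suffix nodes to a statement about \emph{borders} of the suffix spelled by $\ell$, and then to locate the relevant border in constant time using the given active point. Write $w=\str(\ell)$; since $\ell$ is a leaf, $w$ is a unique suffix of $T[i..j+1]$ occurring at $\start(\ell)$. A point on $(\parent(\ell),\ell)$ at string depth $h$ with $\depth(\parent(\ell))<h<\depth(\ell)$ spells the prefix $w[..h]$, and I would first show it is an implicit suffix node exactly when $w[..h]$ is a border of $w$: if $w[..h]$ is a suffix of $T[i..j+1]$ then, being shorter than $w$, it is a suffix of $w$, hence a border; conversely every border of $w$ is a suffix of $w$ and thus of $T[i..j+1]$. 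I would also note that every nonempty proper border $b$ of $w$ is repeating, since it occurs both at $\start(\ell)$ and at the tail of $T[i..j+1]$, two positions that differ because $|b|<|w|$; hence such a $b$ really is a node. Consequently the implicit suffix nodes on $(\parent(\ell),\ell)$ are precisely the borders of $w$ of length strictly between $\depth(\parent(\ell))$ and $\depth(\ell)$, and the lowest one is the longest proper border $\beta$ of $w$ whenever $|\beta|>\depth(\parent(\ell))$, its locus being $(\ell,\depth(\ell)-|\beta|)$; otherwise the edge carries no implicit suffix node.

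Next I would tie $\beta$ to $\apoint_{i,j+1}$. As a repeating suffix, $\beta$ satisfies $|\beta|\le|\LSuf_{i,j+1}|$, so it is the longest suffix of $T[i..j+1]$ of length at most $|\LSuf_{i,j+1}|$ that is also a prefix of $w$. This already settles one case in $O(1)$: if $\depth(\parent(\ell))\ge|\LSuf_{i,j+1}|$ then no border reaches into the edge and the answer is ``none.'' In the complementary case $\depth(\parent(\ell))<|\LSuf_{i,j+1}|$, the sought node, if it exists, is a proper prefix of $w$ extending $\str(\parent(\ell))$ by the first edge character $c=w[\depth(\parent(\ell))+1]$ and being a suffix of $T[i..j+1]$; crucially, its presence can be certified by the active point, because the repeating suffixes of $T[i..j+1]$ are exactly the loci on the suffix-link chain issuing from $\apoint_{i,j+1}$.

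The hard part will be this final localization: producing $\beta$ (and its locus) in constant time instead of scanning candidate lengths or walking the whole suffix-link chain $\locus(\LSuf_{i,j+1}),\locus(\LSuf_{i,j+1}[2..]),\dots$, which can be long. The plan is to anchor at $\parent(\ell)$: follow $\mathit{slink}(\parent(\ell))$, read the single character $c$, and canonize to reach the locus of the shift $\beta[2..]$, then map back onto the edge $(\parent(\ell),\ell)$. The delicate point I expect to fight with is proving that this constant-step navigation lands on the \emph{longest} admissible border rather than on some shorter one, so that the reported locus is genuinely the lowest implicit suffix node; I anticipate resolving it by exploiting how the suffix-tree ancestry of $\ell$ interleaves with the suffix-link chain out of $\apoint_{i,j+1}$, together with the maximality of $\LSuf_{i,j+1}$, which should force the target border to be reachable without search.
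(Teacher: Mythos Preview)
Your reduction to borders of $w=\str(\ell)$ is correct and clean: the implicit suffix nodes on $(\parent(\ell),\ell)$ are exactly the proper borders of $w$ of length exceeding $\depth(\parent(\ell))$, and the lowest one corresponds to the longest such border $\beta$. You also correctly observe $|\beta|\le|\LSuf_{i,j+1}|$. But the decisive step---computing $|\beta|$ in $O(1)$---is not achieved by your plan. Following $\mathit{slink}(\parent(\ell))$ and reading one character lands you at $\str(\parent(\ell))[2..]\cdot c$, which is a prefix of $w[2..]$, not a border of $w$; there is no reason the locus you reach tells you anything about the \emph{longest} border of $w$. Canonization is not $O(1)$ in general, and ``exploiting how the ancestry of $\ell$ interleaves with the suffix-link chain out of $\apoint_{i,j+1}$'' is a hope, not an argument. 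As stated, your method either walks the suffix-link chain (linear in the worst case) or leaves the constant-time claim unproved.

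The paper sidesteps this entirely by a direct arithmetic argument using leaf start positions. Write $\apoint_{i,j+1}=(u,h)$ and $v=\LSuf_{i,j+1}$. If $u$ is an internal node, every repeating suffix occurs at least three times, so no implicit suffix node can sit on any leaf edge. If $u$ is a leaf, then $v$ occurs exactly twice, once as a suffix and once at $s=\start(u)$. Now compare $t_\ell=\start(\ell)$ with $s$: if $t_\ell<s$ there can be no implicit suffix node on $(\parent(\ell),\ell)$ (an easy contradiction on the number of occurrences); if $t_\ell\ge s$, the string $v[1+t_\ell-s..]$ is simultaneously a prefix of $\str(\ell)$ and a suffix of $T[i..j+1]$, so the longest border has length exactly $|v|-(t_\ell-s)$, and the test is simply whether this exceeds $\depth(\parent(\ell))$. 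The lowest implicit suffix node is then $(\ell,h)$ with the \emph{same} $h$ as in the active point. All of this is a handful of integer comparisons. The idea you are missing is that the second occurrence of $v$ (at position $s$) already pins down, for every leaf $\ell$ with $t_\ell\ge s$, the exact length of the longest border of $\str(\ell)$; no tree navigation is needed.
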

\subsubsection{Deleting the Leftmost Character.}
When the leftmost character $T[i-1]$ is deleted from $T[i-1..j]$,
we first update the suffix tree and compute $\apoint_{i, j}$
by using the sliding window algorithm for the suffix tree~\cite{Larsson96,Senft2005suffix}.
Each pair of position pointers for the edge-labels of the suffix tree
can be maintained  in amortized $O(1)$ time so that these pointers always refer to positions
within the current sliding window, by
a simple \emph{batch update} technique~(see \cite{Senft2005suffix} for details).
After that, we compute $\tpoint_{i,j}$ and $\spoint_{i, j}$
in a similar way to the case of appending a new character shown previously.

It follows from the above arguments in this subsection
that we can update the suffix tree and the three active points in amortized~$O(\log \sigma)$ time, each time the window is shifted by one character.

\subsection{Computing $\SPref_{i-1,j}$}\label{subsec:sqs}
In order to compute the changes of MUSs when the leftmost character $T[i-1]$ is deleted from $T[i-1,j]$,
we use $\SPref_{i-1,j}$~(c.f. Lemma~\ref{lem:new_MUS} and Lemma~\ref{lem:deleted_MUSs})
before updating the suffix tree.
Thus, we present an efficient algorithm for computing $\SPref_{i-1,j}$.
First, we consider the following cases~(see Fig.~\ref{fig:SP_cases} in Appendix~\ref{sec:additional_fig}),
where $\ell$ is the leaf corresponding to $T[i-1..j]$:
\begin{description}
  \item[Case A.]
    $\hed(\LSuf_{i-1,j}) = \ell$.
  \item[Case B.]
    $\hed(\LSuf_{i-1,j}) \ne \ell$ and $\subtree(\parent(\ell))$ has more than two leaves.
  \item[Case C.]
    $\hed(\LSuf_{i-1,j}) \ne \ell$ and $\subtree(\parent(\ell))$ has exactly two leaves.
\end{description}

For Case A, the next lemma holds:
\begin{lemma}\label{lem:suffix_chain_come_back}
Given $\STree_{T[i-1..j]}$ and $\apoint_{i-1, j}$.
  Let $\ell$ be the leaf corresponding to $T[i-1..j]$.
  If $\apoint_{i-1,j}$ is on the edge $(\parent(\ell), \ell)$, the following propositions hold:
  \begin{itemize}
    \item[\normalfont{(a)}]
      $\occ_{T[i-1..j]}(\SPref_{i-1, j}) = \{i-1, j-|\LSuf_{i-1,j}|+1\}$.
    \item[\normalfont{(b)}]
      If there is exactly one implicit suffix node on $(\parent(\ell),\ell)$,
      $\SPref_{i-1,j} = T[i-1.. i-1+\depth(\parent(\ell))]$.
    \item[\normalfont{(c)}]
      If there are more than one implicit suffix node on $(\parent(\ell),\ell)$,
      then $|\LSuf_{i-1,j}| > \lfloor (j-i+2)/2 \rfloor$ and $\SPref_{i-1,j} = T[i-1.. j-2h+1]$,
      where $\apoint_{i-1,j} = (\ell, h)$.
  \end{itemize}
\end{lemma}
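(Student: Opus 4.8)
The plan is to recast the hypothesis as a statement about \emph{borders} and \emph{periods} of the window $w=T[i-1..j]$, prove (a) by a direct occurrence count, and then obtain (b) and (c) by counting occurrences of prefixes against the border structure living on the edge $(\parent(\ell),\ell)$. First I would record what the hypothesis means structurally: since $\str(\ell)=w$ and $\apoint_{i-1,j}$ lies on $(\parent(\ell),\ell)$, the string $\LSuf_{i-1,j}$ is simultaneously a prefix and a suffix of $w$, i.e. a border; being the longest repeating suffix it is the longest proper border, so $w$ has smallest period $p=(j-i+2)-\lambda$, where $\lambda=|\LSuf_{i-1,j}|$, and $\apoint_{i-1,j}=(\ell,h)$ with $h=p$. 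The dictionary I use throughout is that the implicit suffix nodes on $(\parent(\ell),\ell)$ are exactly the proper borders of $w$ of length greater than $\depth(\parent(\ell))$, ordered by length.

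For (a) I would count $\numocc_{T[i-1..j]}(\LSuf_{i-1,j})$ via the principle behind Observation~\ref{obs:tp_occ}: the occurrences of a suffix equal the leaves below its locus plus the implicit suffix nodes at or below it. Here $\locus(\LSuf_{i-1,j})$ sits on the edge into the single leaf $\ell$, and no proper border is longer than $\LSuf_{i-1,j}$, so there is exactly one leaf below and exactly one implicit suffix node (at the locus itself); this gives $\numocc=2$, with the two occurrences at positions $i-1$ and $j-\lambda+1$. Hence $\LSuf_{i-1,j}$ is quasi-unique and $\SPref_{i-1,j}$ is a prefix of it; such a prefix occurs at $i-1$ and at $j-\lambda+1$, and quasi-uniqueness forces $\occ_{T[i-1..j]}(\SPref_{i-1,j})=\{i-1,\,j-\lambda+1\}$, which is (a).

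For (b) and (c) I would apply the same count to a general prefix $P_m=T[i-1..i-2+m]$ with $m>\depth(\parent(\ell))$: its locus lies on $(\parent(\ell),\ell)$, so $\numocc(P_m)=1+\#\{\text{borders }\beta\text{ of }w:\ m\le\beta\le\lambda\}$. Thus $\numocc(P_m)\le 2$ iff at most one border has length $\ge m$, i.e. iff $m$ exceeds the second longest border $\mu$; combined with the fact that $\str(\parent(\ell))$ is branching and hence occurs at least three times, monotonicity of occurrence counts in the prefix length pins $\SPref_{i-1,j}$ to the shortest admissible prefix. In case (b) there is a single implicit suffix node, so no border lies in $(\depth(\parent(\ell)),\lambda)$, i.e. $\mu\le\depth(\parent(\ell))$, and the argument yields $|\SPref_{i-1,j}|=\depth(\parent(\ell))+1$, which is statement (b).

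In case (c) there are at least two implicit suffix nodes, so $\mu>\depth(\parent(\ell))$ and the count gives $\SPref_{i-1,j}=P_{\mu+1}$. What remains, and what I expect to be the main obstacle, is to show that this second longest on-edge border equals $\lambda-p=(j-i+2)-2h$. The prefix of length $\lambda-p$ is always a border — applying the period $p$ twice shows it equals the length-$(\lambda-p)$ suffix of $w$ — so the real content is that \emph{no} border of length strictly between $\lambda-p$ and $\lambda$ survives on the edge. The hard part is proving that any such ``gap'' border is right-branching: shifting its suffix occurrence back by $p$ produces an occurrence whose following character must differ from that of the prefix, for otherwise minimality of $p$ (via Fine and Wilf) would be contradicted; this forces $\parent(\ell)$ below it, contradicting case (c). I would organize this as the promised ``suffix chain coming back'', namely that following suffix links from $\apoint_{i-1,j}$ returns to the edge $(\parent(\ell),\ell)$ exactly at height $2h$. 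Granting $\mu=\lambda-p$, statement (c) follows, and since $\mu>\depth(\parent(\ell))\ge 0$ gives $\lambda>p$ and hence $\lambda>\lfloor(j-i+2)/2\rfloor$, with $\SPref_{i-1,j}=T[i-1..j-2h+1]$.
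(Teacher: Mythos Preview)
Your treatment of (a) and (b) is correct and matches the paper's argument (you actually supply more detail than the paper, which simply says ``it is clear'' for (b)). The interesting divergence is in (c). The paper does \emph{not} go through a right-branching/Fine--Wilf argument at all. It argues directly: the second lowest implicit suffix node $x$ has exactly three occurrences (one leaf plus two implicit suffix nodes at or below it); since $x$ is a border of $\LSuf_{i-1,j}$ it occurs as prefix and as suffix of each of the two occurrences of $\LSuf_{i-1,j}$, giving four candidate positions; the only possible coincidence among these four is ``suffix of the first occurrence $=$ prefix of the second'', which forces the two occurrences of $\LSuf_{i-1,j}$ to overlap and pins $|x|$ to the overlap length $2L-|w|=L-p$. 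This yields $h'=2h$ and the formula for $\SPref_{i-1,j}$ in one stroke.

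Your route---show that any border with length in $(\lambda-p,\lambda)$ is right-branching, hence cannot sit on the leaf edge---reaches the same conclusion but the step you flag as ``the hard part'' is genuinely under-justified as written. The Fine--Wilf lemma alone does not show that the characters following the prefix occurrence and the shifted-back occurrence differ: in the regime $|w|\le 3p-3$ the Fine--Wilf hypothesis $p+q\le|w|+\gcd(p,q)$ fails, so you cannot conclude a smaller period directly. What does work (and is really the paper's overlap argument in disguise) is this: for a gap border $b$ with $q=|w|-|b|\in(p,2p)$, the suffix of length $|b|+p>\lambda$ has $b$ as a prefix but is \emph{not} a prefix of $w$; hence it diverges from $w$ at some depth $m\ge|b|$, producing an explicit internal node at depth $m$ on the path to $\ell$, so $\depth(\parent(\ell))\ge m\ge|b|$ and $b$ is off the leaf edge. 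Note this shows $b$ is off the edge, not necessarily that $b$ itself is right-branching. If you want to keep your periodicity framing, replace the Fine--Wilf hand-wave with this divergence argument; otherwise, the paper's three-occurrence count is shorter and avoids the case analysis entirely.
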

\begin{proof}
  Let $\apoint_{i-1,j} = (\ell, h)$ and $L =|\LSuf_{i-1,j}|$.
  \begin{itemize}
    \item[\normalfont{(a)}]
      Since $\apoint_{i-1,j}$ is on the edge $(\parent(\ell), \ell)$,
      $\SPref_{i-1, j}$ is a prefix of $\LSuf_{i-1,j}$, and
      $\numocc_{T[i-1..j]}(\LSuf_{i-1,j}) = \numocc_{T[i-1..j]}(\SPref_{i-1,j}) = 2$.
      Therefore, we obtain that $\occ_{T[i-1..j]}(\SPref_{i-1, j}) = \occ_{T[i-1..j]}(\LSuf_{i-1,j}) = \{i-1, j-L+1\}$.
    \item[\normalfont{(b)}]
      In this case, it is clear that $\SPref_{i-1,j} = T[i-1.. i-1+\depth(\parent(\ell))]$.
    \item[\normalfont{(c)}]
      Let $(\ell, h')$ be the locus of the implicit suffix node
      which is the lowest on the edge $(\parent(\ell),\ell)$
      except $\apoint_{i-1,j}$.
      Also, let $x$ be the string corresponding to the locus $(\ell, h')$.
      In this case, $x$ occurs exactly three times in $T[i-1..j]$.
      Also,
      $x$ is the longest border of $\LSuf_{i-1,j}$.
      Assume on the contrary that $L \le \lfloor (j-i+2)/2 \rfloor$.
      Then, two occurrences of $\LSuf_{i-1,j}$ in $T[i-1..j]$ are not overlapping, and thus
      $\numocc_{T[i-1..j]}(x) \ge 2 \times \numocc_{T[i-1..j]}(\LSuf_{i-1,j}) = 4$,
      it is a contradiction.
      Therefore, $L > \lfloor (j-i+2)/2 \rfloor$~(see Fig.~\ref{fig:SP} in Appendix~\ref{sec:additional_fig}).

      Next, we consider a relation between $h$ and $h'$.
      By the definition, $h = |T[i-1..j]| - L = j - i + 2 - L$.
      Since $L > \lfloor (j-i+2)/2 \rfloor$,
      $x$ matches the intersection of two occurrences of $\LSuf_{i-1,j}$,
      i.e. $x = T[j-L+1..i+L-2]$.
      Thus, $h' = |T[i-1..j]| - |x| = j-i+2-(2L-j+i-2) = 2(j-i+2-L) = 2h$.
      Therefore $\SPref_{i-1,j} = T[i-1..j-h'+1] = T[i-1.. j-2h+1]$.
      \qed
  \end{itemize}
\end{proof}

In Case B, it is clear that $\SPref_{i-1, j} = T[i-1.. i-1+\depth(p)]$
since $\str(p)$ occurs at least three times in $T[i-1..j]$~(see Fig.~\ref{fig:SP_cases} in Appendix~\ref{sec:additional_fig}).

For Case C, the next lemma holds:
\begin{lemma}\label{lem:two_children}
  Given $\STree_{T[i-1..j]}$ and $\apoint_{i-1, j}$.
  Let $\ell$ be the leaf corresponding to $T[i-1..j]$,
  $p = \parent(\ell)$, and $q = \parent(p)$.
If $\subtree(p)$ has exactly two leaves and
  there are no implicit suffix nodes on any edges in $\subtree(p)$,
  then it can be determined in $O(1)$ time
  whether there is an implicit suffix node on $(q, p)$.
If such an implicit node exists, then
  the locus of the lowest implicit suffix node on $(q, p)$ can be computed in $O(1)$ time.
\end{lemma}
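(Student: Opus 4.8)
The plan is to first turn the geometric question into a statement about borders of $T[i-1..j]$, and then to read off the answer from the active points that the algorithm already maintains. Along any edge of $\STree_{T[i-1..j]}$ the number of occurrences of the spelled string stays constant except that it drops by one immediately below each implicit suffix node, since the suffix occurrence witnessed by such a node cannot be extended downward. Consequently the implicit suffix nodes on the edge $(q,p)$ are exactly the loci of the repeating suffixes $s$ of $T[i-1..j]$ with $\hed(s)=p$, i.e.\ the repeating suffixes that are simultaneously prefixes of $\str(p)$; equivalently, they are the borders of $T[i-1..j]$ whose length lies strictly between $\depth(q)$ and $\depth(p)$. Hence the \emph{lowest} such node is the \emph{longest} repeating suffix that is a prefix of $\str(p)$, and since any border shorter than $\depth(p)$ is a prefix of $\str(p)$ it occurs at both leaf-starting positions below $p$ and once more as a suffix, so it occurs at least three times.

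First I would exploit the maintained tertiary active point $\tpoint_{i-1,j}$, the locus of the longest suffix of $T[i-1..j]$ occurring at least three times. The claim I aim for is that, under the hypotheses that $\subtree(p)$ has exactly two leaves and carries no implicit suffix node, an implicit suffix node exists on $(q,p)$ if and only if $\tpoint_{i-1,j}=(u,h)$ lies strictly inside that edge, i.e.\ $u=p$ and $h\ge 1$, and in that case $\tpoint_{i-1,j}$ is itself the lowest such node. The easy direction is immediate: if $\tpoint_{i-1,j}$ falls strictly on $(q,p)$ then it is a prefix of $\str(p)$, hence a border, hence a repeating suffix on the edge; being the longest suffix with at least three occurrences it dominates every other border of length below $\depth(p)$ and is therefore the lowest implicit suffix node there. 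Reading $u$ and $h$ of $\tpoint_{i-1,j}$ and comparing with $p$ and $\depth(p)$ costs $O(1)$, yielding both the decision and the locus.

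The hard part will be the converse: I must show that the tertiary active point does not overshoot onto a different branch, i.e.\ rule out a repeating suffix that occurs at least three times, is longer than the sought border, yet is not a prefix of $\str(p)$ and thus hides off the root-to-leaf path through $p$. This is precisely where the two structural hypotheses enter. Because $\subtree(p)$ has exactly two leaves, $\str(p)$ is the longest repeating prefix and (when $p$ is not a suffix node) occurs only at the two leaf-starting positions, so any extra long repetition would force a third internal occurrence of $\str(p)$; and because $\subtree(p)$ contains no implicit suffix node, each candidate border has a single admissible continuation that keeps it on the edge $(q,p)$. I would convert this into a contradiction: a longer off-path suffix occurring three times, together with the border already present on $(q,p)$, propagates an additional internal occurrence of $\str(p)$, contradicting its occurring exactly twice.

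Finally I would handle the degenerate case in which $\str(p)$ is itself a suffix of $T[i-1..j]$, so that $p$ is an explicit suffix node and $\str(p)$ occurs three times. Here $\tpoint_{i-1,j}$ may sit exactly at $p$ (with $h=0$) rather than strictly inside $(q,p)$, so the generic test must be supplemented. In this sub-case I would inspect the next shorter maintained suffix locus, equivalently follow one suffix link from $p$ or consult $\spoint_{i-1,j}$, to decide in $O(1)$ whether a strictly shorter border still lands on $(q,p)$, and if so recover its locus on $(q,p)$. Combining the generic comparison against $\tpoint_{i-1,j}$ with this boundary check gives the claimed constant-time decision and, when positive, the locus of the lowest implicit suffix node on $(q,p)$.
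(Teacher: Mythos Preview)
Your proof rests on the tertiary active point $\tpoint_{i-1,j}$, but the lemma only grants $\STree_{T[i-1..j]}$ and $\apoint_{i-1,j}$; the paper's argument works solely from these. It proceeds by a direct case analysis on $\apoint_{i-1,j}=(u,h)$: if $h=0$ there is no implicit suffix node anywhere; if $u=p$ then $\apoint$ itself is the lowest such node on $(q,p)$; if $u$ is an internal node other than $p$ one rules out any implicit suffix node on $(q,p)$ by a short occurrence count; and if $u$ is a leaf one tests whether $\depth(q)<|\LSuf_{i-1,j}|-(\start(\ell')-\start(u))<\depth(p)$ (with $\ell'$ the sibling of $\ell$) and, when this holds, returns the locus $(p,\depth(p)-|\LSuf_{i-1,j}|+\start(\ell')-\start(u))$, entirely parallel to Lemma~\ref{lem:leaf_has_implicit_node}. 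As a proof of the lemma \emph{as stated}, therefore, your plan has a genuine gap: it invokes data the hypotheses do not supply.

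Granting $\tpoint_{i-1,j}$, your core equivalence is in fact correct, though not quite for the reasons you give. The ``hard part'' succeeds because a longer off-path suffix $v$ with $\numocc(v)\ge 3$ would force the lowest border $b$ on $(q,p)$ to have at least four occurrences (the three copies of $b$ sitting as a proper suffix inside the occurrences of $v$, none of which can start at $i-1$, together with the prefix occurrence of $b$ at $i-1$), contradicting $\numocc(b)=3$; it does not, as you wrote, manufacture a third occurrence of $\str(p)$. And your ``degenerate case'' is vacuous under the hypotheses: if $\str(p)$ were a suffix while some border $b$ lay strictly on $(q,p)$, then $b$ would also be a border of $\str(p)$, so the copy of $b$ ending at position $i-2+\depth(p)$ is followed by the first label of edge $(p,\ell)$ and the copy ending at $\start(\ell')+\depth(p)-1$ by the first label of $(p,\ell')$, yet both must equal the unique right-extension $\str(p)[|b|+1]$ of $b$---impossible. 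Consequently the suffix-link/$\spoint$ patch you propose for that boundary is unnecessary, and as sketched (neither $\str(p)[2..]$ nor $\SSuf_{i-1,j}$ locates the next-shorter border on $(q,p)$) it is also unjustified.
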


We can design an algorithm for computing $\SPref_{i-1,j}$ by using the above lemmas,
as follows.
Let $\ell$ be the leaf corresponding to $T[i-1..j]$,
$p = \parent(\ell)$ and $q = \parent(p)$.
\begin{description}
  \item[In Case A.]
    $\SPref_{i-1, j}$ is computed by Lemma~\ref{lem:suffix_chain_come_back}.
  \item[In Case B.]
    $\SPref_{i-1, j} = T[i-1.. i-1+\depth(p)]$ and $\numocc_{T[i-1..j]}(\SPref_{i-1,j}) = 1$.
  \item[In Case C.]
    We divide this case into some subcases by the existence of
    an implicit suffix node on edges $(p, \ell')$ and $(q, p)$
    where $\ell'$ is the sibling of $\ell$.
    We first determine the existence of an implicit suffix node
    on $(p, \ell')$~(by Lemma~\ref{lem:leaf_has_implicit_node}).
  \begin{itemize}
    \item If there is an implicit suffix node on $(p, \ell')$, then
      $\SPref_{i-1,j} = T[i-1..i-1 + \depth(p)]$ and $\numocc_{T[i-1..j]}(\SPref_{i-1,j}) = 1$.
    \item If there is no implicit suffix node on both $(p, \ell)$ and $(p, \ell')$,
      we can determine in constant time
      the existence of an implicit suffix node on $(q, p)$~(by Lemma~\ref{lem:two_children}).
If there is an implicit suffix node on $(q, p)$,
      $\SPref_{i-1,j} = T[i-1..\depth(p)-h+1]$ and $\occ_{T[i-1..j]}(\SPref_{i-1,j}) = \{i-1, \start(\ell')\}$.
      Otherwise,
      $\SPref_{i-1,j} = T[i-1.. \depth(q) + 1]$ and $\occ_{T[i.-1.j]}(\SPref_{i-1,j}) = \{i-1, \start(\ell')\}$.
  \end{itemize}
\end{description}

It follows from the above arguments in this subsection
that $\SPref_{i-1, j}$ can be computed in $O(1)$ time
by using the suffix tree and the (primary) active point.

\subsection{Detecting MUSs to be Added/Deleted}
By using the afore-mentioned lemmas in this section,
we can design an efficient algorithm for detecting MUSs to be added / deleted.
The details of our algorithm can be found in Appendix~\ref{sec:detect_MUS}.

The main result of this section is the following:
\begin{theorem}\label{thm:sliding_window_MUS}
  We can maintain the set of MUSs in a sliding window
  of length $d$ on a string $T$ of length $n$ over an alphabet of size $\sigma$,
  in a total of $O(n\log \sigma)$ time and $O(d)$ working space.
\end{theorem}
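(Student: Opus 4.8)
The plan is to establish Theorem~\ref{thm:sliding_window_MUS} by assembling the pieces developed throughout this section into a single amortized cost accounting, separating the maintenance of the data structures from the detection of MUS changes. First I would fix the overall control flow: we process the string left to right, growing the suffix tree with Ukkonen's algorithm until $j = d$, and thereafter performing, for each step, one append of $T[j+1]$ to the right followed by one deletion of the leftmost character $T[i-1]$, using the sliding-window suffix tree of~\cite{Larsson96,Senft2005suffix}. The space bound $O(d)$ is the most immediate part: the sliding-window suffix tree occupies $O(d)$ space since the window has width $d$, and the three active points $\apoint_{i,j}$, $\spoint_{i,j}$, $\tpoint_{i,j}$ are each a single locus, so maintaining them adds only $O(1)$ space; the set $\MUS(T[i..j])$ itself has $O(d)$ elements and is stored in $O(d)$ space.

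Next I would bound the time. I would invoke the results already proved in this section in two groups. For data-structure maintenance, the suffix tree is updated and $\apoint_{i,j+1}$ recomputed in amortized $O(\log\sigma)$ time per step by Ukkonen's algorithm and the sliding-window technique, and the tertiary and secondary active points $\tpoint_{i,j+1}$, $\spoint_{i,j+1}$ are relocated in amortized $O(\log\sigma)$ time by the suffix-link walk described above, where each candidate test $\numocc_{T[i..j+1]}(w) < 3$ costs $O(1)$ by Observation~\ref{obs:tp_occ} together with Lemma~\ref{lem:leaf_has_implicit_node}. The symmetric deletion step enjoys the same amortized $O(\log\sigma)$ bound, with the batch-update technique keeping edge-label pointers current in amortized $O(1)$. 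For the MUS bookkeeping, by Subsection~\ref{subsec:sqs} the string $\SPref_{i-1,j}$ is computed in $O(1)$ time per deletion using the suffix tree and the primary active point.

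Then I would turn to the detection of MUSs to be added or deleted. The combinatorial engine is Corollary~\ref{col:all_MUS_diff}, which guarantees that only $O(1)$ MUSs change per sliding step; concretely, Theorem~\ref{thm:diff_MUS_add_right} and Theorem~\ref{thm:diff_MUS_delete_left} cap the symmetric difference at $4$ per half-step. Combined with the explicit characterizations of the candidate intervals (Lemma~\ref{lem:deleted_MUS}, Lemma~\ref{lem:new_MUSs} for the append direction, and their symmetric counterparts for the deletion direction), each candidate MUS to be added or deleted is identified by a constant number of locus queries, each resolvable in $O(1)$ time given the maintained active points. I would cite the detailed procedure in Appendix~\ref{sec:detect_MUS} as carrying out these queries and updating $\MUS(\cdot)$ in $O(1)$ time per step. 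Summing the per-step costs over the $n - d + 1$ windows then yields the total $O(n\log\sigma)$ time.

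The main obstacle is the amortization argument for the suffix-link walks that relocate $\tpoint$ and $\spoint$: unlike the append-only active point of Ukkonen's algorithm, these loci must be maintained across both append and delete operations, so one must verify that the classic potential argument—each suffix-link traversal decreases string depth by at least one, while each step increases it by at most one—survives in the sliding-window setting where the window length stays fixed rather than growing. I would argue that the combined depth movements of all three active points over the whole run telescope to $O(n)$ traversals, each of $O(\log\sigma)$ cost for branching, which is exactly the bottleneck that must be checked carefully to justify the amortized $O(\log\sigma)$ claim rather than a naive per-step bound.
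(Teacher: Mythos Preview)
Your proposal is correct and matches the paper's approach: the theorem is stated as the culmination of Section~\ref{sec:algo_MUSs} without a separate proof block, and your assembly of the pieces---sliding-window suffix tree maintenance, amortized relocation of the three active points via suffix links with the $O(1)$ occurrence test of Observation~\ref{obs:tp_occ} and Lemma~\ref{lem:leaf_has_implicit_node}, $O(1)$-time computation of $\SPref_{i-1,j}$, and the $O(1)$ per-step MUS updates driven by Lemmas~\ref{lem:deleted_MUS}, \ref{lem:LS_is_MUS}, \ref{lem:new_MUSs} and their symmetric counterparts together with the $\StartToEnd$/$\EndToStart$ arrays of Appendix~\ref{sec:detect_MUS}---is exactly what the paper does. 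Your flagged ``main obstacle'' about the amortization of $\tpoint$ and $\spoint$ under deletions is a fair point of care, but the paper handles it the same way you propose (the suffix-link walks telescope because each active point's string depth increases by at most one per step and decreases by one per suffix-link traversal), so there is no divergence.
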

\begin{corollary}\label{col:online_MUS}
  There exists an \emph{online} algorithm to
  compute all MUSs in a string $T$
  of length $n$ over an alphabet of size $\sigma$
  in a total of $O(n\log \sigma)$ time with $O(n)$ working space.
\end{corollary}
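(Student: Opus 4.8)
The plan is to derive the corollary as the append-only specialization of Theorem~\ref{thm:sliding_window_MUS}. The online problem asks us to read $T$ from left to right and maintain $\MUS(T[1..j])$ for increasing $j = 1, \ldots, n$; this is precisely the sliding-window problem in which the window is the entire current prefix $T[1..j]$ and only the ``append a character to the right'' operation is ever performed, never the ``delete the leftmost character'' operation. Equivalently, I would instantiate the sliding-window algorithm with window width $d = n$: since the window width then equals the whole string length, the left end of the window never needs to advance, so every step is an append and no deletion step is ever triggered.

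Under this specialization the machinery of Theorem~\ref{thm:sliding_window_MUS} applies verbatim but becomes strictly simpler. First I would maintain $\STree_{T[1..j]}$ by Ukkonen's algorithm~\cite{Ukkonen1995}, which is already online and constructs the suffix tree of a length-$n$ string in $O(n\log\sigma)$ total time and $O(n)$ space. Crucially, because no leftmost character is ever deleted, I do not need the sliding-window suffix tree of~\cite{Larsson96,Senft2005suffix} nor the batch update of edge-label pointers; I only keep the three active points $\apoint_{1,j}$, $\spoint_{1,j}$, and $\tpoint_{1,j}$, each maintained in amortized $O(\log\sigma)$ time per append exactly as in the append case of the algorithm (the primary active point by Ukkonen, and $\tpoint$, $\spoint$ by the suffix-link walk together with Observation~\ref{obs:tp_occ} and Lemma~\ref{lem:leaf_has_implicit_node}).

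Next I would detect the MUS changes at each step using the append-only combinatorics of Section~\ref{sec:comb_MUSs}. By Theorem~\ref{thm:diff_MUS_add_right}, each append alters $\MUS(T[1..j])$ by only $O(1)$ intervals, and these are identified from $\SSuf_{1,j+1}$ and $\LSuf_{1,j+1}$ via Lemma~\ref{lem:deleted_MUS}, Lemma~\ref{lem:LS_is_MUS}, Lemma~\ref{lem:MUS_contains_SS}, and Lemma~\ref{lem:new_MUSs}, in amortized $O(\log\sigma)$ time. Summing over $j = 1, \ldots, n$ yields the claimed $O(n\log\sigma)$ total time. For the space bound, the suffix tree of $T$ has $O(n)$ nodes (with edge labels stored as position pairs into $T$), and at any moment the maintained set $\MUS(T[1..j])$ contains $O(n)$ intervals, so the working space is $O(n)$.

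The only point needing care --- rather than a genuine obstacle --- is confirming that the append half of the algorithm is self-contained, i.e. that none of the steps invoked on an append secretly rely on a preceding delete step or on the sliding-window suffix-tree representation. I would verify this by checking that the procedures for $\apoint$, $\tpoint$, $\spoint$ and for $\SSuf_{1,j+1}$ used in the append analysis reference only $\STree_{T[1..j]}$, its suffix links, and the three active points, all of which Ukkonen's construction supplies directly. Once this is in place the corollary follows immediately, the relaxation from $O(d)$ to $O(n)$ space being exactly what lets us drop the window-sliding suffix tree in favor of plain online Ukkonen construction.
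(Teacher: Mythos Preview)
Your proposal is correct and follows the same route the paper takes: the corollary is stated immediately after Theorem~\ref{thm:sliding_window_MUS} with no separate proof, because it is the $d=n$ specialization in which the left end of the window never advances and only append steps occur. Your elaboration that this lets one replace the sliding-window suffix tree by plain Ukkonen construction is accurate and is exactly the intended reading.
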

 \section{Combinatorial Results on MAWs in a Sliding Window} \label{sec:comb_MAWs}

\subsection{Changes to MAWs when Appending Character to the Right}
We consider the number of changes of MAWs when appending
$T[j+1]$ to $T[i..j]$.

For the number of deleted MAWs, the next lemma is known:
\begin{lemma}[\cite{Crochemore2017MAW}] \label{lem:extention_dec}
  For any $1 \le i \le j < n$,
  $|\MAW(T[i..j])\setminus \MAW(T[i..j+1])| = 1$.
\end{lemma}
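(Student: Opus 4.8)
The plan is to identify exactly which strings can leave $\MAW(T[i..j])$ when $T[j+1]$ is appended, and then show that precisely one of them qualifies. First I would use the fact that appending a character only increases occurrence counts: by Observation~\ref{obs:suffix_occ}, every substring of $T[i..j]$ retains at least its number of occurrences in $T[i..j+1]$. Hence, for any $w \in \MAW(T[i..j])$, conditions (B) and (C) — that $w[2..]$ and $w[..|w|-1]$ are present — continue to hold in $T[i..j+1]$. Therefore the only way $w$ can drop out of the MAW set is for condition (A) to fail, i.e.\ for $w$ to become present in $T[i..j+1]$ while being absent from $T[i..j]$. A string absent from $T[i..j]$ but present in $T[i..j+1]$ must have an occurrence using the new position $j+1$, and since occurrences occupy contiguous positions this forces $w$ to be a suffix of $T[i..j+1]$. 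So every deleted MAW has the form $z\alpha$ with $\alpha = T[j+1]$ and $z$ a suffix of $T[i..j]$.

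Next I would pin down the unique candidate. Since $z = w[..|w|-1]$ ranges over suffixes of $T[i..j]$, it is automatically present in $T[i..j]$, so condition (C) comes for free. Among the suffixes of $T[i..j+1]$ ending in $\alpha$, presence in $T[i..j]$ is preserved under passing to a shorter such suffix (a shorter one is a suffix of a longer one); equivalently, absence propagates to longer ones. Thus there is a well-defined shortest suffix $w^{*} = z_0\alpha$ of $T[i..j+1]$ that is absent from $T[i..j]$, and it exists because $T[i..j+1]$ itself is too long to occur in $T[i..j]$. I would then verify $w^{*} \in \MAW(T[i..j])$: it is absent by choice; $w^{*}[..|w^{*}|-1] = z_0$ is present as a suffix; and $w^{*}[2..] = z_0[2..]\alpha$ is a strictly shorter suffix ending in $\alpha$, hence present by the minimality of $z_0$. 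Since $w^{*}$ is a suffix of $T[i..j+1]$ it is present there, so $w^{*} \notin \MAW(T[i..j+1])$ and is indeed deleted.

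Finally I would establish that $w^{*}$ is the only deleted MAW. Let $z\alpha$ be any deleted MAW with $z \ne z_0$. If $z$ is shorter than $z_0$, then $z\alpha$ is present in $T[i..j]$ by minimality of $z_0$, contradicting condition (A). If $z$ is longer than $z_0$, then $|z[2..]| \ge |z_0|$, and since $z[2..]$ and $z_0$ are both suffixes of $T[i..j]$, the shorter $z_0$ is a suffix of $z[2..]$; hence $z_0\alpha$ is a suffix of $(z\alpha)[2..]$, so $(z\alpha)[2..]$ contains the absent string $w^{*}$ and is itself absent from $T[i..j]$, contradicting condition (B). Either way $z\alpha$ is not a MAW, so $|\MAW(T[i..j]) \setminus \MAW(T[i..j+1])| = 1$.

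The main obstacle I anticipate is the suffix-comparison bookkeeping in the last step, in particular the boundary case $|z| = |z_0| + 1$ (where $z[2..] = z_0$ exactly) and the degenerate case $z_0 = \varepsilon$ (a brand-new character $\alpha$, so $w^{*} = \alpha$ and $w^{*}[2..] = \varepsilon$). Both, however, reduce cleanly to the single observation that any two suffixes of $T[i..j]$ are comparable under the suffix relation, which is what makes the monotonicity argument and the uniqueness go through.
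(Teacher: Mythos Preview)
Your argument is correct and complete. The identification of the deleted MAW with the shortest suffix of $T[i..j+1]$ that is absent from $T[i..j]$ is exactly right, and your uniqueness case split (shorter suffixes are present, longer ones fail condition~(B) because $w^{*}$ sits inside their tail) is clean; the boundary cases you flag ($|z|=|z_0|+1$ and $z_0=\varepsilon$) are indeed absorbed by the suffix-comparability observation.

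As for comparison with the paper: there is nothing to compare against. The paper does not prove Lemma~\ref{lem:extention_dec}; it quotes it verbatim from Crochemore et al.~\cite{Crochemore2017MAW} and uses it as a black box (e.g.\ in the proof of Lemma~\ref{lem:extention_total}). Your write-up therefore supplies a self-contained proof where the paper has none, and it aligns with the standard argument one finds in the cited source.
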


Next, we consider the number of added MAWs.
We classify each MAW $w$ in $\MAW(T[i..j+1]) \setminus \MAW(T[i..j])$
to the following three types\footnote{At least one of $w[2..]$ and $w[..|w|-1]$ is absent from $T[i..j]$,
because $w \not\in \MAW(T[i..j])$.}~(see Fig.~\ref{fig:MAW_3types} in Appendix~\ref{sec:additional_fig}).
Let $\sigma'$ be the number of distinct characters occurring in $T[i..j]$.
\begin{description}
  \item[Type 1.]
    $w[2..]$ and $w[..|w|-1]$ are both absent from $T[i..j]$.
  \item[Type 2.]
    $w[2..]$ is present in $T[i..j]$ and $w[..|w|-1]$ is absent from $T[i..j]$.
  \item[Type 3.]
    $w[2..]$ is absent from $T[i..j]$ and $w[..|w|-1]$ is present in $T[i..j]$.
\end{description}
We denote by $\typeone$, $\typetwo$, and $\typethree$ the set of MAWs of
Type 1, Type 2 and Type 3, respectively.
The next lemma holds:
\begin{lemma}\label{lem:extention_inc}
  For any $1 \le i \le j < n$,
  $|\MAW(T[i..j+1]) \setminus \MAW(T[i..j])| \le \sigma' + d$, where
  $d = j - i + 1$.
\end{lemma}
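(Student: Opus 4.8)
The target bound splits the newly created MAWs into the three types $\typeone$, $\typetwo$, $\typethree$ defined just above the statement, so the proof will bound each class separately and add the counts. By a footnote already recorded in the excerpt, every $w \in \MAW(T[i..j+1]) \setminus \MAW(T[i..j])$ has at least one of $w[2..]$, $w[..|w|-1]$ absent from $T[i..j]$, so these three types exhaust the set and the plan is simply $|\typeone| + |\typetwo| + |\typethree| \le \sigma' + d$. The overall target splits naturally as $\sigma'$ for the ``boundary'' contributions and $d$ for the ``interior'' contributions, so I would aim to absorb $\typeone$ together with $\typetwo$ into the $\sigma'$ term and bound $\typethree$ by $d$ (or some such division), then verify the split adds up.

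\textbf{Type 1.} First I would argue that $\typeone$ is small. If both $w[2..]$ and $w[..|w|-1]$ are absent from $T[i..j]$ but $w$ is present in (indeed a MAW of) $T[i..j+1]$, then appending the single character $T[j+1]$ must have created occurrences of both proper halves. Since only suffixes of $T[i..j+1]$ gain an occurrence when we append (Observation~\ref{obs:suffix_occ} used contrapositively: a string absent from $T[i..j]$ that becomes present in $T[i..j+1]$ must be a suffix of $T[i..j+1]$), both $w[2..]$ and $w[..|w|-1]$ must be suffixes of $T[i..j+1]$. But $w[..|w|-1]$ ends at position $j$, not $j+1$, so it cannot be a suffix unless $|w|=1$; I expect this forces $|\typeone| \le 1$, with the single possible member being the length-1 string $T[j+1]$ exactly when $T[j+1]$ is a brand-new character. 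This is the cleanest case and I would dispatch it first.

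\textbf{Type 2 and the $\sigma'$ term.} For $\typetwo$, where $w[..|w|-1]$ is absent but $w[2..]$ is present in $T[i..j]$, the same suffix argument forces $w[..|w|-1]$ to be a suffix of $T[i..j+1]$. Since $w[..|w|-1]$ ends at $j+1$, and $w$ extends it by one character on the right while $w$ itself is \emph{absent} from $T[i..j+1]$, the prefix $w[..|w|-1]$ is a suffix of $T[i..j+1]$ whose right-extension by $w[|w|]$ does not occur. The key counting idea is that each such $w$ is determined by its last character, or by where its prefix attaches in the suffix structure, so distinct Type-2 MAWs must differ in a way controlled by the alphabet; I expect the bound $|\typetwo| + |\typeone| \le \sigma'$ (or $\le \sigma' + 1$ with Type~1 folded in) to emerge from charging each MAW to a distinct character of $\Sigma$ appearing in $T[i..j]$, since a MAW's final character must be one that already occurs (else its proper prefix of length one would be the absent part). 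The delicate point is getting the additive constants to land exactly at $\sigma'$ rather than $\sigma' + O(1)$.

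\textbf{Type 3 and the $d$ term.} For $\typethree$, where $w[2..]$ is absent from $T[i..j]$ but $w[..|w|-1]$ is present, the suffix argument applies to $w[2..]$: it must be a suffix of $T[i..j+1]$, hence $w[2..] = T[k..j+1]$ for some $k$. Thus each Type-3 MAW is obtained by prepending a single character to one of the suffixes $T[k..j+1]$, so there are at most $j+1-i+1$ candidate suffixes; after accounting for which prepended character actually yields a MAW, I expect at most one MAW per starting position, giving $|\typethree| \le d$. \textbf{The main obstacle} will be this Type-3 count: I must show that for each suffix-length there is at most one valid prepended character producing a genuine MAW (so that the count is $d$ and not $\sigma d$), which is exactly the sharpening over Crochemore et al.\ that the introduction advertises. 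The crux is that the prepended character is forced: if $cw[2..]$ is to be a MAW, $c$ must be the character such that $cw[3..]$ (a proper substring) occurs while $cw[2..]$ does not, and only one choice of $c$ can make $w[2..]$ itself newly absent-but-minimal at this position. Once each type bound is established, summing gives $|\typeone| + |\typetwo| + |\typethree| \le (\sigma' ) + d$ after reconciling the Type~1/Type~2 overlap, matching the claimed $\sigma' + d$.
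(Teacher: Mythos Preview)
Your Type-3 argument contains a genuine error. You propose to map each $w \in \typethree$ to the suffix $w[2..]$ of $T[i..j+1]$ and claim that ``for each suffix-length there is at most one valid prepended character producing a genuine MAW.'' This is false. Take $T[i..j] = \mathtt{abc}$ and $T[j+1] = \mathtt{d}$: then $\mathtt{ad}$ and $\mathtt{bd}$ are both Type-3 MAWs of $T[i..j+1]=\mathtt{abcd}$ sharing $w[2..] = \mathtt{d}$. More generally, the condition ``$c\cdot w[2..|w|-1]$ occurs in $T[i..j]$ while $c\cdot w[2..]$ is absent from $T[i..j+1]$'' can be satisfied by several characters $c$ simultaneously, so your proposed map is not an injection and the argument collapses.

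The paper's proof uses a different map: it sends $w \in \typethree$ to the ending position in $T[i..j]$ of the \emph{leftmost} occurrence of $w[..|w|-1]$. Since $w[..|w|-1]$ is present in $T[i..j]$ this is well-defined; since $w[|w|] = T[j+1]$ and $w$ is absent from $T[i..j+1]$, no occurrence of $w[..|w|-1]$ ends at position $j$, so the range is contained in $[i,j-1]$. Injectivity holds because if $w_1, w_2 \in \typethree$ map to the same position with $|w_1| > |w_2|$, then $w_2$ is a proper suffix of $w_1$ (their prefixes $w_m[..|w_m|-1]$ end at the same position in $T$ and both $w_m$ end in $T[j+1]$), contradicting that the MAW $w_2$ is absent from $T[i..j+1]$ while every proper substring of the MAW $w_1$ is present. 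This yields $|\typethree| \le d-1$, not $d$.

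This also forces a different split of the total: the correct accounting is $|\typeone| \le 1$, $|\typetwo| \le \sigma'$, $|\typethree| \le d-1$, summing to $\sigma'+d$. Your hoped-for bound $|\typeone|+|\typetwo| \le \sigma'$ fails in general; the paper's tight example has $T[j+1]$ a fresh character with $|\typeone|=1$ and $|\typetwo|=\sigma'$ simultaneously, so one really needs the sharper $d-1$ for Type~3. (Incidentally, your Type-1 reasoning that ``$w[..|w|-1]$ ends at position $j$'' conflates the string with an occurrence; the right argument is that $w[2..]$ and $w[..|w|-1]$ are suffixes of $T[i..j+1]$ of equal length, hence identical, forcing $w=T[j+1]^{|w|}$ with $|w|$ then uniquely determined by minimality.)
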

\begin{proof}
  In \cite{Crochemore2017MAW}, it is shown that $|\typeone| \le 1$.
  It is also shown in \cite{Crochemore2017MAW} that
  the last characters of all MAWs in $\typetwo$ are all different.
  Furthermore, by the definition of $\typetwo$,
  the last character of each MAW in $\typetwo$ occurs in $T[i..j]$.
  Thus, $|\typetwo| \le \sigma'$.
  In the rest of the proof, we show that the number of MAWs of Type 3 is at most $d-1$.
  We show that there is an injection $f: \typethree \rightarrow [i, j-1]$
  that maps each MAW $w \in \typethree$ to
  the ending position of the leftmost occurrence of $w[..|w|-1]$ in $T[i..j]$.
By the definition of $\typethree$,
  $w$ is absent from $T[i..j+1]$ and $w[|w|] = T[j+1]$
  for each $w \in \typethree$,
  and thus, no occurrence of $w[..|w|-1]$ in $T[i..j]$
  ends at position $j$.
  Hence, the range of $f$ does not contain the position $j$, i.e. it is $[i..j-1]$.
Next, for the sake of contradiction,
  we assume that $f$ is not an injection,
  i.e. there are two distinct MAWs $w_1, w_2 \in \typethree$ such that $f(w_1) = f(w_2)$.
  W.l.o.g., assume $|w_1| \ge |w_2|$.
  Since $w_1[|w_1|] = w_2[|w_2|] = T[j+1]$ and $f(w_1) = f(w_2)$, $w_2$ is a suffix of $w_1$.
  If $|w_1| = |w_2|$, then $w_1 = w_2$ and it contradicts with $w_1 \neq w_2$.
  If $|w_1| > |w_2|$, then $w_2$ is a proper suffix of $w_1$, and it contradicts
  with the fact that $w_2$ is absent from $T[i..j+1]$~(see Fig.~\ref{fig:MAW_Type3} in Appendix~\ref{sec:additional_fig}).
  Therefore, $f$ is an injection and $|\typethree| \le j-1-i+1 = d-1$.
  \qed
\end{proof}

The next lemma follows from Lemma~\ref{lem:extention_dec} and Lemma~\ref{lem:extention_inc}.

\begin{lemma}\label{lem:extention_total}
  For any $1 \le i \le j < n$,
  $|\MAW(T[i..j+1]) \bigtriangleup \MAW(T[i..j])| \le \sigma' + d + 1$, where
  $d = j-i+1$.
  The upper bound is tight
  when $\sigma \ge 3$ and $\sigma' + 1 \le \sigma$.
\end{lemma}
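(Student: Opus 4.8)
The plan is to derive the symmetric-difference bound directly from the two preceding lemmas and then construct an explicit string witnessing tightness. For the upper bound, I would combine Lemma~\ref{lem:extention_dec}, which gives $|\MAW(T[i..j])\setminus \MAW(T[i..j+1])| = 1$, with Lemma~\ref{lem:extention_inc}, which gives $|\MAW(T[i..j+1])\setminus \MAW(T[i..j])| \le \sigma' + d$. Since the symmetric difference is the disjoint union of these two set differences, I immediately obtain
\[
  |\MAW(T[i..j+1]) \bigtriangleup \MAW(T[i..j])| \le (\sigma' + d) + 1 = \sigma' + d + 1.
\]
This part is purely combinatorial bookkeeping and should be a single short sentence.

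The substantive work is establishing tightness, so the main obstacle will be designing a family of strings that simultaneously saturates all three contributions: exactly one deleted MAW, a full $\sigma'$ MAWs of Type~2, and $d-1$ MAWs of Type~3. To force $|\typetwo| = \sigma'$, I want every distinct character $c$ occurring in $T[i..j]$ to yield a Type~2 MAW ending in $c$; the argument in Lemma~\ref{lem:extention_inc} already tells me these MAWs have pairwise distinct last characters, so the construction should arrange that each such character can be appended to $T[j+1]\cdot(\text{something present})$ to create a fresh absent word whose suffix is present but whose prefix is absent. To force $|\typethree| = d-1$, I would exploit the injection $f$ from the proof of Lemma~\ref{lem:extention_inc}: I need the leftmost occurrences of the relevant length-$(|w|-1)$ prefixes to end at all $d-1$ positions in $[i..j-1]$, meaning the Type~3 MAWs are essentially nested suffixes of increasing length, each ending with the newly appended character $\alpha = T[j+1]$ and each having its prefix present exactly where the injection demands. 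A natural candidate is a window built from a near-Lyndon or Fibonacci-like pattern over the first $\sigma'$ characters followed by a fresh character $\alpha \notin T[i..j]$, which is why the hypothesis $\sigma' + 1 \le \sigma$ appears: we need one spare alphabet symbol to serve as $\alpha$.

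Concretely, I expect the witness to look like a string over $\{a_1,\dots,a_{\sigma'}\}$ whose set of length-$2$ and length-$3$ factors is rich enough that appending $\alpha$ creates a long chain of absent words differing only by how far left their present prefix extends, giving the $d-1$ Type~3 MAWs, while the single unavoidable deletion (Lemma~\ref{lem:extention_dec}) and the $\sigma'$ character-indexed Type~2 MAWs fill out the remaining count. I would verify the construction by directly listing $\MAW(T[i..j])$ and $\MAW(T[i..j+1])$ for the chosen pattern and checking that the two set differences have sizes exactly $\sigma' + d$ and $1$. The hard part will be exhibiting a single pattern that is simultaneously extremal in all three type-counts rather than in each separately, and confirming that the Type~2 and Type~3 contributions do not collide or overlap; I anticipate this requires a careful, case-checked example (plausibly deferred to an appendix, consistent with the paper's practice of relegating omitted proofs) rather than a slick one-line argument.
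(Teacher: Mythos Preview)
Your upper-bound argument is exactly the paper's: add the two set-difference bounds from Lemmas~\ref{lem:extention_dec} and~\ref{lem:extention_inc}.

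For tightness, your high-level strategy is right---pick $\alpha = T[j+1]$ fresh (hence the hypothesis $\sigma'+1\le\sigma$) and build the window so that $|\typeone|=1$, $|\typetwo|=\sigma'$, and $|\typethree|=d-1$ simultaneously---but you are substantially overestimating the difficulty. There is no need for Lyndon or Fibonacci structure, and no delicate case analysis is required. The paper's witness is the elementary string
\[
  z = a_1 a_2 \cdots a_{\sigma'-1}\, a_{\sigma'}^{\,d-\sigma'+1}, \qquad \alpha = a_{\sigma'+1},
\]
for which one can list the symmetric difference explicitly:
\[
  \MAW(z)\setminus\MAW(z\alpha)=\{\alpha\},\quad
  \MAW(z\alpha)\setminus\MAW(z)=\{\alpha^2\}\cup\{\alpha a_i\}_{i=1}^{\sigma'}\cup\{a_i\alpha\}_{i=1}^{\sigma'-1}\cup\{a_{\sigma'-1}a_{\sigma'}^{\,e}\alpha\}_{e=1}^{d-\sigma'}.
\]
These sets have sizes $1$ and $1+\sigma'+(\sigma'-1)+(d-\sigma')=\sigma'+d$, and the verification that each listed word is indeed a MAW of the appropriate string is immediate. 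Note in particular that the Type~2 MAWs are all of length~$2$ (just $\alpha a_i$), not the longer $\alpha\cdot(\text{something})\cdot c$ shape you sketched; and the Type~3 MAWs split into the length-$2$ words $a_i\alpha$ for $i<\sigma'$ and the longer words $a_{\sigma'-1}a_{\sigma'}^{\,e}\alpha$, whose prefixes' leftmost occurrences end at each of the positions $1,\ldots,d-1$, saturating your injection $f$ on the nose. So your plan would succeed, but the concrete example is a one-liner rather than an appendix-length case check.
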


\subsection{Changes to MAWs when Deleting the Leftmost Character}
Next, we analyze the number of changes of MAWs when deleting the leftmost character from a string.
By a symmetric argument to Lemma~\ref{lem:extention_total}, we obtain the next lemma:
\begin{lemma}\label{lem:reduction_total}
  For any $1 < i \le j \le n$,
  $|\MAW(T[i.. j]) \bigtriangleup \MAW(T[i-1.. j])| \le \sigma' + d + 1$ where
  $d = j-i+1$ and
  $\sigma'$ is the number of distinct characters occurs in $T[i.. j]$.
  Also, the upper bound is tight
  when $\sigma \ge 3$ and $\sigma' + 1 \le \sigma$.
\end{lemma}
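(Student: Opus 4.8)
The plan is to establish Lemma~\ref{lem:reduction_total} by a symmetric mirroring of the argument used to prove Lemma~\ref{lem:extention_total}. Since deleting the leftmost character $T[i-1]$ from $T[i-1..j]$ to obtain $T[i..j]$ is the left-right reflection of appending a character on the right, I would first articulate the symmetry precisely: reversing the string converts the operation ``delete the leftmost character'' into ``delete the rightmost character'', and more usefully, it exchanges the roles of prefixes and suffixes. Under this reflection, a MAW $w$ of $T[i-1..j]$ corresponds to the reversal of a MAW of the reversed substring, the condition ``$w[2..]$ is present'' trades places with ``$w[..|w|-1]$ is present'', and the three types $\typeone, \typetwo, \typethree$ of added MAWs map onto their mirror counterparts. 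The key structural facts I would reuse are that at most one MAW is deleted (the mirror of Lemma~\ref{lem:extention_dec}) and that the number of added MAWs splits across three types with bounds $|\typeone| \le 1$, $|\typetwo| \le \sigma'$, and $|\typethree| \le d-1$.

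First I would state the reflected version of Lemma~\ref{lem:extention_dec}, namely that exactly one MAW is removed when the leftmost character is deleted, justified by the prefix/suffix symmetry of the MAW-deletion analysis in \cite{Crochemore2017MAW}. Then I would reclassify the newly added MAWs in $\MAW(T[i..j])\setminus\MAW(T[i-1..j])$ into three mirror types according to whether $w[2..]$ and $w[..|w|-1]$ are present in or absent from $T[i-1..j]$. The Type~1 bound of one carries over verbatim. For the mirror of Type~2, the argument that ``the last characters of all Type~2 MAWs are distinct'' becomes ``the first characters of the mirror-Type-2 MAWs are distinct'', and since each such first character must occur in $T[i..j]$, this yields a bound of $\sigma'$. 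For the mirror of Type~3, I would construct the reflected injection $g$ sending each MAW $w$ to the starting position of the rightmost occurrence of $w[2..]$ in $T[i..j]$, and argue by the same suffix-versus-prefix contradiction that $g$ is injective with range avoiding position $i$, giving $|{\cdot}| \le d-1$. Summing $1 + \sigma' + (d-1)$ for added MAWs and adding the single deleted MAW yields the bound $\sigma' + d + 1$ on the symmetric difference.

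The main obstacle, and the part I would be most careful to verify rather than wave away, is confirming that every lemma invoked from \cite{Crochemore2017MAW} and every injectivity argument genuinely respects the left-right symmetry. In particular, the injection in the proof of Lemma~\ref{lem:extention_inc} uses the \emph{leftmost} occurrence of a prefix $w[..|w|-1]$ and exploits that the shared last character is the appended one; in the mirror setting I must ensure the analogous quantities (rightmost occurrence of the suffix $w[2..]$, shared first character equal to the deleted $T[i-1]$) produce a valid injection with the correct excluded endpoint. Because the suffix tree and the notions of present/absent are not literally symmetric unless one passes to the reversed string, I would make the reflection explicit enough that the reader can see each inequality transfers, rather than merely asserting ``by symmetry.''

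For tightness, I would observe that the same extremal constructions witnessing tightness in Lemma~\ref{lem:extention_total} can be reflected to witness tightness here under the hypotheses $\sigma \ge 3$ and $\sigma' + 1 \le \sigma$; the reversal of a string attaining the append-bound attains the deletion-bound, so no new construction is required. I would point the reader to the appendix (Appendix~\ref{sec:missing_proofs}) for the full reflected proof, since the body of the argument is mechanical once the symmetry has been set up carefully, and the only genuinely new work is checking the boundary conditions of the reflected injection.
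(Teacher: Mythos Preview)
Your overall strategy---reduce the lemma to Lemma~\ref{lem:extention_total} via the left--right (string reversal) symmetry---is exactly what the paper does; its proof consists of the single line ``Symmetric to the proof of Lemma~\ref{lem:extention_total}.''

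That said, you have the two halves of the symmetric difference swapped. The reflected version of Lemma~\ref{lem:extention_dec} asserts that $|\MAW(T[i..j]) \setminus \MAW(T[i-1..j])| = 1$: exactly one MAW is \emph{added} when the leftmost character is deleted (equivalently, exactly one MAW is deleted when a character is \emph{prepended} on the left). The set that must be classified into the three mirror types with bounds $1$, $\sigma'$, and $d-1$ is the other half, $\MAW(T[i-1..j]) \setminus \MAW(T[i..j])$, and the classification is according to presence or absence of $w[2..]$ and $w[..|w|-1]$ in the \emph{shorter} string $T[i..j]$, not in $T[i-1..j]$. Your description of the mirror-Type-3 injection $g$ (rightmost occurrence of $w[2..]$ in $T[i..j]$, with first character $w[1]=T[i-1]$) is in fact the correct one for this latter set, so the swap appears to be a labeling slip rather than a conceptual gap; once you exchange the roles of the two sets, the rest of your outline and the tightness-by-reversal argument go through as written.
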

Finally, by combining Lemma~\ref{lem:extention_total} and Lemma~\ref{lem:reduction_total},
we obtain the next corollary:
\begin{corollary}\label{col:increase_MAW_at_one_slide}
  Let $d$ be the window length.
  For a string $T$ of length $n > d$ and each integer $i$ with $1 \le i \le n-d$,
  $|\MAW(T[i..i+d-1]) \bigtriangleup \MAW(T[i+1..i+d])| \in O(d)$.
  Also, there exists a string $T'$ which satisfies
  $|\MAW(T'[j..j+d-1]) \bigtriangleup \MAW(T'[j+1..j+d])| \in \Omega(d)$
  for some $j$ with $1 \le j \le |T'|-d$.
\end{corollary}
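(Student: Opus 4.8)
The plan is to establish the $O(d)$ upper bound and the $\Omega(d)$ lower bound separately. For the upper bound I would decompose one slide $T[i..i+d-1]\to T[i+1..i+d]$ into the two elementary operations already analyzed in this section: first append $T[i+d]$ to the right of $T[i..i+d-1]$, obtaining the length-$(d+1)$ window $T[i..i+d]$, and then delete the leftmost character $T[i]$ from it, obtaining $T[i+1..i+d]$. Writing $A=\MAW(T[i..i+d-1])$, $B=\MAW(T[i..i+d])$, and $C=\MAW(T[i+1..i+d])$, the triangle inequality for the symmetric difference gives $|A\bigtriangleup C|\le |A\bigtriangleup B|+|B\bigtriangleup C|$, so it suffices to bound the two elementary steps. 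The append step is exactly the setting of Lemma~\ref{lem:extention_total} applied to the window $T[i..i+d-1]$, yielding $|A\bigtriangleup B|\le \sigma'+d+1$, and the delete step is the setting of Lemma~\ref{lem:reduction_total} applied to $T[i..i+d]$, yielding $|B\bigtriangleup C|\le \sigma''+d+1$. Since every window involved has length $d$ or $d+1$ and hence contains at most $d$ distinct characters, both $\sigma'$ and $\sigma''$ are at most $d$; each term is therefore at most $2d+1$, and the total is at most $4d+2\in O(d)$.

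For the lower bound I would exhibit a single explicit binary family. Take $T'=b^2a^{d-2}b$ of length $d+1$ over $\{a,b\}$ and set $j=1$, so the two consecutive windows are $W_1=T'[1..d]=b^2a^{d-2}$ and $W_2=T'[2..d+1]=ba^{d-2}b$. I would then show that $W_1$ is ``MAW-poor'' while $W_2$ is ``MAW-rich.'' Every factor of $W_1=b^2a^{d-2}$ has the form $b^{x}a^{y}$ with $x\le 2$ and $y\le d-2$; consequently an absent word is minimal only if it is one of $bbb$, $a^{d-1}$, or $ab$, so $\MAW(W_1)=\{bbb,\,a^{d-1},\,ab\}$ has $O(1)$ elements. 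In contrast, because $W_2=ba^{d-2}b$ carries a $b$ at both ends with a run of $a$'s between them, each word $ba^{i}b$ with $1\le i\le d-3$ is absent from $W_2$ while both $ba^{i}$ and $a^{i}b$ occur in it; hence $\{ba^{i}b:1\le i\le d-3\}$ together with $\{a^{d-1},\,bb,\,aba\}$ constitutes $\MAW(W_2)$, of size exactly $d$. Checking that the only element common to both sets is $a^{d-1}$ then gives $|\MAW(W_1)\bigtriangleup\MAW(W_2)|=d+1\in\Omega(d)$, which I would also verify on the small instance $d=4$ ($T'=bbaab$).

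The main obstacle is entirely on the lower-bound side: ensuring that the $\Omega(d)$ newly created MAWs are \emph{not} cancelled by the deletion occurring in the same slide. The naive first attempt $T'=ba^{d-1}b$ already creates $\Omega(d)$ MAWs of the form $ba^{i}b$ upon appending the final $b$ (this is the Type~3 phenomenon quantified in Lemma~\ref{lem:extention_inc}), but these all rely on the leading $b$, which is precisely the character removed by the slide; there $W_1=ba^{d-1}$ and $W_2=a^{d-1}b$ differ in only $O(1)$ MAWs and the gain is destroyed. The fix reflected in the construction above is to \emph{double} the left-end character as $b^2$, so that deleting one copy still leaves a $b$ to support the rich $ba^{i}b$ structure in $W_2$. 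Once this design point is understood, the only remaining work is the routine factor-and-MAW enumeration of the two concrete windows $b^2a^{d-2}$ and $ba^{d-2}b$ sketched above, which follows the same reasoning already used for Type~3 MAWs in the proof of Lemma~\ref{lem:extention_inc}.
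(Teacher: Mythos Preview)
Your proof is correct. The upper-bound half is exactly the paper's intended argument: decompose one slide into an append followed by a delete, apply Lemma~\ref{lem:extention_total} and Lemma~\ref{lem:reduction_total} to the two pieces, and use the triangle inequality for symmetric differences together with $\sigma',\sigma''\le d$.

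For the lower bound you take a genuinely different route. The paper simply says the corollary follows ``by combining Lemma~\ref{lem:extention_total} and Lemma~\ref{lem:reduction_total},'' but, as you correctly diagnose, the tightness examples in those lemmas do \emph{not} directly survive a full slide: in the Lemma~\ref{lem:extention_total} construction $z=a_1a_2^{d-1}$ with appended $a_3$, the $\Theta(d)$ new MAWs $a_1a_2^{\,e}a_3$ all depend on $a_1$, which is exactly the character removed in the slide, so $\MAW(a_1a_2^{d-1})\bigtriangleup\MAW(a_2^{d-1}a_3)$ collapses to $O(1)$. The paper's actual $\Omega(d)$-per-slide witnesses appear only later, inside the proofs of Lemma~\ref{lem:sigma_slide_MAW} (alphabet $\sigma\le d$) and Lemma~\ref{lem:d_slide_MAW} (alphabet $\sigma\ge d+1$). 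Your construction $T'=b^{2}a^{d-2}b$ is an independent and more elementary witness: it works already over a binary alphabet, and your ``double the left $b$'' idea is precisely what protects the Type~3 MAWs $ba^{i}b$ from being destroyed by the deletion. The enumeration $\MAW(b^{2}a^{d-2})=\{ab,\,bbb,\,a^{d-1}\}$ and $\MAW(ba^{d-2}b)=\{bb,\,a^{d-1},\,aba\}\cup\{ba^{i}b:1\le i\le d-3\}$ is correct for all $d\ge 3$, giving $|\MAW(W_1)\bigtriangleup\MAW(W_2)|=d+1$.
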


\subsection{Total Changes of MAWs when Sliding the Window on a String}
In this subsection, we consider the total number of changes of MAWs
when sliding the window of length $d$ from the beginning of $T$ to the end of $T$.
We denote the total number of changes of MAWs by
$\mathcal{S}(T, d) = \sum_{i=1}^{n-d}|\MAW(T[i..i+d-1]) \bigtriangleup \MAW(T[i+1..i+d])|$.
The following lemma is known:
\begin{lemma}[\cite{Crochemore2017MAW}] \label{lem:sigma_slide_MAW_Crochemore}
  For a string $T$ of length $n > d$ over an alphabet $\Sigma$ of size $\sigma$,
  $\mathcal{S}(T, d) \in O(\sigma n)$.
\end{lemma}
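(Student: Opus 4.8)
The plan is to bound $\mathcal{S}(T,d)$ by decomposing each slide $T[i..i+d-1]\to T[i+1..i+d]$ into a right-append $T[i..i+d-1]\to T[i..i+d]$ followed by a left-deletion $T[i..i+d]\to T[i+1..i+d]$, and applying the triangle inequality for symmetric difference: $|\MAW(T[i..i+d-1])\bigtriangleup\MAW(T[i+1..i+d])| \le |\MAW(T[i..i+d-1])\bigtriangleup\MAW(T[i..i+d])| + |\MAW(T[i..i+d])\bigtriangleup\MAW(T[i+1..i+d])|$. By Lemma~\ref{lem:extention_dec} a right-append deletes exactly one MAW, and by the symmetric statement a left-deletion adds exactly one MAW; these $O(1)$ contributions sum to $O(n)$ over all slides. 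Hence it remains to bound the total number of MAWs \emph{added} by right-appends and \emph{deleted} by left-deletions.

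For the additions I would invoke the per-step bound of Crochemore et al. quoted in the introduction, namely that a right-append adds at most $(s_i-s_\alpha)(\sigma-1)+\sigma+1$ MAWs, where $s_i=|\LSuf_{i,i+d-1}|$ and $\alpha=T[i+d]$. Summing this reduces the whole problem to proving $\sum_i (s_i-s_\alpha)=O(n)$. The key identity is $s_\alpha=|\LSuf_{i,i+d}|-1$: the longest repeating suffix of the \emph{post-append} window $T[i..i+d]$ necessarily ends with $\alpha$, so deleting its last character yields the longest suffix of $T[i..i+d-1]$ with an internal occurrence immediately followed by $\alpha$, and conversely appending $\alpha$ to that suffix produces a repeating suffix of $T[i..i+d]$; both directions give length $s_\alpha+1$. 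Consequently $s_i-s_\alpha = (|\LSuf_{i,i+d-1}|-|\LSuf_{i,i+d}|)+1$, i.e.\ the drop of the longest-repeating-suffix length across the append, plus one.

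The amortization then finishes the bound. Along the interleaved sequence of appends and left-deletions, the longest-repeating-suffix length rises by at most $1$ at each append (Lemma~\ref{lem:LS}) and is non-increasing at each left-deletion (removing the leftmost character can only decrease occurrence counts, so no repeating suffix can gain occurrences). Telescoping over the entire run, the sum of the append jumps is at least $(\text{final length})-(\text{initial length})\ge -d$, because every deletion jump is non-positive; therefore $\sum_i(s_i-s_\alpha)=-\sum_i(\text{append jump})+(n-d)\le d+(n-d)=n$, and the total number of additions is $O(\sigma n)$. A symmetric argument on longest repeating \emph{prefixes} (equivalently, applying the same reasoning to the reversed string) bounds the total number of MAWs deleted by left-deletions by $O(\sigma n)$, and adding the three contributions gives $\mathcal{S}(T,d)=O(\sigma n)$. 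I expect the main obstacle to be establishing the identity $s_\alpha=|\LSuf_{i,i+d}|-1$ together with the two monotonicity properties of the longest-repeating-suffix length, since these are exactly what turn the crude per-step bound of $\Theta(d)$ into an amortized telescoping sum of total size $O(n)$.
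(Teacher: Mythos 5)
The paper does not actually prove this statement: it is quoted from Crochemore et al.~\cite{Crochemore2017MAW} and used as a black box (Theorem~\ref{thm:total_slide_MAW} combines it with the new per-step $O(d)$ bound of Corollary~\ref{col:increase_MAW_at_one_slide}), so there is no in-paper proof to compare against and your reconstruction has to stand on its own. It does. The decomposition of each slide into a right-append followed by a left-deletion, with the triangle inequality for $\bigtriangleup$, is the same decomposition the paper uses elsewhere, and Lemma~\ref{lem:extention_dec} together with its left--right mirror disposes of two of the four contributions at total cost $O(n)$. For the additions under appends, your identity $s_\alpha=|\LSuf_{i,i+d}|-1$ is the right observation; note that for the upper bound you only need the easy direction $s_\alpha\ge|\LSuf_{i,i+d}|-1$: writing $\LSuf_{i,i+d}=v\alpha$, its second occurrence ends at a position at most $i+d-1$ and so exhibits an internal occurrence of the suffix $v$ of $T[i..i+d-1]$ immediately followed by $\alpha$. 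Combined with Lemma~\ref{lem:LS} at appends and the monotonicity $|\LSuf_{i+1,i+d}|\le|\LSuf_{i,i+d}|$ at left-deletions (occurrences only disappear when the leftmost character is removed), the telescoping gives $\sum_i(s_i-s_\alpha)\le (n-d)+|\LSuf_{1,d}|\le n$ and hence $O(\sigma n)$ additions in total; since MAWs are closed under reversal, the reversed-string argument handles the deletions under left-deletions symmetrically. Two caveats, both minor: the identity should be checked in the degenerate case $\LSuf_{i,i+d}=\varepsilon$ (where the required inequality holds for any sensible convention on $s_\alpha$), and you are assuming rather than proving the per-step bound $(s_i-s_\alpha)(\sigma-1)+\sigma+1$ of \cite{Crochemore2017MAW} --- which is fair, since the paper itself only quotes that bound. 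As far as I can tell your amortization is essentially the argument of the cited source, and it is sound.
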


The aim of this subsection is to give a more rigorous bound for $\mathcal{S}(T, d)$.
We first show that the above bound is tight under some conditions.
\begin{lemma}\label{lem:sigma_slide_MAW}
  The upper bound of Lemma~\ref{lem:sigma_slide_MAW_Crochemore}
  is tight when $\sigma \le d$ and $n-d \in \Omega(n)$
  (see Appendix~\ref{sec:missing_proofs} for a proof).
\end{lemma}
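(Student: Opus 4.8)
Lemma~\ref{lem:sigma_slide_MAW} asserts that the $O(\sigma n)$ upper bound of Lemma~\ref{lem:sigma_slide_MAW_Crochemore} on $\mathcal{S}(T,d)$ is tight whenever $\sigma \le d$ and $n-d \in \Omega(n)$. So I need to construct, for every such pair $(\sigma,d)$ (and growing $n$), a string $T$ for which $\mathcal{S}(T,d) \in \Omega(\sigma n)$.

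Let me think about what drives the number of MAW changes per slide. From Lemma~\ref{lem:extention_inc}, the Type-2 MAWs added when appending a character are the ones that can number up to $\sigma'$: these are strings $w$ where $w[2..]$ is present but $w[..|w|-1]$ is absent, and the key fact from Crochemore et al. is that their *last characters are all distinct* and each occurs in the window. So to get $\Theta(\sigma)$ added MAWs per slide, I want a configuration where, at the moment of appending $T[j+1]=\alpha$, many distinct suffixes $w[2..]$ of the window are present while their left-extensions $\alpha\cdot(\text{stuff})$ are absent — one for each of the $\sigma-1$ "other" characters. Equivalently, I want the window to contain all $\sigma-1$ characters but to be *missing* many length-2 (or short) factors $\alpha c$, so that appending $\alpha$ creates many new short MAWs. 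Concretely, a window that behaves like a "de Bruijn-deficient" or highly structured block over a $\sigma$-letter alphabet should create $\Theta(\sigma)$ MAWs each step.

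**The plan.** First I will design a periodic template string over $\Sigma=\{a_1,\dots,a_\sigma\}$ whose every length-$d$ window is missing $\Theta(\sigma)$ short factors that get "completed" (turned present, and whose minimal absent extensions appear) by the next slide; the cleanest candidate is a block-periodic word of period $\Theta(\sigma)$ repeated so the whole string has length $n$, with the blocks arranged so that each single-character slide introduces $\Theta(\sigma)$ fresh Type-2 MAWs (and symmetrically deletes $\Theta(\sigma)$ on the left, via Lemma~\ref{lem:reduction_total}). The condition $\sigma \le d$ guarantees the window is wide enough to *hold* all $\sigma-1$ distinct last-characters simultaneously, which is exactly what is needed for the $\sigma'$ bound in Lemma~\ref{lem:extention_inc} to be (nearly) saturated — with $\sigma' = \Theta(\sigma)$ inside each window. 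Second, I will verify that this $\Theta(\sigma)$-per-slide bound holds for a constant fraction of the $n-d$ slides; the hypothesis $n-d \in \Omega(n)$ ensures there are $\Omega(n)$ slides, so summing gives $\mathcal{S}(T,d) \in \Omega(\sigma n)$, matching the upper bound up to a constant factor.

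**Execution order.** (i) Fix the alphabet and the repeating block $B$ of length $\Theta(\sigma)$; set $T = B^{\lceil n/|B|\rceil}$ truncated to length $n$. (ii) For a generic slide from window $T[i..i+d-1]$ to $T[i+1..i+d]$, exhibit explicitly the $\Theta(\sigma)$ Type-2 MAWs that are added on the right end when $T[i+d]$ is appended — I will give, for each of the $\Theta(\sigma)$ candidate last-characters $c$, a short string $w_c$ with last character $c$ whose proper suffix is present but whose proper prefix is absent, using the block structure to certify presence/absence of the relevant short factors. (iii) Invoke the distinct-last-character property (from Crochemore et al., via Lemma~\ref{lem:extention_inc}) to conclude these $w_c$ are pairwise distinct, so they contribute $\Theta(\sigma)$ to the symmetric difference for that slide. (iv) Observe this occurs for all $i$ in an arithmetic progression covering a constant fraction of $[1,n-d]$, and sum.

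**Main obstacle.** The delicate part will be step (ii): constructing the block $B$ so that the set of *missing short factors* whose completion yields MAWs really has size $\Theta(\sigma)$ *simultaneously* in essentially every window, rather than having the missing-factor set drift or collapse as the window moves. I expect I will need the period to interact cleanly with $d$ (e.g. choosing $|B|$ dividing or comparable to $d$) so that each window is a cyclic shift of the same factor set, making the per-slide count uniform; and I must double-check the boundary regime $\sigma \le d$ does not force the window to contain *all* length-$2$ factors over the used sub-alphabet (which would kill the Type-2 MAWs). Ensuring the $\Theta(\sigma)$ absent factors persist under sliding — and that their minimal absent extensions are genuinely new MAWs, not already-present ones — is where the real combinatorial care lies; the rest is the counting wrapper supplied by the $n-d\in\Omega(n)$ hypothesis.
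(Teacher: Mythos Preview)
Your overall plan---construct a periodic string and show that many MAWs change per slide---is the right shape, but the specific construction you commit to in step~(i) will not deliver $\Theta(\sigma)$ changes per slide. If $|B|=\Theta(\sigma)$ and $d\gg\sigma$, then every window contains many full copies of $B$, so the set of factors of length at most $d-|B|$ is \emph{identical} across all windows; only MAWs of length exceeding $d-|B|$ can possibly change. For the natural choice $T=(a_1a_2\cdots a_\sigma)^{n/\sigma}$ one checks directly that only $O(1)$ MAWs change at each slide (e.g.\ for $\sigma=3$, $d=5$, the slide $\mathtt{abcab}\to\mathtt{bcabc}$ changes exactly two MAWs, $\mathtt{cabc}$ and $\mathtt{abca}$). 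Your target of $\Theta(\sigma)$ Type-2 MAWs per slide is therefore not met, and the total is $O(n)$ rather than $\Omega(\sigma n)$. The concern you flag in your obstacle paragraph---that the period may need to be comparable to $d$---is precisely the issue, and it is fatal to the $\Theta(\sigma)$-period plan rather than a detail to be ironed out.

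The paper's construction is essentially orthogonal to the Type-2 mechanism. It takes a block of length $k(\sigma-1)\approx d$ of the form $U=a_1\alpha^{k-1}a_2\alpha^{k-1}\cdots a_{\sigma-1}\alpha^{k-1}$, with $\alpha=a_\sigma$ the single frequent letter, so that each of the $\sigma-1$ rare letters occurs with spacing $>d$ and hence at most once per window. The count is concentrated at the $\Theta((n-d)/k)$ slides where a rare letter $\beta$ enters: there one exhibits, for every rare letter $b$ already in the window and every $0\le\ell\le k-1$, the word $b\alpha^{\ell}\beta$ in the symmetric difference, giving $\Theta(\sigma)\cdot\Theta(k)=\Theta(d)$ changed MAWs at that single slide. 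Summing, $\Theta(d)\cdot\Theta((n-d)/k)=\Theta(\sigma(n-d))=\Theta(\sigma n)$. These MAWs are \emph{Type~3} in the append step (the suffix $\alpha^{\ell}\beta$ is new, the prefix $b\alpha^{\ell}$ was already there), so it is the $d-1$ bound on $|\typethree|$ in Lemma~\ref{lem:extention_inc}, not the $\sigma'$ bound on $|\typetwo|$, that is being saturated. If you revise, drop the distinct-last-character idea and instead look for many MAWs sharing the \emph{same} last character $\beta$, indexed by the length of an interior $\alpha$-run.
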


Next, we consider the case where $\sigma \ge d+1$.
\begin{lemma}\label{lem:d_slide_MAW}
  For a string $T$ of length $n > d$ over an alphabet $\Sigma$ of size $\sigma$,
  $\mathcal{S}(T, d) \in O(d(n-d))$,
  and this upper bound is tight when $\sigma \ge d+1$
  (see Appendix~\ref{sec:missing_proofs} for a proof).
\end{lemma}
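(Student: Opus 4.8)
```latex
\textbf{Proof proposal for Lemma~\ref{lem:d_slide_MAW}.}
The plan is to treat the two bounds separately: first the upper bound $\mathcal{S}(T,d) \in O(d(n-d))$, and then the matching lower bound $\Omega(d(n-d))$ under the assumption $\sigma \ge d+1$.

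For the \emph{upper bound}, I would bound each summand directly using the per-slide results already established. A single slide from window $T[i..i+d-1]$ to $T[i+1..i+d]$ decomposes into deleting the leftmost character followed by appending a character to the right (or the reverse order, passing through an intermediate window of length $d-1$ or $d+1$). By Lemma~\ref{lem:extention_total} and Lemma~\ref{lem:reduction_total}, each of these two elementary operations changes the MAW set by at most $\sigma' + d + 1$ symmetric-difference elements, where $\sigma'$ is the number of distinct characters in the relevant window. Since $\sigma' \le d$ always holds (a window of length $d$ contains at most $d$ distinct characters), each elementary operation contributes $O(d)$, so each full slide contributes $O(d)$ to the symmetric difference. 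Summing over the $n-d$ slides gives $\mathcal{S}(T,d) \le \sum_{i=1}^{n-d} O(d) = O(d(n-d))$. This part is essentially a telescoping of the single-step bounds and should be routine.

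For the \emph{lower bound}, I would construct an explicit family of strings achieving $\Omega(d(n-d))$ total changes under $\sigma \ge d+1$. The idea is to force each individual slide to realize the $\Omega(d)$ change guaranteed by Corollary~\ref{col:increase_MAW_at_one_slide}, and to make this happen simultaneously for a constant fraction of the $n-d$ slides. Since we have at least $d+1$ available characters, I would use a ``fresh-character'' scheme: build $T'$ so that at each slide a new distinct character enters on the right while a distinct character leaves on the left, maximally disrupting the set of MAWs at every step. Concretely, I would adapt the extremal construction underlying Corollary~\ref{col:increase_MAW_at_one_slide}'s $\Omega(d)$ witness and tile it periodically so that the $\Omega(d)$ lower bound fires at a constant fraction of positions $i$, yielding $\sum_i \Omega(d) = \Omega(d(n-d))$.

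The main obstacle will be the lower-bound construction. The per-slide $\Omega(d)$ bound of Corollary~\ref{col:increase_MAW_at_one_slide} guarantees a bad slide exists \emph{somewhere}, but to accumulate $\Omega(d(n-d))$ I need the large symmetric difference to persist across $\Omega(n-d)$ consecutive (or a constant fraction of) slides without the additions and deletions cancelling in aggregate. The delicate point is that a MAW added by one slide might be deleted shortly after, so I must verify that the changes are genuinely charged at each step (the symmetric difference is counted per slide, which helps, since cancellation across different slides does not reduce the sum). Using the fact that $\sigma \ge d+1$ lets each window contain $d$ distinct characters, I expect to engineer windows whose MAW sets change in $\Theta(d)$ positions at every slide, and the careful part is a clean accounting argument showing that each such slide independently contributes $\Theta(d)$ to the symmetric difference. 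I would verify the construction on a small example to fix the exact periodic pattern before writing the general argument.
```
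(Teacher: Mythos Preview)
Your proposal is correct and matches the paper's approach. For the upper bound the paper simply invokes Corollary~\ref{col:increase_MAW_at_one_slide} (each slide changes the MAW set by $O(d)$) and sums over the $n-d$ slides; your decomposition into a delete step plus an append step via Lemmas~\ref{lem:extention_total} and~\ref{lem:reduction_total} is just an inlined version of that corollary. For the lower bound the paper uses exactly the periodic ``fresh-character'' string you describe, namely $T' = (a_1 a_2 \cdots a_{d+1})^{\lfloor n/(d+1)\rfloor} a_1 \cdots a_k$, so that every window consists of $d$ pairwise distinct characters and the incoming character is new to the window; each of the $n-d$ slides then contributes $\Theta(d)$ to the symmetric difference, and your worry about cross-slide cancellation is indeed moot because $\mathcal{S}(T,d)$ sums the per-slide symmetric differences.
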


The main result of this section follows from the above lemmas:
\begin{theorem}\label{thm:total_slide_MAW}
  For a string $T$ of length $n > d$ over an alphabet $\Sigma$ of size $\sigma$,
  $\mathcal{S}(T, d) \in O(\min\{d, \sigma\} n)$.
  This upper bound is tight when $n-d \in \Omega(n)$.
\end{theorem}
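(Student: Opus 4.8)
The plan is to derive the theorem as a near-corollary of the three preceding lemmas, treating the upper bound and its tightness separately. For the upper bound I would simply intersect the two bounds already available: Lemma~\ref{lem:sigma_slide_MAW_Crochemore} gives $\mathcal{S}(T,d) \in O(\sigma n)$, while Lemma~\ref{lem:d_slide_MAW} gives $\mathcal{S}(T,d) \in O(d(n-d)) \subseteq O(dn)$, where the inclusion holds because $n-d < n$. Since both inequalities hold for one and the same string, there are constants $c_1, c_2$ with $\mathcal{S}(T,d) \le \min\{c_1 \sigma n, c_2 d n\} \le \max\{c_1,c_2\}\,\min\{\sigma,d\}\, n$, which is exactly $O(\min\{d,\sigma\}n)$. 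This step is purely arithmetic and introduces no new construction.

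For tightness I would split on whether $\sigma \le d$ or $\sigma \ge d+1$; these two ranges are exhaustive for integer parameters, and in each one $\min\{\sigma,d\}$ collapses to the governing quantity. When $\sigma \le d$ we have $\min\{\sigma,d\} = \sigma$, and Lemma~\ref{lem:sigma_slide_MAW} already exhibits a string attaining the $\Theta(\sigma n)$ bound under the hypothesis $n-d \in \Omega(n)$. When $\sigma \ge d+1$ we have $\min\{\sigma,d\} = d$, and Lemma~\ref{lem:d_slide_MAW} supplies a string attaining $\Theta(d(n-d))$; invoking $n-d \in \Omega(n)$ upgrades this to $\Theta(dn)$, matching the upper bound. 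Thus for every $(\sigma,d)$ with $n-d \in \Omega(n)$, one of the two lemmas furnishes a matching worst-case family.

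I do not expect a genuine obstacle at the level of this theorem: the only care needed is to verify that the regimes $\sigma \le d$ and $\sigma \ge d+1$ jointly cover all $(\sigma,d)$, and that the factor $(n-d)$ coming from Lemma~\ref{lem:d_slide_MAW} is correctly promoted to $n$ via the assumption $n-d\in\Omega(n)$. The substantive difficulty sits entirely inside the lemmas being cited — in particular the explicit extremal string families underlying Lemma~\ref{lem:sigma_slide_MAW} and Lemma~\ref{lem:d_slide_MAW} — which I am taking as established here.
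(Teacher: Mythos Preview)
Your proposal is correct and matches the paper's approach exactly: the paper presents Theorem~\ref{thm:total_slide_MAW} without a separate proof, simply stating that it ``follows from the above lemmas,'' i.e.\ precisely the combination of Lemma~\ref{lem:sigma_slide_MAW_Crochemore}, Lemma~\ref{lem:sigma_slide_MAW}, and Lemma~\ref{lem:d_slide_MAW} that you spell out.
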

We remark that $n-d \in \Omega(n)$ covers most interesting cases
for the window length $d$, since the value of $d$ can range
from $O(1)$ to $cn$ for any $0 < c < 1$.
 \bibliographystyle{splncs04}
\bibliography{ref}
\clearpage
\appendix
\section{Proofs for Theorem~\ref{thm:diff_MUS_delete_left}} \label{sec:theorem_two}
In this appendix, we provide omitted proofs for Theorem~~\ref{thm:diff_MUS_delete_left}.
\begin{observation}\label{obs:prefix_occ}
  For each non-empty substring $s$ of $T[i-1..j]$,
  $\numocc_{T[i-1..j]}(s) \le \numocc_{T[i..j]}(s) + 1$.
  Also, $\numocc_{T[i-1..i]}(s) = \numocc_{T[i..j]}(s) + 1$
  if and only if $s$ is a prefix of $T[i-1..j]$.
\end{observation}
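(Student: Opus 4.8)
This observation is the left-end mirror of Observation~\ref{obs:suffix_occ}: where the latter concerns appending $T[j+1]$ to the right end, here we prepend $T[i-1]$ to the left end of the window. (We read the second line's $\numocc_{T[i-1..i]}(s)$ as $\numocc_{T[i-1..j]}(s)$, the obvious intended statement.) The plan is to argue directly about the sets of occurrence positions rather than invoke any machinery, since prepending a single character can only create occurrences at one new starting position, namely $i-1$.

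First I would compare the two occurrence sets, using absolute positions in $T$ as elsewhere in the paper (cf. the explicit position sets in Lemma~\ref{lem:suffix_chain_come_back}). For a non-empty substring $s$,
$$\occ_{T[i-1..j]}(s) = \{p \mid i-1 \le p,\ p+|s|-1 \le j,\ T[p..p+|s|-1]=s\},$$
and $\occ_{T[i..j]}(s)$ is defined identically except that the lower bound on $p$ is $i$ instead of $i-1$. Hence every occurrence of $s$ within the smaller window $T[i..j]$ is also an occurrence within $T[i-1..j]$, so $\occ_{T[i..j]}(s) \subseteq \occ_{T[i-1..j]}(s)$, and the only position that can lie in the latter but not the former is $p = i-1$.

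It then remains to characterize when $i-1$ is genuinely an occurrence. Because $s$ is a substring of $T[i-1..j]$, we have $|s| \le j-(i-1)+1$, so the right-end constraint $i-1+|s|-1 \le j$ holds automatically; thus $i-1 \in \occ_{T[i-1..j]}(s)$ precisely when $T[i-1..i-1+|s|-1] = s$, that is, precisely when $s$ is a prefix of $T[i-1..j]$. Combining the two steps, $\occ_{T[i-1..j]}(s)$ is $\occ_{T[i..j]}(s)$ augmented by the single position $i-1$ when $s$ is a prefix of $T[i-1..j]$, and equals $\occ_{T[i..j]}(s)$ otherwise. Since $i-1 < i$, this augmenting position never already belonged to $\occ_{T[i..j]}(s)$, so the increment is exactly one (never zero). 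This yields both $\numocc_{T[i-1..j]}(s) \le \numocc_{T[i..j]}(s)+1$ and the equality condition of the statement.

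The argument is elementary, so there is no genuine obstacle; the only point requiring care is the position bookkeeping—confirming that the candidate new position $i-1$ always satisfies the right-end length constraint and lies strictly outside the index range of $\occ_{T[i..j]}(s)$, so that when $s$ is a prefix the count rises by exactly one.
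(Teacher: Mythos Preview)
Your argument is correct. The paper does not actually supply a proof for this observation (nor for its mirror, Observation~\ref{obs:suffix_occ}); both are stated as self-evident. Your direct comparison of the occurrence sets $\occ_{T[i-1..j]}(s)$ and $\occ_{T[i..j]}(s)$, noting that the only possible new starting position is $i-1$ and that this position is a genuine occurrence iff $s$ is a prefix of $T[i-1..j]$, is exactly the intended elementary justification. Your reading of the typo $\numocc_{T[i-1..i]}$ as $\numocc_{T[i-1..j]}$ is also the correct one.
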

\subsubsection{MUSs to be Added when Deleting the Leftmost Character.}
\begin{lemma}\label{lem:new_MUS}
  For any $i \leq s \leq t \leq j$,
  $[s, t] \not\in \MUS(T[i-1..j])$ and $[s, t] \in \MUS(T[i..j])$ if and only if
  $T[s..t] = \SPref_{i-1,j}$ and $\numocc_{T[i-1..j]}(\SPref_{i-1,j}) = 2$.
\end{lemma}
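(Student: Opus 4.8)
The plan is to prove Lemma~\ref{lem:new_MUS} by a symmetric argument to the proof of Lemma~\ref{lem:deleted_MUS}, using Observation~\ref{obs:prefix_occ} in place of Observation~\ref{obs:suffix_occ}. The key insight is that deleting the leftmost character $T[i-1]$ from $T[i-1..j]$ is the mirror image of appending a character to the right: a substring gains an occurrence (going from $T[i-1..j]$ to $T[i..j]$ corresponds in reverse to losing the prefix occurrence), and $\SPref_{i-1,j}$ plays the role that $\SSuf_{i,j+1}$ played before. So I would first establish the forward direction and then the converse.

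For the forward direction ($\Rightarrow$), let $w = T[s..t]$. Since $[s,t] \in \MUS(T[i..j])$ but $[s,t] \notin \MUS(T[i-1..j])$, we have $\numocc_{T[i..j]}(w) = 1$ while $\numocc_{T[i-1..j]}(w) \ge 2$. By Observation~\ref{obs:prefix_occ}, this forces $\numocc_{T[i-1..j]}(w) = 2$ and $w$ to be a prefix of $T[i-1..j]$. I then argue $w = \SPref_{i-1,j}$ by ruling out the two containment possibilities: if $w$ were a proper prefix of $\SPref_{i-1,j}$, then $w$ would occur at least three times in $T[i-1..j]$ by the definition of $\SPref_{i-1,j}$ as the shortest quasi-unique prefix, contradicting $\numocc_{T[i-1..j]}(w) = 2$; if $\SPref_{i-1,j}$ were a proper prefix of $w$, then its prefix $T[s..t-1]$ (i.e.\ $w[..|w|-1]$, which contains $\SPref_{i-1,j}$) would be repeating in $T[i..j]$, and Observation~\ref{obs:prefix_occ} would push $\numocc_{T[i-1..j]}(\SPref_{i-1,j}) \ge 3$, contradicting its definition. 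Hence $w = \SPref_{i-1,j}$, and since $w$ is a substring of $T[i..j]$ the count is exactly $2$.

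For the converse ($\Leftarrow$), assuming $w = T[s..t] = \SPref_{i-1,j}$ is a prefix of $T[i-1..j]$ with $\numocc_{T[i-1..j]}(w) = 2$, I would show $w$ is unique in $T[i..j]$ (the deleted prefix occurrence accounts for the drop from $2$ to $1$), that its proper prefix $w[..|w|-1] = T[s..t-1]$ is repeating in $T[i..j]$ (by the shortest-quasi-unique-prefix definition, this prefix occurs at least three times in $T[i-1..j]$, hence at least twice in $T[i..j]$), and that its proper suffix $w[2..] = T[s+1..t]$ is clearly repeating in $T[i..j]$. Together these give $[s,t] \in \MUS(T[i..j])$ and $[s,t] \notin \MUS(T[i-1..j])$. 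The only genuinely delicate step, as in the original proof, is the containment dichotomy in the forward direction; everything else is a routine transcription of the right-append argument into the mirrored left-delete setting, so I expect no substantial obstacle beyond keeping the prefix/suffix roles consistent throughout.
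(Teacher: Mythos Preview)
Your proposal is correct and is precisely the approach the paper takes: the paper's proof of Lemma~\ref{lem:new_MUS} consists solely of the line ``Symmetric to the proof of Lemma~\ref{lem:deleted_MUS},'' and you have faithfully carried out that symmetry, swapping Observation~\ref{obs:suffix_occ} for Observation~\ref{obs:prefix_occ} and $\SSuf_{i,j+1}$ for $\SPref_{i-1,j}$ throughout.
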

\begin{proof}
  Symmetric to the proof of Lemma~\ref{lem:deleted_MUS}.
  \qed
\end{proof}
\subsubsection{MUSs to be Deleted when Deleting the Leftmost Character.}
Next, we consider MUSs to be deleted by removing $T[i-1]$ from $T[i-1..j]$.
If there is a MUS $w$ of $T[i-1..j]$ which is a prefix of $T[i-1..j]$,
clearly, $w$ is not a MUS of $T[i..j]$.
Then, we consider MUSs to be deleted each of which are \emph{not} a prefix of $T[i-1..j]$.

\begin{lemma}\label{lem:deleted_MUSs_contain_SP}
  For each $[s, t] \in \MUS(T[i-1, j])$ with $s \ne i-1$,
  if $[s, t] \not\in \MUS(T[i..j])$
  then $\numocc_{T[i-1..j]}(\SPref_{i-1,j}) = 2$ and
  $\SPref_{i-1,j}$ is a proper substring of $T[s..t]$.
\end{lemma}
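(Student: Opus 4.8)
The plan is to mirror the proof of Lemma~\ref{lem:MUS_contains_SS} under the natural left-right symmetry, exchanging suffixes with prefixes, $\SSuf_{i,j+1}$ with $\SPref_{i-1,j}$, and the role of Lemma~\ref{lem:deleted_MUS} with that of its symmetric counterpart Lemma~\ref{lem:new_MUS}. First I would record that the hypothesis $s \ne i-1$ forces $s \ge i$, so the (unique) occurrence of $T[s..t]$ witnessing $[s,t] \in \MUS(T[i-1..j])$ lies entirely inside $T[i..j]$. Since deleting the leftmost character cannot create new occurrences, this gives $\numocc_{T[i..j]}(T[s..t]) = \numocc_{T[i-1..j]}(T[s..t]) = 1$, i.e.\ $T[s..t]$ is unique in $T[i..j]$.

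Next I would extract a MUS properly contained in $T[s..t]$. Because $T[s..t]$ is unique in $T[i..j]$ yet $[s,t] \notin \MUS(T[i..j])$, minimality must fail: some proper substring of $T[s..t]$ is already unique in $T[i..j]$. Taking a shortest such unique substring $u$, every proper substring of $u$ is a strictly shorter substring of $T[s..t]$ and hence repeating, so $u$ is a MUS of $T[i..j]$ that is a proper substring of $T[s..t]$.

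The crux is then a contradiction argument identical in shape to the one in Lemma~\ref{lem:MUS_contains_SS}. Suppose, toward a contradiction, that $\numocc_{T[i-1..j]}(\SPref_{i-1,j}) = 1$ or $u \ne \SPref_{i-1,j}$. By Lemma~\ref{lem:new_MUS} (whose right-hand side then fails for the interval of $u$), the MUS $u$ of $T[i..j]$ is also a MUS of $T[i-1..j]$. But $u$ is a proper substring of $T[s..t]$ while $[s,t] \in \MUS(T[i-1..j])$, and no MUS can be a proper substring of another MUS, since every proper substring of a MUS is repeating whereas a MUS is unique. This contradiction yields $u = \SPref_{i-1,j}$ and $\numocc_{T[i-1..j]}(\SPref_{i-1,j}) = 2$, which is exactly the assertion. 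The only step that needs genuine care is this contrapositive invocation of Lemma~\ref{lem:new_MUS} together with the ``no MUS inside a MUS'' fact; everything else is the straightforward positional bookkeeping dictated by the symmetry.
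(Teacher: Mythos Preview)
Your proof is correct and follows essentially the same route as the paper: the paper simply states ``Symmetric to the proof of Lemma~\ref{lem:MUS_contains_SS}'', and what you wrote is precisely the expanded symmetric argument, with Lemma~\ref{lem:new_MUS} playing the role of Lemma~\ref{lem:deleted_MUS} and the ``no MUS can be a proper substring of another MUS'' observation furnishing the contradiction.
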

\begin{proof}
  Symmetric to the proof of Lemma~\ref{lem:MUS_contains_SS}.
  \qed
\end{proof}
Namely, when deleting the leftmost character,
a MUS which is not a prefix is deleted
only if an added MUS exists.
Moreover, such deleted MUSs must contains the added MUS.
\begin{lemma}\label{lem:deleted_MUSs}
  If $\numocc_{T[i-1..j]}(\SPref_{i-1,j}) = 2$, then
  following propositions hold:
  \begin{itemize}
    \item[\normalfont{(a)}]
      If there is a MUS $w$ starting at $s$ in $T[i-1..j]$,
      $w$ is not a MUS of $T[i..j]$,
    \item[\normalfont{(b)}]
      If there is a MUS $w'$ ending at $t$ in $T[i-1..j]$,
      $w'$ is not a MUS of $T[i..j]$,
  \end{itemize}
  where $T[s..t] = \SPref_{i-1,j}$ and $s \ne i-1$.
\end{lemma}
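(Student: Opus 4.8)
The plan is to reduce both statements to one observation: under the hypothesis $\numocc_{T[i-1..j]}(\SPref_{i-1,j}) = 2$, the string $\SPref_{i-1,j}$ itself becomes a MUS of $T[i..j]$. Indeed, since $s \ne i-1$ forces $s \ge i$, the interval $[s,t]$ satisfies $i \le s \le t \le j$ and $T[s..t] = \SPref_{i-1,j}$, so Lemma~\ref{lem:new_MUS} applies and yields $[s,t] \in \MUS(T[i..j])$. In particular $\SPref_{i-1,j}$ is unique in $T[i..j]$, i.e. $\numocc_{T[i..j]}(\SPref_{i-1,j}) = 1$. Granting this, each part follows from a common argument: if $\SPref_{i-1,j}$ is a \emph{proper} substring of a MUS $u$ of $T[i-1..j]$, then $u$ cannot be a MUS of $T[i..j]$, since a MUS requires every proper substring to be repeating, whereas $\SPref_{i-1,j}$ is a proper substring of $u$ that is unique in $T[i..j]$.

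It then remains, for each part, to verify that $\SPref_{i-1,j}$ is a \emph{proper} substring of the MUS in question; this is the only place a short case analysis is needed, and the recurring tool is that $\SPref_{i-1,j}$ has two occurrences in $T[i-1..j]$, the prefix occurrence at $i-1$ and the occurrence at $s \ne i-1$. For part (a), write $w = T[s..e]$. The case $e = t$ is impossible, as it would give $w = \SPref_{i-1,j}$, which occurs twice and is therefore not unique, contradicting that $w$ is a MUS. The case $e < t$ is impossible as well: then $w$ would be a prefix of the occurrence $T[s..t]$, hence, by copying the identical prefix occurrence at $i-1$, $w$ would occur both at $i-1$ and at $s$, again contradicting uniqueness. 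Thus $e > t$ and $\SPref_{i-1,j}$ is a proper prefix, hence a proper substring, of $w$. Part (b) is the mirror image: writing $w' = T[s'..t]$, the cases $s' = s$ and $s' > s$ are ruled out by the same two-occurrence argument (the occurrence of $\SPref_{i-1,j}$ at $i-1$ induces a second occurrence of the would-be suffix $w'$), leaving $s' < s$, so that $\SPref_{i-1,j}$ is a proper suffix, hence a proper substring, of $w'$.

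Combining the two steps finishes both parts: since $\SPref_{i-1,j}$ is a proper substring of $w$ (resp. $w'$) and is unique in $T[i..j]$, assuming $w$ (resp. $w'$) were a MUS of $T[i..j]$ would force $\SPref_{i-1,j}$ to be repeating in $T[i..j]$, contradicting $\numocc_{T[i..j]}(\SPref_{i-1,j}) = 1$. I expect the only delicate point to be the proper-substring verification, and in particular checking that the induced second occurrence of $w$ (resp. $w'$) is genuinely distinct from the original, which is exactly where the hypothesis $s \ne i-1$ is used; everything else is a direct application of Lemma~\ref{lem:new_MUS} together with the definition of a MUS. Alternatively, the statement can be obtained by the symmetric argument to Lemma~\ref{lem:new_MUSs}, exchanging the roles of prefixes and suffixes and of the append-right and delete-left operations.
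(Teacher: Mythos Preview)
Your proof is correct. The paper's own proof is the single line ``Symmetric to the proof of Lemma~\ref{lem:new_MUSs},'' which you explicitly note as an alternative at the end. Your direct argument---using Lemma~\ref{lem:new_MUS} to conclude that $\SPref_{i-1,j}$ is unique in $T[i..j]$, then showing via a short case analysis that it is a \emph{proper} substring of $w$ (resp.\ $w'$)---is a clean and self-contained unpacking of that symmetry, and in fact makes the logic more transparent than a literal mirroring would, since the statements of Lemma~\ref{lem:new_MUSs} and Lemma~\ref{lem:deleted_MUSs} are not exact formal reflections of one another (one constructs specific new MUSs under a ``no MUS at this position'' hypothesis, the other rules out existing MUSs under a ``there is a MUS at this position'' hypothesis). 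Both approaches rest on the same core observation: once $\SPref_{i-1,j}$ becomes unique in the smaller window, any strictly longer string containing it cannot be minimal.
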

\begin{proof}
  Symmetric to the proof of Lemma~\ref{lem:new_MUSs}.
  \qed
\end{proof}
\begin{proof}[of Theorem~\ref{thm:diff_MUS_delete_left}]
  Symmetric to the proof of Theorem~\ref{thm:diff_MUS_add_right}.
  \qed
\end{proof}
\clearpage
\section{Detecting MUSs to be Added/Deleted}\label{sec:detect_MUS}
In this appendix, we present our algorithm for detecting MUSs to be added / deleted.

\subsubsection{Data Structure for Maintaining MUSs.}
First, we introduce a data structure for managing the set of MUSs in a sliding window.
Our data structure for MUSs consists of two arrays $\StartToEnd$ and $\EndToStart$ of length $d$ each.
Note that by the definition of MUSs, any MUSs cannot be nested each other.
Thus, for any text position $i$, if a MUS starting~(resp. ending) at $i$ exists,
then its ending~(resp. starting) position is a unique.
From this fact, we can define $\StartToEnd$ and $\EndToStart$ as follows:

Let $[p, p+d-1]$ be the current window.
For every index $i$ with~$p \le i \le p+d-1$,
\begin{equation*}
  \StartToEnd[(i-1)~\text{mod}~d + 1] =
  \begin{cases}
    e & \text{if $[i,e]\in\MUS(T[p..p+d-1])$ exists,}\\
    \nil & \text{otherwise.}
  \end{cases}
\end{equation*}
\begin{equation*}
  \EndToStart[(i-1)~\text{mod}~d + 1] =
  \begin{cases}
    s & \text{if $[s,i]\in\MUS(T[p..p+d-1])$ exists,}\\
    \nil & \text{otherwise}.
  \end{cases}
\end{equation*}
Since MUSs cannot be nested each other,
these arrays are uniquely defined~(see Fig.~\ref{fig:MUS_array}).
By using these two arrays, all the following operations for MUSs
can be executed in $O(1)$ time;
add/remove a MUS into/from the set of MUSs, and
compute the ending/starting position of the MUS that starts/ends at a specified position.
\begin{figure}[ht]
    \centerline{\includegraphics[width=0.8\linewidth]{./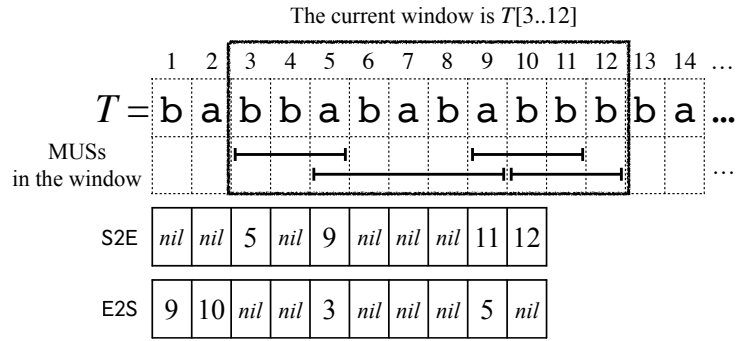}}
    \caption{
      A long string $T = \mathtt{babbabababbbba}\cdots$ and two arrays $\StartToEnd$ and $\EndToStart$.
      The current window is $T[3..12]$ of length $d = 10$, and the MUSs in the window are
      $T[3..5], T[5..9], T[9..11]$, and $T[10..12]$.
    }\label{fig:MUS_array}
\end{figure}
\subsubsection{Algorithm when Appending a Character to the Right.}
Assume that $\StartToEnd$, $\EndToStart$ and the suffix tree of $T[i..j]$ are computed
before reading $\gamma = T[j+1]$.
Also, assume that the longest single character run $\beta^e$ as a suffix of $T[i..j]$ is known,
where $\beta = T[j]$ and $e \ge 1$.
\begin{itemize}
  \item First, compute the length of $\LSuf_{i, j}$.
  \item Second, read $\gamma$, and update the suffix tree and the active points.
    Then, compute the lengths of $\LSuf_{i, j+1}$ and $\SSuf_{i, j+1}$.
    Also, update information about the run of the last character of $T[i..j+1]$.
    Specifically, if $\gamma = \beta$ then $\beta^e = \gamma^{e+1}$,
    and otherwise $\beta^e = \gamma^1$.
    If $|\LSuf_{i, j+1}| \le |\LSuf_{i, j}|$ or $T[j+1-|\LSuf_{i, j+1}|..j+1] = \gamma^{|\LSuf_{i, j+1}|+1}$,
    add $[j+1-|\LSuf_{i, j+1}|, j+1]$ into the set of MUSs~(by Lemma~\ref{lem:LS_is_MUS}).
  \item If $|\LSuf_{i, j+1}| < |\SSuf_{i, j+1}|$, then terminate this step~(by Lemma~\ref{lem:MUS_contains_SS}).
  \item Otherwise, compute $p_s$ and $q$ of Lemma~\ref{lem:new_MUSs} by using $\STree(T[i..j+1])$ and $\spoint_{i, j+1}$.
    Then, remove $[p_s, q]$ from the set of MUSs~(by Lemma~\ref{lem:deleted_MUS}).
  \item Next, if $\EndToStart[(t'-1)~\text{mod}~d+1] = \nil$,
    then add $[p_s, t']$ into the set of MUSs, where $t' = q + 1$.
    Also, if $s' \ge i$ and $\StartToEnd[(s'-1)~\text{mod}~d-1] = \nil$,
    then add $[s', q]$ into the set of MUSs,
    where $s' = q - |\LSuf_{i, j+1}|$~(by Lemma~\ref{lem:new_MUSs}).
  \item Terminate this step.
\end{itemize}

\subsubsection{Algorithm when Deleting the Leftmost Character.}
Assume that $\StartToEnd$, $\EndToStart$ and the suffix tree of $T[i-1..j]$ are computed
before deleting $\alpha = T[i-1]$.
\begin{itemize}
  \item First, compute $\numocc_{T[i-1..j]}(\SPref_{i-1,j})$.
    If $\numocc_{T[i-1..j]}(\SPref_{i-1,j}) = 2$,
    compute two integers $s$ and $t$ with $T[s..t] = \SPref_{i-1,j}$ and $s \ne i-1$.
  \item Second, delete $T[i-1]$ and update the suffix tree and the active points.
    If $\StartToEnd[(i-1-1)~\text{mod}~d+1] \ne \nil$, remove the MUS starting at $i-1$ from the set of MUSs.
  \item If $\numocc_{T[i-1..j]}(\SPref_{i-1,j}) = 1$, terminate this step~(by Lemma~\ref{lem:deleted_MUSs_contain_SP}).
  \item Otherwise, if $\StartToEnd[(s-1)~\text{mod}~d+1] \ne \nil$,
    then remove the MUS starting at $s$ from the set of MUSs.
    Also, if $\EndToStart[(t-1)~\text{mod}~d+1] \ne \nil$,
    then remove the MUS ending at $t$ from the set of MUSs~(by Lemma~\ref{lem:deleted_MUSs}).
  \item Finally, add $[s, t]$ into the set of MUSs~(by Lemma~\ref{lem:new_MUS}), and
    terminate this step.
\end{itemize}
\clearpage
\section{Omitted Proofs} \label{sec:missing_proofs}
In this appendix, we provide proofs that are omitted due to lack of space.
\begin{proof}[of Lemma~\ref{lem:LS_and_SS}]

  \noindent \textbf{(1) $\Rightarrow$ (2) and (3):}
  Since $|\LSuf_{i,j}| \geq |\SSuf_{i,j}|$, $\SSuf_{i,j}$ is a suffix of $\LSuf_{i,j}$
  and thus $\numocc_{T[i..j]}(\SSuf_{i,j}) \ge \numocc_{T[i..j]}(\LSuf_{i,j})$.
  By the definitions of $\SSuf_{i,j}$ and $\LSuf_{i,j}$,
  $\numocc_{T[i..j]}(\SSuf_{i,j}) \le 2$ and $\numocc_{T[i..j]}(\LSuf_{i,j}) \ge 2$.
  Thus $\numocc_{T[i..j]}(\LSuf_{i,j}) = \numocc_{T[i..j]}(\SSuf_{i,j}) = 2$.

  \noindent \textbf{(2) $\Rightarrow$ (1):}
  Since $\numocc_{T[i..j]}(\LSuf_{i,j}) = 2$,
  the shortest suffix $\SSuf_{i,j}$ of $T[i..j]$ that occurs at most twice in $T[i..j]$
  cannot be longer than $\LSuf_{i,j}$, i.e. $|\LSuf_{i,j}| \geq |\SSuf_{i,j}|$.

  \noindent \textbf{(3) $\Rightarrow$ (1):}
  Since $\numocc_{T[i..j]}(\SSuf_{i,j}) = 2$,
  the longest suffix $\LSuf_{i,j}$ of $T[i..j]$ that occurs at least twice in $T[i..j]$
  is at least as long as $\SSuf_{i,j}$, i.e. $|\LSuf_{i,j}| \geq |\SSuf_{i,j}|$.
  \qed
\end{proof}
\begin{proof}[of Lemma~\ref{lem:LS}]
  Assume on the contrary that $|\LSuf_{i,j+1}| > |\LSuf_{i,j}| + 1$.
  By the definition of $\LSuf_{i,j+1}$,
  $\LSuf_{i,j+1} = T[j+2-|\LSuf_{i,j+1}|.. j+1]$ occurs at least twice in $T[i..j+1]$.
  Hence, $T[j+2-|\LSuf_{i,j+1}|.. j]$ which is a proper prefix of $\LSuf_{i,j+1}$
  also occurs at least twice in $T[i..j]$.
  In addition,
  $\LSuf_{i,j} = T[j+2-|\LSuf_{i,j}|..j]$ is a proper suffix of $T[j+2-|\LSuf_{i,j+1}|.. j]$
  since $|\LSuf_{i,j+1}| > |\LSuf_{i,j}| + 1$.
  However, this contradicts the definition of $\LSuf_{i,j}$.
  Therefore, $|\LSuf_{i,j+1}| \le |\LSuf_{i,j}| + 1$.
  \qed
\end{proof}
\begin{figure}[ht]
  \centerline{\includegraphics[width=0.8\linewidth]{./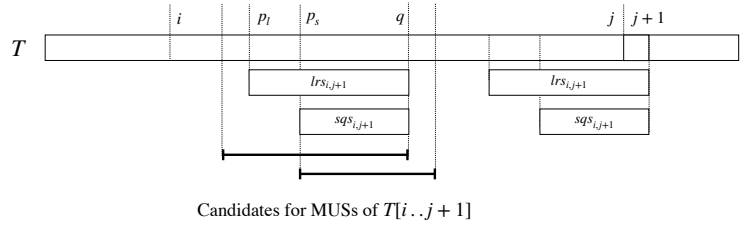}}
    \caption{
      Illustration of the situation
      when $\SSuf_{i, j+1}$ is repeating in $T[i..j+1]$.
      In this situation, $[p_l-1, q]$ and $[p_s, q+1]$ are the only candidates for
      MUSs in $\MUS(T[i..j+1]) \setminus \MUS(T[i..j])$
      each of which is not a suffix of $T[i..j+1]$.
    }\label{fig:MUS_candidates}
\end{figure}
\begin{proof}[of Lemma~\ref{lem:new_MUSs}]
  By Lemma~\ref{lem:LS_and_SS},
  $\numocc_{T[i..j+1]}(\LSuf_{i, j+1}) = 2$ and $\SSuf_{i, j+1}$ is a suffix of $\LSuf_{i, j+1}$
  since $\numocc_{T[i..j+1]}(\SSuf_{i,j+1}) = 2$.
  Hence,
  the ending position of the occurrence of $\SSuf_{i, j+1}$ in $T[i..j]$ and
  that of $\LSuf_{i, j+1}$ in $T[i..j]$
  are the same~(see Fig.~\ref{fig:MUS_candidates}).
  Next, we consider MUSs to be added.
  \begin{itemize}
    \item[(a)]
      For the sake of contradiction, we assume that $T[p_s..q+1]$ is repeating in $T[i..j+1]$, then
      $\numocc_{T[i..j+1]}(T[p_s..q]) \ge 3$, and it contradicts
      the definition of $\SSuf_{i,j+1}~(= T[p_s..q])$.
      Hence, $T[p_s..q+1]$ is unique in $T[i..j+1]$.
Also, $T[p_s..q] = \SSuf_{i, j+1}$ is repeating in $T[i..j+1]$ clearly.
      In addition, $T[p_s+1..q+1]$ is repeating in $T[i..j]$
      since $[p_s, q] \in \MUS(T[i..j])$~(by Lemma~\ref{lem:deleted_MUS})
      and there is no MUS of $T[i..j]$ ending at $q+1$.
      Thus, $T[p_s+1..q+1]$ is also repeating in $T[i..j+1]$.
      Therefore, $T[p_s..q+1]$ is a MUS of $T[i..j+1]$.
    \item[(b)]
      For the sake of contradiction, we assume that $T[p_l-1..q]$ is repeating in $T[i..j+1]$.
      Then $T[p_l-1] = T[j+1-|\LSuf_{i,j+1}|]$ or $\numocc_{T[i..j+1]}(T[p_l..q]) \ge 3$.
      If $T[p_l-1] = T[j+1-|\LSuf_{i, j+1}|]$, it contradicts the definition of $\LSuf_{i, j+1}$.
      If $\numocc_{T[i..j+1]}(T[p_l..q]) \ge 3$, it contradicts $\numocc_{T[i..j+1]}(\LSuf_{i, j+1}) = 2$.
      Thus, $T[p_l-1..q]$ is unique in $T[i..j+1]$.
Also, $T[p_l..q] = \LSuf_{i, j+1}$ is repeating in $T[i..j+1]$ clearly.
      In addition, $T[p_l-1..q-1]$ is repeating in $T[i..j]$,
      since $[p_s, q] \in \MUS(T[i..j])$~(by Lemma~\ref{lem:deleted_MUS}),
      $T[p_l..q-1]$ is repeating in $T[i..j]$, and
      there is no MUS of $T[i..j]$ starting at $p_l-1$.
      Thus, $T[p_l-1..q-1]$ is repeating in $T[i..j+1]$.
      Therefore, $T[p_l-1..q]$ is a MUS of $T[i..j+1]$. \qed
  \end{itemize}
\end{proof}
\begin{proof}[of Theorem~\ref{thm:diff_MUS_add_right}]

  \noindent
  \begin{itemize}
    \item[(a)]
      By Lemma~\ref{lem:deleted_MUS}, $|\MUS(T[i..j]) \setminus \MUS(T[i..j+1])| \le 1$.
      By Observation~\ref{obs:suffix_MUS} and Lemma~\ref{lem:new_MUSs},
      $|\MUS(T[i..j+1]) \setminus \MUS(T[i..j])| \le 3$.
      Thus, $|\MUS(T[i..j+1]) \bigtriangleup \MUS(T[i..j])| =\
      |\MUS(T[i..j]) \setminus \MUS(T[i..j+1])| + |\MUS(T[i..j+1]) \setminus \MUS(T[i..j])| \le 4$.

      Next, we show that the upper bound is tight if $\sigma \ge 3$.
      For an integer $k \ge 2$, we consider two strings $u$ and $u'$ such that
      $u = \mathtt{a}^k\mathtt{b}\mathtt{c}\mathtt{c}$ of length $k + 3 \ge 5$ and
      $u' = u\mathtt{b} = \mathtt{a}^k\mathtt{b}\mathtt{c}\mathtt{c}\mathtt{b}$ of length $k + 4 \ge 6$.
      Then, $\MUS(u) = \{[1, k], [k+1, k+1], [k+2, k+3]\}$ and
      $\MUS(u') = \{[1, k], [k, k+1], [k+1, k+2], [k+2, k+3], [k+3, k+4]\}$.
      Therefore, $|\MUS(u') \bigtriangleup \MUS(u)| = 4$.
    \item[(b)]
      By Lemma~\ref{lem:deleted_MUS}, it is clear that $-1 \le |\MUS(T[i..j+1])| - |\MUS(T[i..j])|$.
      By Observation~\ref{obs:suffix_MUS}, the number of added MUS which is a suffix of $T[i..j+1]$ is at most one.
      Also, by Lemma~\ref{lem:new_MUSs}, the number of added MUS which is not a suffix of $T[i..j+1]$ is at most two,
      however, if such an added MUS exists, exactly one MUS~($= \SSuf_{i, j+1}$)
      must be deleted~(c.f. Lemma~\ref{lem:deleted_MUS} and Lemma~\ref{lem:MUS_contains_SS}).
      Therefore, $|\MUS(T[i..j+1])| - |\MUS(T[i..j])| \le 2$.

      Next, we show that each bound is tight if $\sigma \ge 3$.
      We consider strings $u$ and $u'$ that are described in the case~(a),
      and we then obtain $|\MUS(u')| - |\MUS(u)| = 2$.
      On the other hand, for any integer $\ell$ with~$\ell \ge 1$, we consider two strings $v$ and $v'$;
      $v = \mathtt{a}^\ell\mathtt{b}\mathtt{c}\mathtt{a}\mathtt{c}$ of length $\ell+4 \ge 5$ and
      $v = v\mathtt{a} = \mathtt{a}^\ell\mathtt{b}\mathtt{c}\mathtt{a}\mathtt{c}\mathtt{a}$ of length $\ell+5\ge 6$.
      If $\ell = 1$, then $\MUS(v)  = \{[2, 2], [3, 4], [4, 5]\}$,
      and                 $\MUS(v') = \{[2, 2],         [4, 5]\}$.
      If $\ell \ge 2$, then $\MUS(v)  = \{[1, \ell], [\ell+1, \ell+1], [\ell+2, \ell+3], [\ell+3, \ell+4]\}$,
      and                   $\MUS(v') = \{[1, \ell], [\ell+1, \ell+1],                   [\ell+3, \ell+4]\}$.
      Therefore, $|\MUS(v')| - |\MUS(v)| = -1$. \qed
  \end{itemize}
\end{proof}
\begin{figure}[ht]
    \centerline{\includegraphics[width=0.5\linewidth]{./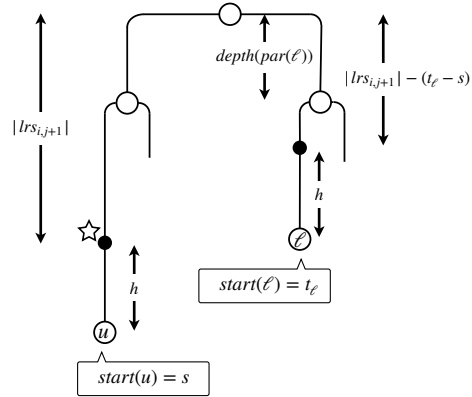}}
    \caption{
      For an example of Lemma~\ref{lem:leaf_has_implicit_node}.
      The situation of this figure is that
      each of $u$ and $\ell$ is a leaf with $t_{\ell} = \start(\ell) \ge s = \start(u)$
      and $|\LSuf_{i,j+1}| - (t_{\ell} - s) > \depth(\parent(\ell))$
      where $(u, h)$ represents the primary active point.
      Also, black nodes represent implicit suffix nodes.
    }\label{fig:implicit}
\end{figure}
\begin{proof}[of Lemma~\ref{lem:leaf_has_implicit_node}]
By Observation~\ref{obs:tp_occ}, for each leaf $\ell$, the suffix corresponding to
  the lowest implicit suffix node on $(\parent(\ell), \ell)$
  occurs exactly twice in $T[i..j+1]$ if such an implicit suffix node exists.

  Let $v = \LSuf_{i,j+1}$ and $\apoint_{i,j+1} = (u, h)$.
  If $u$ is not a leaf,
  there is no implicit suffix node on the edge~$(\parent(\ell), \ell)$ for any leaf $\ell$,
  since every suffix of $T[i..j+1]$ which is shorter than $|v|$
  occurs more than twice in $T[i..j+1]$.

  If $u$ is a leaf, $\numocc_{T[i..j+1]}(v) = 2$.
  Let $s = \start(u)$ and $t_{\ell} = \start(\ell)$ for each leaf $\ell$.
  In the case of $t_{\ell} < s$,
  we assume that there is an implicit suffix node on $(\parent(\ell), \ell)$ for the sake of contradiction.
  Let $w$ be a string corresponding to the lowest implicit suffix node on $(\parent(\ell), \ell)$.
  Then, $w$ is a suffix of $v$, and occurs exactly twice in $T[i..j+1]$.
  Furthermore, $w$ occurs exactly twice in $T[s..j+1]$.
  However, $w$ is a prefix of $T[t_{\ell}..j+1]$, hence $w$ occurs at least three times in $T[i..j+1]$, it is a contradiction.
  Thus, if $t_{\ell} < s$, there is no implicit suffix node on $(\parent(\ell), \ell)$.
Finally, we consider the case of $t_{\ell} \ge s$~(see Fig.~\ref{fig:implicit}).
  In this case, $T[t_{\ell}.. s+|v|-1]$ which is a prefix of $T[t_\ell..j+1]$ matches
  the suffix of $v$ which is $t_\ell-s$ characters shorter than $v$, i.e. $v[1+t_\ell-s..]$.
  Thus, there is an implicit suffix node on $(\parent(\ell), \ell)$ if and only if
  $|T[t_{\ell}.. s+|v|-1]| = |v|-(t_\ell-s) > \depth(\parent(\ell))$.
Also, if there is an implicit suffix node on $(\parent(\ell), \ell)$,
  the locus of the lowest one is $(\ell, h)$.
  \qed
\end{proof}
\begin{figure}[ht]
    \centerline{\includegraphics[width=0.8\linewidth]{./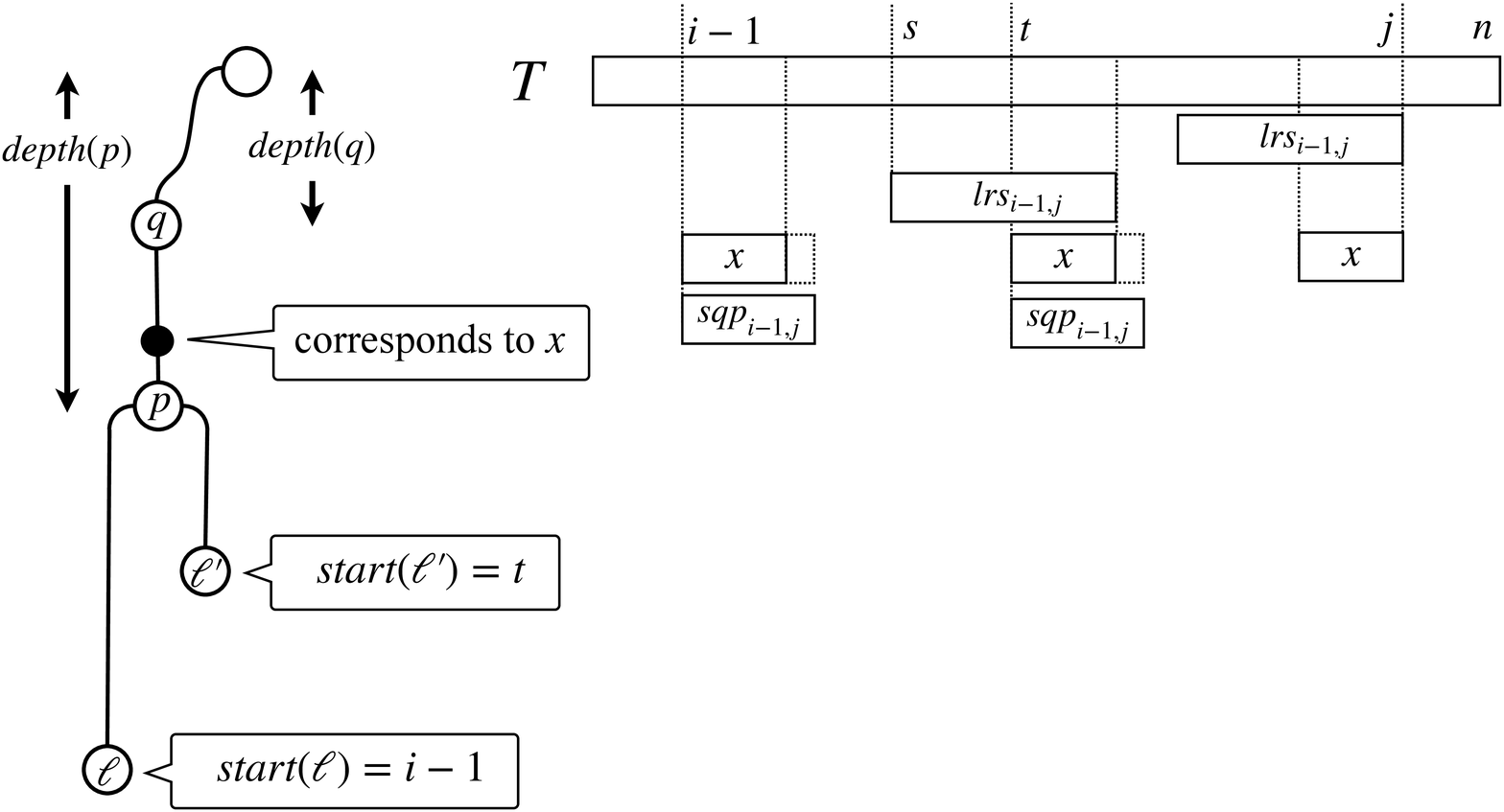}}
    \caption{
      Illustration for Lemma~\ref{lem:two_children}.
    }\label{fig:SP2}
\end{figure}
\begin{proof}[of Lemma~\ref{lem:two_children}]
Note that the suffix corresponding to the lowest implicit suffix node on $(q, p)$ occurs
  exactly three times in $T[i-1..j]$ from assumptions.
Let $\apoint_{i-1,j} = (u, h)$.
  If $h = 0$, the primary active point is an explicit node, and there is no implicit suffix node on every edge in $\STree_{T[i-1..j]}$.
  If $h \ne 0$ and $u = p$, the lowest implicit suffix node on $(q, p)$ is clearly the primary active point.
  Thus, in the following, we consider the situation with $u \ne p$ and $h \ne 0$.

  If $u$ is not a leaf and the number of leaves in $\subtree(u)$ is greater than two, then
  the number of leaves in $\subtree(\hed(v))$ is also greater than two for each implicit suffix node $v$.
  Thus, there is no implicit suffix node on $(q, p)$.
  If $u$ is not a leaf and the number of leaves in $\subtree(u)$ is exactly two, then
  $\LSuf_{i-1,j}$ occurs at least three times in $T[i-1..j]$ since $u \ne p$.
  Thus, if a suffix $s$ of $T[i-1..j]$ which is shorter than $\LSuf_{i-1,j}$ occurs as a prefix of $T[i-1..j]$,
  $\numocc_{T[i-1..j]}(s) \ge 4$, therefore, there is no implicit suffix node on $(q, p)$.

  If $u$ is a leaf,
  as in the proof in Lemma~\ref{lem:leaf_has_implicit_node}, it can be prove that
  there is an implicit suffix node on $(q, p)$ if and only if
  $t \ge s$ and $\depth(p) > |\LSuf_{i-1,j}|-(t-s) > \depth(q)$,
  where $s = \start(u)$, $t = \start(\ell')$ and $\ell'$ be the sibling of $\ell$~(see Fig.~\ref{fig:SP2}).
In addition, the length of the string $x$
  corresponding to the lowest implicit suffix node $\mathcal{X}$ on $(q, p)$
  is $|\LSuf_{i-1,j}|-(t-s)$, and thus, $\mathcal{X} = (p, \depth(p)-|x|) = (p, \depth(p) - |\LSuf_{i-1,j}| + t - s)$,
  if such an implicit suffix node exists.
  \qed
\end{proof}
\begin{proof}[of Lemma~\ref{lem:extention_total}]
  By Lemma~\ref{lem:extention_dec} and Lemma~\ref{lem:extention_inc}, we have
  $|\MAW(T[i..j+1]) \bigtriangleup \MAW(T[i..j])| = |\MAW(T[i..j+1]) \setminus \MAW(T[i..j])| + |\MAW(T[i..j]) \setminus \MAW(T[i..j+1])|\
  \le \sigma' + d + 1$.
In the following, we show that the upper bound is tight, i.e.
  there is a string $z$ of length $d$ and a character $\alpha$
  where $|\MAW(z) \bigtriangleup \MAW(z\alpha)| = \sigma' + d + 1$
  for any two integers $d$ and $\sigma'$ with $1 \le \sigma' \le d$ and $\sigma' + 1 \le \sigma$.
Let $\Sigma = \{a_1, a_2, \cdots, a_\sigma\}$ be an alphabet.
  Given two integers $d$ and $\sigma'$
  with $1 \le \sigma' \le d$ and $\sigma' + 1 \le \sigma$.
  We consider a string $z = a_1a_2 \cdots a_{\sigma'-1}a_{\sigma'}^{d-\sigma'+1}$ of length $d$
  and a character $\alpha = a_{\sigma'+1}$.
  Then, $\MAW(z) \setminus \MAW(z\alpha) = \{\alpha\}$.
  Also, $\MAW(z\alpha) \setminus \MAW(z) = \
    \{\alpha^2\} \cup \
    \{\alpha a_i\mid 1 \le i \le \sigma'\} \cup \
    \{ a_i\alpha\mid 1 \le i \le \sigma'-1\} \cup \
    \{a_{\sigma'-1}a_{\sigma'}^e\alpha\mid 1 \le e \le d-\sigma'\}$.
  Therefore, $|\MAW(z) \bigtriangleup \MAW(z\alpha)| = \sigma' + d + 1$.
  \qed
\end{proof}
\begin{proof}[of Lemma~\ref{lem:reduction_total}]
  Symmetric to the proof of Lemma~\ref{lem:extention_total}.
  \qed
\end{proof}
\begin{proof}[of Lemma~\ref{lem:sigma_slide_MAW}]
  If $\sigma = 2$, the lower bound
  $\mathcal{S}(T', d) \in \Omega(n-d) = \Omega(\sigma (n-d))$ is obtained
  by string $T' = (\mathtt{ab})^{n/2}$.

  In the sequel, we consider the case where $\sigma \ge 3$.
  Let $k$ be the integer with $(k-1)(\sigma-1) \le d < k(\sigma-1)$.
  Note that $k \geq 2$ since $\sigma \le d$.
  Let $\Sigma = \{a_1, a_2, \cdots, a_\sigma\}$ and $\alpha = a_\sigma$.
We consider a string $T' = U^e + U[..m]$ where
  $U = a_1 \alpha^{k-1} a_2 \alpha^{k-1} \dots a_{\sigma-1} \alpha^{k-1}$,
  $e = \lfloor \frac{n}{k(\sigma-1)} \rfloor$, and $m = n~\text{mod}~k(\sigma-1)$.
Let $c$ be a character that is not equal to $\alpha$.
  For any two distinct occurrences $i_1, i_2 \in \occ_{T'}(c)$ for $c$,
  $|i_1 - i_2| \ge k(\sigma-1) > d$.
  Thus, any character $c \ne \alpha$ is absent from at least one of two adjacent windows
  $T'[i..i+d-1]$ and $T'[i+1..i+d]$ for every $1 \le i \le n-d$.

  Now we consider a window $W = T[p-d..p-1]$ where
  $d + 1 \le p \le n$ and $T[p] = \beta \ne \alpha$.
  Let $\Pi = \{b_1, b_2, \cdots, b_\pi, \alpha\} \subset \Sigma \setminus \{\beta\}$
  be a set of all characters that occur in $W$.
  W.l.o.g., we assume that the current window is $W = \alpha^r b_1\alpha^{k-1}b_2\alpha^{k-1}\cdots b_\pi\alpha^{k-1}$
  and the next window is $W' = W[2..]\beta$
  where $r = d\text{~mod~}k$~(see Fig.~\ref{fig:MAW_total}).
  For any character $b \in \Pi \setminus \{b_1, b_\pi, \alpha\}$,
  $b \alpha^{\ell} \beta$ is in $\MAW(W')\bigtriangleup\MAW(W)$ for every $0 \le \ell \le k-1$.
  If $r > 0$,
  $b_1 \alpha^{\ell} \beta$ is also in $\MAW(W')\bigtriangleup\MAW(W)$ for every $0 \le \ell \le k-1$.
  Otherwise,
  $b_1$ is in $\MAW(W')\bigtriangleup\MAW(W)$ and
  $b_1\alpha^{\ell} b_2$ is in $\MAW(W')\bigtriangleup\MAW(W)$ for every $0 \le \ell \le k-2$
  since $b_1$ is absent from $W'$.
  Also, $\beta$ is in $\MAW(W')\bigtriangleup\MAW(W)$ and
  $b_\pi \alpha^{\ell} \beta$ is in $\MAW(W')\bigtriangleup\MAW(W)$
  for every $0 \le \ell \le k-2$.
  Thus, at least $(\pi-2) k + k + 1 + (k-1) = \pi k$ MAWs are in $\MAW(W')\bigtriangleup\MAW(W)$.
Additionally, the number $\pi$ of distinct characters which occur in $W$
  and are not equal to $\alpha$
  is at least $\lfloor (\sigma-1)/2 \rfloor$, since
  $k\lfloor(\sigma-1)/2\rfloor \le k(\sigma-1)/2 = (k - k/2)(\sigma-1) \le (k - 1)(\sigma-1) \le d$.
  Therefore, $|\MAW(W')\bigtriangleup\MAW(W)| \ge \pi k \ge \lfloor(\sigma-1)/2\rfloor k \in \Omega(\sigma k) = \Omega(d)$.
The number of pairs of two adjacent windows $W$ and $W'$
  where $|\MAW(W')\bigtriangleup\MAW(W)| \in \Omega(d)$
  is $\Theta((n-d)/k)$.
  Therefore, we obtain
  $\mathcal{S}(T', d) \in \Omega(d(n-d)/k) = \Omega(\sigma(n-d)) = \Omega(\sigma n)$
  since $n-d \in \Omega(n)$.
\begin{figure}[ht]
      \centerline{\includegraphics[width=0.8\linewidth]{./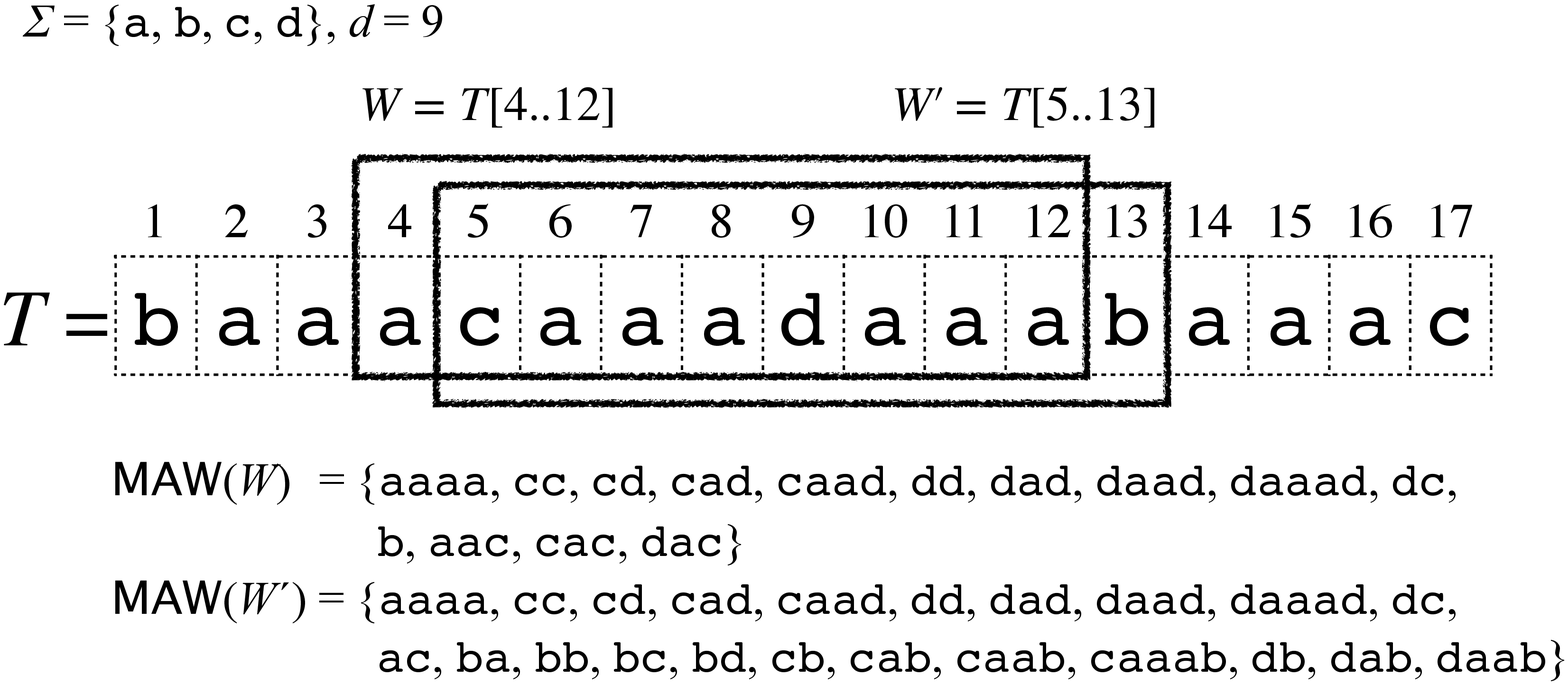}}
      \caption{
        Illustration of examples of MAWs in adjacent two windows.
        In this example, $\sigma = 4, d = 9 $, and $k = 4$.
        The size of the symmetric difference of $\MAW(W)$ and $\MAW(W')$ is
        $|\MAW(W) \bigtriangleup \MAW(W')| =$
        $|\{\mathtt{b},$ $\mathtt{aac},$ $\mathtt{cac},$ $\mathtt{dac},$ $\mathtt{ac},$
        $\mathtt{ba},$ $\mathtt{bb},$ $\mathtt{bc},$ $\mathtt{bd},$
        $\mathtt{cb},$ $\mathtt{cab},$ $\mathtt{caab},$ $\mathtt{caaab},$
        $\mathtt{db},$ $\mathtt{dab},$ $\mathtt{daab}\}| = 16$.
      }\label{fig:MAW_total}
  \end{figure}
\qed
\end{proof}
\begin{proof}[of Lemma~\ref{lem:d_slide_MAW}]
  By Corollary~\ref{col:increase_MAW_at_one_slide},
  it is clear that $\mathcal{S}(T, d) \in O(d(n-d))$.
  Next, we show that there is a string $T'$ of length $n > d$ such that
  $\mathcal{S}(T', d) \in \Omega(d(n-d))$
  for any integer $d$ with $1 \le d \le \sigma-1$.
Let $\Sigma = \{a_1, a_2, \cdots, a_\sigma\}$.
  We consider a string $T' = (a_1 a_2 \cdots a_{d+1})^e a_1 a_2 \cdots a_k$
  where $e = \lfloor n/(d+1) \rfloor$ and $k = n~\text{mod}~(d+1)$.
  For each window $W = T'[i..i+d-1]$ in $T'$,
  $W$ consists of distinct $d$ characters,
  and the character $T'[i+d]$ that is the right neighbor of $W$
  is different from any of characters occur in $W$.
  W.l.o.g., we assume that the current window is
  $W = a_1 a_2 \cdots a_d$
  and the next window is $W' = W[2..]a_{d+1}$.
  Then, $|\MAW(W') \bigtriangleup \MAW(W)| = \
  |\{a_{d+1}\} \cup \{a_{d+1}a_i\mid 2 \le i \le d\}| = d$.
  Therefore, $\mathcal{S}(T', d) = d(n-d)$.
  \qed
\end{proof}
\clearpage
\section{Figures} \label{sec:additional_fig}
In this appendix, we provide some supplemental figures.

\renewcommand{\floatpagefraction}{1.0}

\begin{figure}[pht]
  \centerline{\includegraphics[width=0.6\linewidth]{./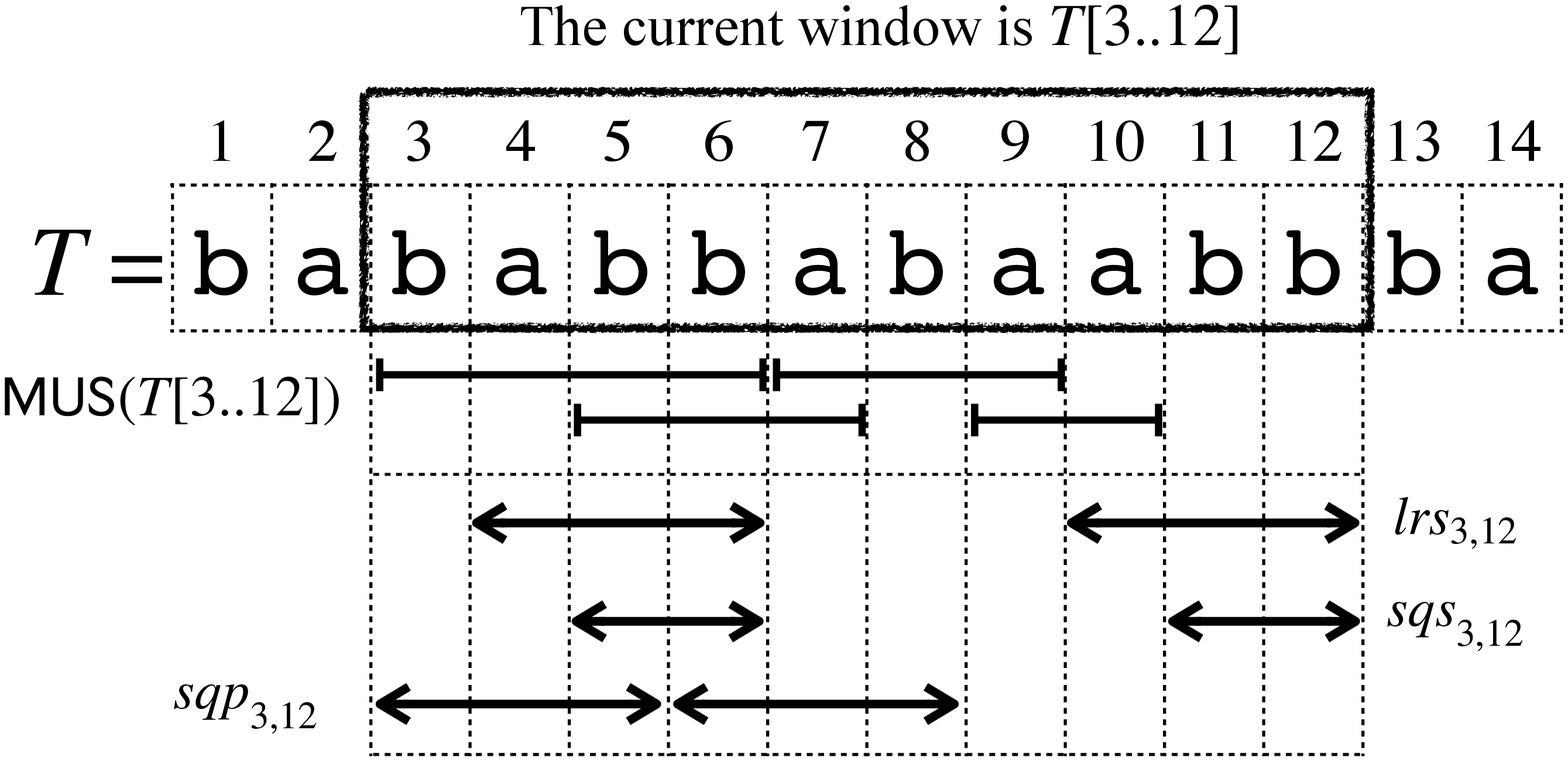}}
  \caption{
    String $T = \mathtt{bababbabaabbba}$ of length 14 and
    its substrings $\LSuf_{3,12}$, $\SSuf_{3,12}$, and $\SPref_{3,12}$
    for the current window $T[3..12]$.
  }\label{fig:LS_SS_SP}
\end{figure}
\begin{figure}[pht]
    \centerline{\includegraphics[width=0.6\linewidth]{./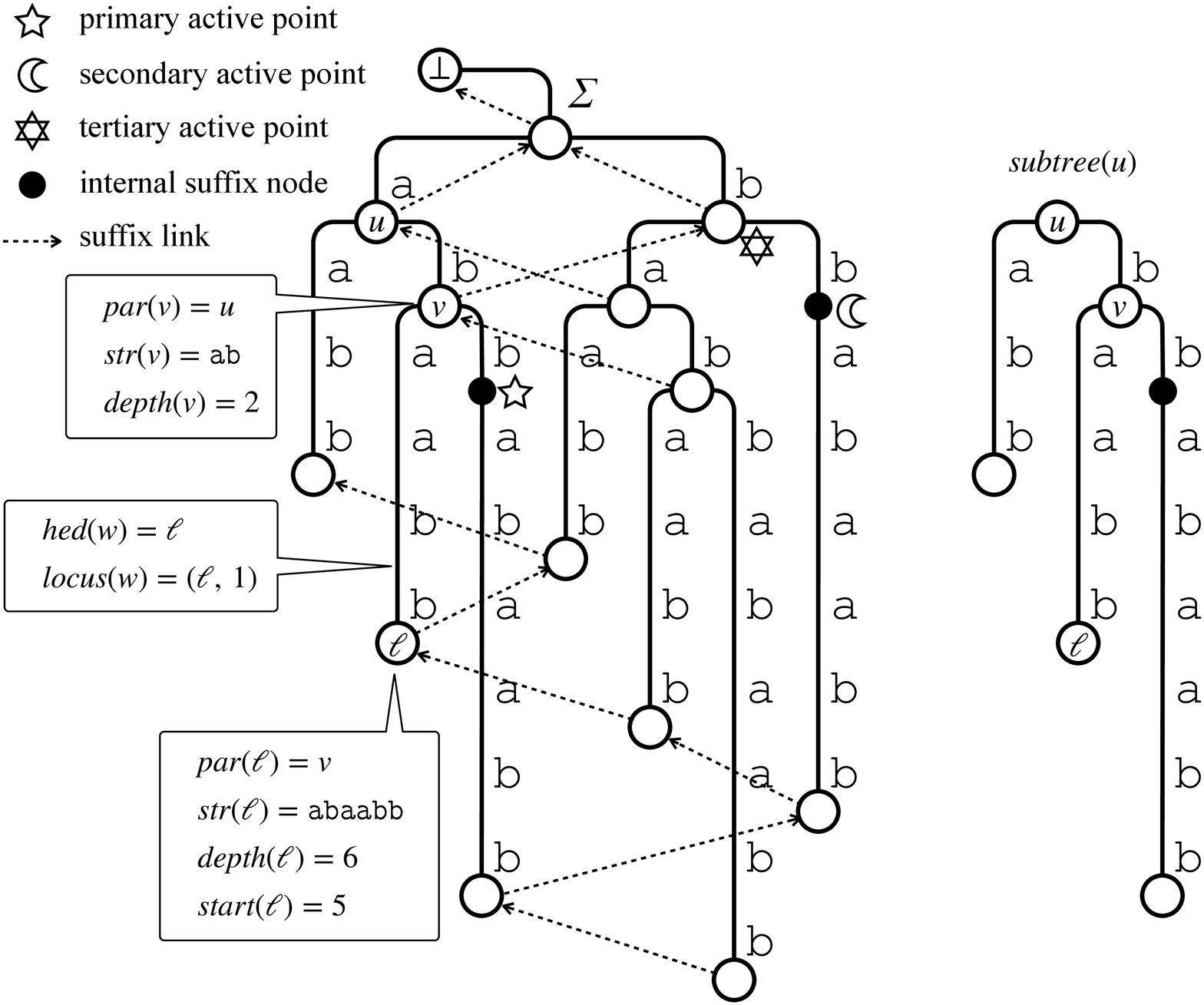}}
    \caption{
      The suffix tree of string $T = \mathtt{babbabaabb}$, where
      the suffix links are depicted by broken arrows,
      the implicit suffix nodes are depicted by black circles,
      as well as the three kinds of active points are marked.
      For example of other notions on the suffix tree,
      substring $w = \mathtt{abaab}$ of $T$ is considered here.
    }\label{fig:STree}
\end{figure}
\begin{figure}[ht]
    \centerline{\includegraphics[width=0.8\linewidth]{./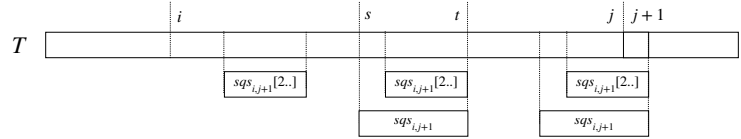}}
    \caption{
      Illustration for the case where
      $\numocc_{T[i..j+1]}(\SSuf_{i,j+1}) = 2$.
      In this case, $T[s..t] = \SSuf_{i, j+1}$ is unique in $T[i..j]$ and
      $T[s+1..t]$ is repeating in $T[i..j]$.
    }\label{fig:deleted_MUS}
\end{figure}
\begin{figure}[ht]
    \centerline{\includegraphics[width=0.8\linewidth]{./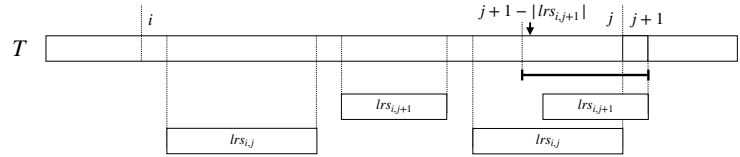}}
    \caption{
      Illustration for the case
      where $|\LSuf_{i, j+1}| \le |\LSuf_{i, j}|$.
      In this case,
      $T[j+1-|\LSuf_{i, j+1}|.. j+1]$ is a MUS of $T[i..j]$.
    }\label{fig:suffix_MUS}
\end{figure}
\begin{figure}[ht]
    \centerline{\includegraphics[width=0.6\linewidth]{./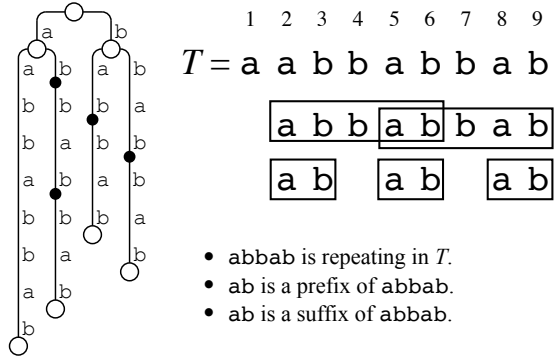}}
    \caption{
      The suffix tree of string $T = \mathtt{aabbabbab}$
      as an example of the Case~\ref{case_three}.1 in Observation~\ref{obs:tp_occ}.
      Black circles represent implicit suffix nodes.
      For two suffixes $s = \mathtt{ab}$ and $s' = \mathtt{abbab}$ of $T$,
      $\hed(s') = \hed(s)$ and $s$ occurs three times in $T$.
    }\label{fig:STree_occ}
\end{figure}
\begin{figure}[ht]
    \centerline{\includegraphics[width=0.8\linewidth]{./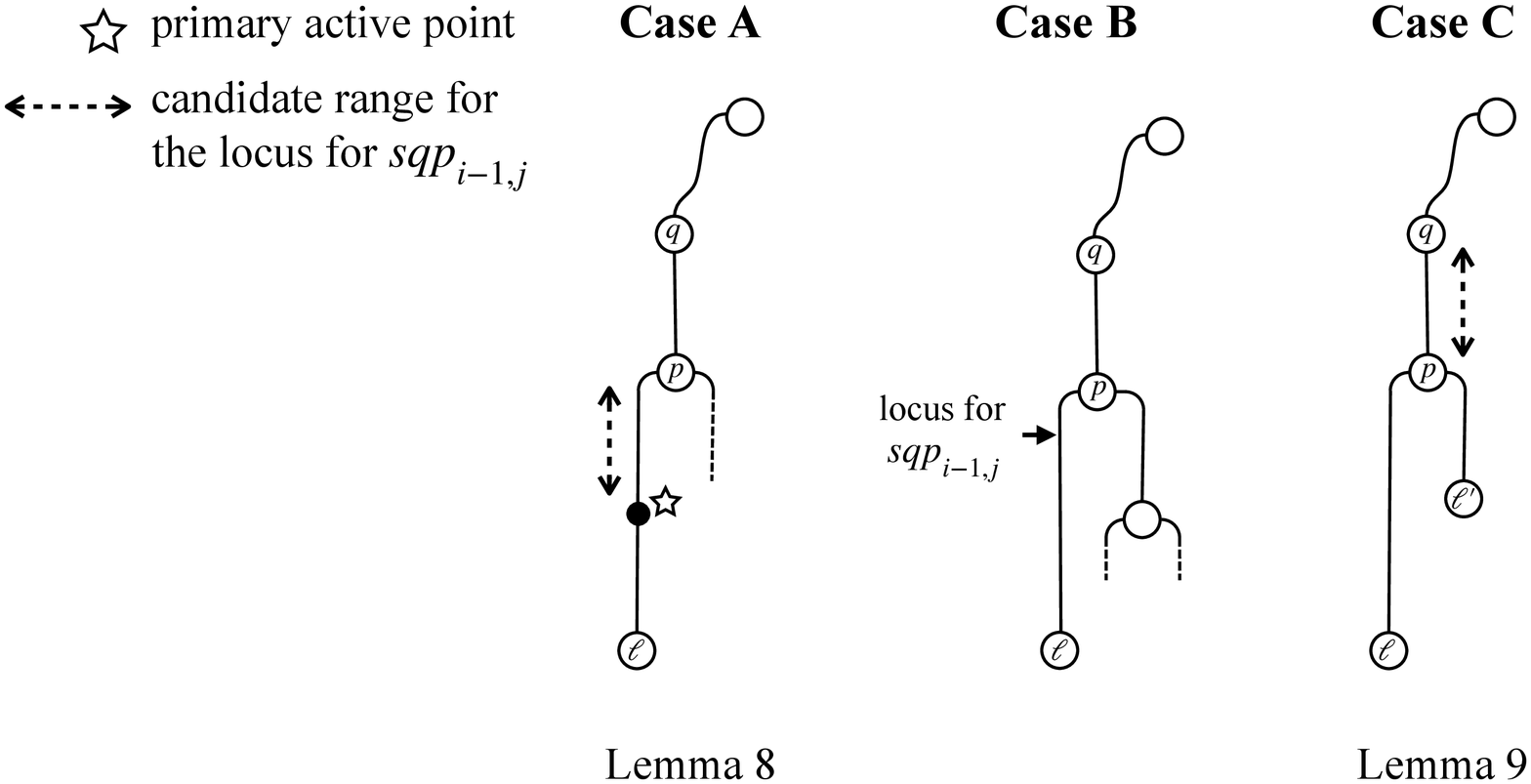}}
    \caption{
      Illustration for the three cases
      that are described in Section~\ref{subsec:sqs}.
    }\label{fig:SP_cases}
\end{figure}
\begin{figure}[ht]
    \centerline{\includegraphics[width=0.8\linewidth]{./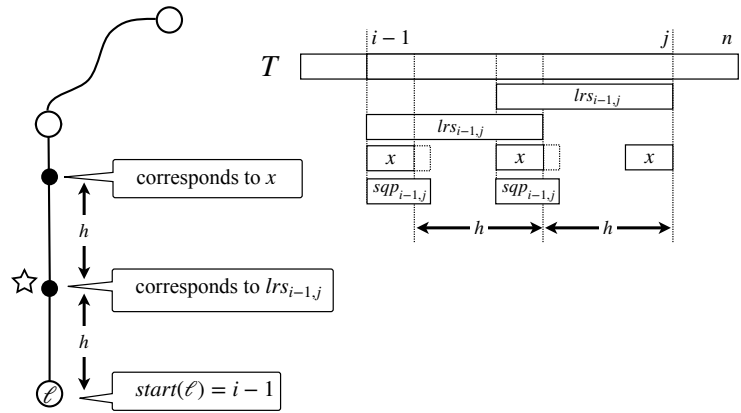}}
    \caption{
      Illustration for the proposition (c) in Lemma~\ref{lem:suffix_chain_come_back}.
    }\label{fig:SP}
\end{figure}
\begin{figure}[ht]
    \centerline{\includegraphics[width=0.8\linewidth]{./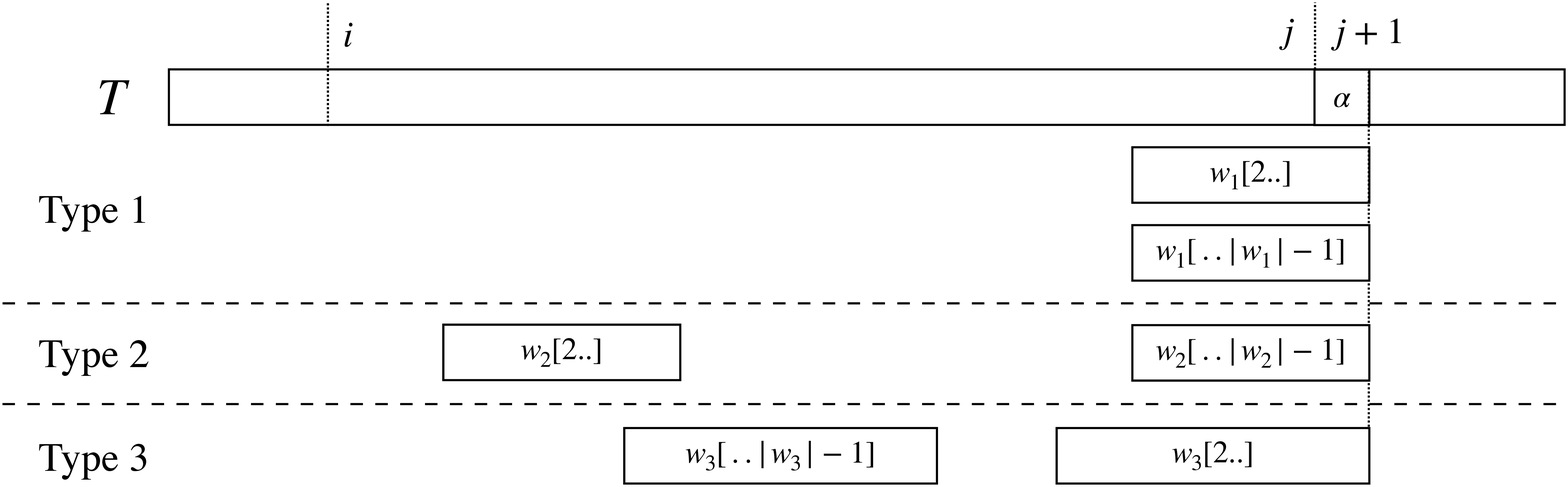}}
    \caption{
      Illustration for the three types of MAWs, where
      $w_1 \in \typeone$, $w_2 \in \typetwo$, and $w_3 \in \typethree$.
    }\label{fig:MAW_3types}
\end{figure}
\begin{figure}[ht]
    \centerline{\includegraphics[width=0.8\linewidth]{./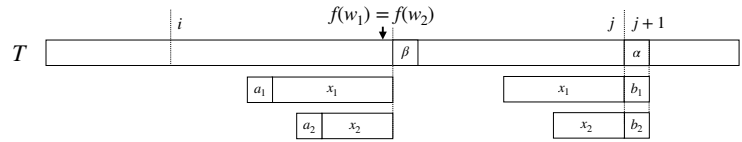}}
    \caption{
      Illustration for the contradiction in the proof of Lemma~\ref{lem:extention_inc}.
      Consider two strings $w_1 = a_1x_1b_1$ and $w_2 = a_2x_2b_2$ that are MAWs of $T$ of Type 3
      where $a_1, a_2, b_1, b_2 \in \Sigma$ and $x_1, x_2 \in \Sigma^\ast$.
      If $|w_1| > |w_2|$ and $f(w_1) = f(w_2)$, then $x_2$ is a proper suffix of $x_1$,
      and it contradicts that $a_2x_2b_2$ is absent from $T$.
    }\label{fig:MAW_Type3}
\end{figure}
 \end{document}